\newcommand{\poly}{{\rm poly}}
  \newtheorem{thm}{Theorem}[section]
 \newtheorem{lemma}[thm]{Lemma}
 \newtheorem{observation}[thm]{Observation}
\newtheorem{definition}{Definition} 
\newtheorem{remark}{Remark}
\newtheorem{fact}{Fact}
\newtheorem{claim}{Claim}
\newcommand\myeq{\mathrel{\overset{\makebox[0pt]{\mbox{\normalfont\tiny\sffamily def}}}{=}}}
\begin{document}

\title{Monotone probability distributions over the Boolean cube can be learned with sublinear samples} %TODO Please add
\begin{titlepage}

\author{
	Ronitt Rubinfeld \thanks{
		CSAIL at MIT, and the
		Blavatnik School of Computer Science at Tel Aviv University,
		{\tt  ronitt@csail.mit.edu}.
		Ronitt Rubinfeld's research was supported by FinTech@CSAIL, MIT-IBM Watson AI Lab and Research Collaboration Agreement No. W1771646, and NSF grants: CCF-1650733, CCF-1740751, CCF-1733808, and IIS-1741137.}
	% Grant acks}
	\and
	Arsen Vasilyan\thanks{	
		CSAIL at MIT,
		{\tt  vasilyan@mit.edu}.
		Arsen Vasilyan's research was supported by the NSF grant IIS-1741137 and EECS SuperUROP program.}
	% Grant acks}
}

\maketitle

%TODO mandatory: add short abstract of the document
\begin{abstract}
	A probability distribution over the Boolean cube is \textbf{monotone} if flipping the value of a coordinate from zero to one can only increase the probability of an element.
	Given samples of an unknown monotone distribution over the Boolean cube,
	we give (to our knowledge) the first algorithm that learns an
	approximation of the distribution in statistical distance
	using a number of samples that is sublinear in the domain.

	To do this, we develop a structural lemma describing monotone
	probability distributions.  The structural lemma has further implications
	to the sample complexity
	of basic testing tasks for analyzing
	monotone probability distributions over the Boolean cube:
	We use it to give nontrivial upper bounds on the tasks of
	estimating the distance of a monotone distribution to uniform and 
	of estimating the support size of a monotone distribution.
	In the setting of monotone probability distributions over the Boolean cube, our algorithms are the first to have sample complexity lower
	than known lower bounds for the same testing tasks on arbitrary (not necessarily monotone) probability distributions. 
	
	One further consequence of our learning algorithm is an improved sample complexity for the task of testing whether a distribution on the Boolean cube is monotone.

	%Finally, we give a lower bound for estimating the distance
	%of a  monotone distribution.
	%do we want to give the actuall bounds?  do we want to compare
	%to results for non-monontone distributions??	
\end{abstract}

\end{titlepage}

\section{Introduction}

\subsection{Learning Monotone Distributions}

%In the recent years, data sets became so large that, from the point of view of computational resource consumption, it is often prohibitive to even read the whole data set. The field of sublinear algorithms has been developed to tackle this challenge by querying only a fraction of the dataset and making inferences from the result of these queries. We focus on a particular type of data: 

Data generated from probability distributions is ubiquitous, and 
algorithms for understanding such data are of fundamental
importance.   In particular, a fundamental task is to {\em learn}
an approximation to the probability distribution underlying the
data.
For probability distributions over huge discrete domains,
the sample complexity and run-time bounds for the learning
task can be prohibitive. 
In particular, learning an arbitrary probability distribution on 
a universe of $N_{\text{universe}}$ elements up to sufficiently small constant 
total variation distance requires $\Omega(N_{\text{universe}})$ samples. 
However, when the probability distribution is
known to belong to a more structured class of distributions,
much better results are possible (cf. \cite{DaskalakisKT15, diakonikolas2016efficient, diakonikolas2018fast, diakonikolas2016properly, diakonikolas2016optimal, indyk2012approximating, DaskalakisDS15, DaskalakisDS14, birge1987estimating, chan2013learning, KalaiMV10, GeHK15, DaskalakisK14, 0001V17}) -- for example,
learning an unknown Poisson binomial distribution up to variation distance $\epsilon$ can be achieved with only $\tilde{O}(1/\epsilon^3)$ samples and $\tilde{O}(\log (N_{\text{universe}}) /\epsilon^3)$ run-time \cite{DaskalakisDS15}.

%$k$-piecewise constant (aka $k$-histogram) distributions 
%over $[n]$ can be learned in $O(k/\epsilon)$ samples
%\cite{?}. (is there a log factor?)  (add more citations from
%list below and perhaps change the word example to poisson or something else).

A fundamental class of probability distributions is the class of
multidimensional monotone probability distributions, which broadly satisfy 
the following properties:
\begin{itemize}
	\item The elements of the probability distribution have $n$ different features.
	\item For every element, an increase in the value of one of the features can only increase its probability.
\end{itemize}
This basic class of distributions is of great interest
because  many commonly studied distributions 
are either monotone or can be approximated by a combination of monotone
distributions. 
%and on the other hand, many statistical
%and computational tasks on monotone distributions
%can be done with more efficient time
%and sample complexity. 
Furthermore, often the tools developed
for monotone distributions
are useful for other classes of distributions:
for example,
in the one dimensional setting,
\cite{DaskalakisDS14} use tools developed for testing monotone distributions in
order to learn $k$-modal distributions.
In
\cite{canonne2018testing}, tools developed for
testing properties of monotone distributions by \cite{batu2004sublinear}
are used
to develop testers for many other classes of distributions.

For the case of only one feature, or equivalently for monotone probability distributions over the totally ordered set $[k]$, a sample-efficient algorithm is known for learning the unknown distribution up total variation distance $\epsilon$ with $O(\log(k)/\epsilon^3)$ samples \cite{birge1987estimating, DaskalakisDS14}.  
In \cite{AcharyaDK15} it was also shown that an unknown probability distribution over $[k]^n$ can be learned up to $\chi^2$ distance $\epsilon^2$ with $O((n \log k /\epsilon^2)^n/\epsilon^2)$ samples (note that for constant $\epsilon$, this sample complexity is non-trivial only when $k$ is sufficiently large). Overall, the cases considered in the literature specialize on the regime when all the dimensions have a wide range that grows with $n$.
Here we focus on a contrasting case, where each feature has 
only two possible values, $0$ and $1$, thus specializing 
on the Boolean cube:
\begin{definition}
	A probability distribution $\rho$ over $\{0,1\}^n$ is \textbf{monotone} if whenever for $x,y \in \{0,1\}^n$ we have that $x \preceq y$ (which means that for all $i$: $x_i \leq y_i$), then we have that $\rho(x) \leq \rho(y)$.
\end{definition}
When studying multi-dimensional objects, focusing on the specific case of the Boolean cube is a common research theme, because the ideas and techniques developed for the Boolean cube are often applicable in the general case. 
A lower bound of $\Omega(2^{0.15 n})$ for learning monotone probability distributions
over the Boolean cube (up to sufficiently small constant variation distance) can be inferred from 
an entropy testing lower bound in \cite[page 39]{rubinfeldservedio2009testing} and an argument in \cite{valiant2011testing} (see Claim \ref{claim: learning lower bound} in Preliminaries). Though the dramatic exponential improvement as in \cite{birge1987estimating, DaskalakisDS14} for the totally ordered set is thereby impossible, this still leaves open the possibility of a sublinear sample algorithm for the Boolean cube.

%: From the argument in \cite[pages 1937-1938]{valiant2011testing} it follows that if two probability distributions are $\epsilon$-close in total variation distance, then their entropy values are within $2 \log(N_{\text{universe}}) \epsilon=2\epsilon n$. Therefore, the task of estimating the entropy of an unknown monotone probability up to an additive error $2 \epsilon n$ is not harder than learning it to withing total variation distance $\epsilon$. But in \cite[page 39]{rubinfeldservedio2009testing}
%it is shown that at least $\sqrt{T}/10$ samples are required for the task of distinguishing whether the unknown monotone probability distribution has entropy at least $0.81n$ or at most $n/2+\log T$. Picking $T=2^{0.3n}$ gives us the desired learning lower bound. 

We give,
to the best of our knowledge, 
the first sublinear sample algorithm for learning a monotone probability 
distribution over the Boolean cube: 

\begin{restatable}{theorem}{theoremLearningUpperBound}
	\label{theorem: learning upper bound}
	For every positive $\epsilon$, such that $0<\epsilon \leq 1$ and for all sufficiently large $n$, 
	there exists an algorithm, which given $\frac{2^{n}}{2^{\Theta_\epsilon(n^{1/5})}}$ samples from an unknown monotone probability distribution $\rho$ over $\{0,1\}^n$, can reliably return a description of an estimate probability distribution $\hat{\rho}$, such that	$
	d_{\text{TV}}(\rho, \hat{\rho}) \leq \epsilon
	$.
	The algorithm runs in time $O \left(2^{n+O_{\epsilon}(n^{1/5} \log n)} \right)$.
\end{restatable}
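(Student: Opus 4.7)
The plan is to reduce the learning problem to mass-estimation on a carefully chosen partition of $\{0,1\}^n$. Concretely, I would aim to prove a structural lemma of the form: there is a partition $\mathcal{P}$ of $\{0,1\}^n$ into $M = 2^n/2^{\Theta_\epsilon(n^{1/5})}$ parts such that every monotone distribution $\rho$ is within total variation distance $\epsilon/2$ of some distribution that is uniform on each $P \in \mathcal{P}$. Given such a partition, the algorithm would draw $m = \tilde{O}_\epsilon(M) = 2^n/2^{\Theta_\epsilon(n^{1/5})}$ samples, compute the empirical mass $\widehat{m}(P)$ for each $P \in \mathcal{P}$, and output the hypothesis $\widehat{\rho}$ that spreads $\widehat{m}(P)$ uniformly over $P$. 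A standard histogram-learning analysis (Chernoff on each part with enough mass, a union bound over $\mathcal{P}$, and the observation that parts with mass $\ll \epsilon/M$ contribute at most $\epsilon$ in total) then gives $d_{\mathrm{TV}}(\rho,\widehat{\rho}) \leq \epsilon$. The running time is dominated by enumerating the $2^n$ elements of the cube to write down $\widehat{\rho}$, which matches the claimed $O(2^{n+O_\epsilon(n^{1/5}\log n)})$.

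The main work is in constructing $\mathcal{P}$ and proving the approximation guarantee. A natural candidate is to partition $\{0,1\}^n$ into combinatorial subcubes obtained by fixing $n - k$ of the coordinates, where $k = \Theta_\epsilon(n^{1/5})$; this yields $2^{n-k}$ parts, each a copy of $\{0,1\}^k$ on which $\rho$ is still monotone. The intuition I would push is that the boundary Hamming layers of $\{0,1\}^n$ carry negligible volume, and on the middle layers the total-mass constraint $\sum_x \rho(x) = 1$ combined with chain-monotonicity forces most of these subcubes to have small intra-cube probability spread. The parameter $k$ trades off the partition size $2^{n-k}$ against the expected intra-subcube variation; the balance point, controlled by (what I expect to be) an LYM/antichain counting step, is where the $n^{1/5}$ arises.

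The hard part will be the structural lemma itself: showing that monotonicity alone — without smoothness or product structure — suffices to force approximate within-part uniformity for a partition of subexponential size. A priori a monotone distribution can concentrate on a thin antichain and vary exponentially between comparable elements, and one must rule this out quantitatively. I expect the proof to proceed by bucketing probabilities into $O_\epsilon(n)$ geometric scales, observing that each bucket is sandwiched between two nested up-sets of the cube, and then approximating these up-sets by ones whose indicator depends essentially on a small number of coordinates. Once this coordinate-approximation of level sets is controlled, the partition $\mathcal{P}$ emerges as the common refinement of these approximations, and the rest of the argument — propagating the approximation error into total variation distance and plugging into the histogram-learning framework — should be routine.
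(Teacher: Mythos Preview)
Your proposal has a genuine gap at the structural step. The concrete partition you name --- fix a specific set of $n-k$ coordinates and let the remaining $k=\Theta_\epsilon(n^{1/5})$ vary --- does not work: if $\rho$ is the monotone distribution that is uniform on the subcube where exactly those $k$ free coordinates are all $1$, then each part $P$ of your partition carries its entire mass on a single point, and the distance from $\rho$ to any $\mathcal P$-piecewise-uniform distribution is $1-2^{-k}\approx 1$. Your fallback idea (bucket $\rho$ into geometric probability scales and approximate each up-set by one depending on few coordinates) is essentially asking that every monotone Boolean function be close to a $k$-junta; this is false --- Majority is $\Omega(1)$-far from every $o(n)$-junta --- so the ``coordinate-approximation of level sets'' step cannot go through as described. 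More broadly, a \emph{fixed} partition seems hopeless here: adversarial monotone $\rho$ can always concentrate on the maxima of the parts.

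The paper's proof is quite different in shape and does not reduce to a histogram on any partition. For each $x$ it estimates $\rho(x)$ by
\[
\hat\phi(x)=\frac{1}{2^{L}}\cdot\max_{y\preceq x,\ \|x\|-\|y\|=L}\ \frac{|\{z\in S:\ y\preceq z\preceq x\}|}{N},
\]
with a level-dependent window $L=L_{\|x\|}$; the crucial \textbf{max} over all sub-intervals $[y,x]$ is exactly what a partition-based scheme cannot do. The structural core is not about coordinates at all but about Hamming \emph{levels}: the paper introduces the notion of \emph{slack}, $\text{slack}_f(x)=f(x)-\max_{y\prec x}f(y)$, and proves (Lemma~\ref{lemma: slack regret}, generalizing Blais et al.'s DNF lemma) that $\rho$ is $L_1$-close to a monotone $f$ whose slack vanishes outside a set of levels controlled by a weighted budget $\sum_h R_h\cdot\mathbf 1[\text{level }h\text{ is slacky}]\le 1$. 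On ``good'' $x$ (far from every slacky level) the max-estimator recovers $f(x)$ exactly; the mass on ``bad'' levels is bounded via the lemma. The exponent $n^{1/5}$ arises from balancing the Chernoff error (which improves as $L_h$ grows) against the bad-level mass (which worsens as $L_h$ grows), with the weights $R_h$ chosen to interface with the slack lemma --- not from any antichain/LYM counting.
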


Our algorithm relies on a new structural lemma describing monotone
probability distributions on the Boolean cube, as described in Section
\ref{subsection: technical overview}. These structural insights also
allow us to get improved sample complexity for certain testing
tasks on monotone distributions -- namely, 
estimating the closeness of a distribution to uniformity and 
the support size of the distribution, as presented in Section \ref{section: testingpropertiesresults}.

Theorem \ref{theorem: learning upper bound}, together with the $L_1$ distance tester in \cite{valiant2011power}, can be applied to give the best known sample complexity for testing whether a distribution is monotone. Specifically,
one can test whether an unknown distribution $\rho$
over the Boolean cube is monotone or $\epsilon$-far from monotone with $O(\frac{2^n}{ n\epsilon^2})$ samples as shown in Claim \ref{claim: monotonicity testing} in Preliminaries. Note that this does not follow from \cite{valiant2011testing} directly, because monotonicity is not a symmetric property.
The best previously known algorithm for testing monotonicity over the Boolean cube was presented in \cite{bhattacharyya2011testing}, requiring $\tilde{O}\left(\frac{2^n}{(n/\log n)^{1/4}} \poly(1/\epsilon)\right)$ samples. 
The best sample complexity lower bound for testing monotonicity over $\{0,1\}^n$ is $\Omega(2^{(1-\Theta(\sqrt{\epsilon})+o(1)) \cdot n})$, as presented in \cite{AliakbarpourGPR19}. For the domain $[k]^n$, a monotonicity testing algorithm that requires $O\left(k^{n/2}/\epsilon^2+\left(\frac{n \log k}{\epsilon^2} \right)^n \cdot \frac{1}{\epsilon^2}\right)$ samples is given and shown to be optimal in  \cite{AcharyaDK15} (note that this is inapplicable to the Boolean setting, because this sample bound is non-trivial only for sufficiently large $k$). 

%For domains in which $N$ is a small constant, such as $2$,
%it was not previously known whether it is possible  
%to learn monotone distributions
%with sublinear sample complexity.
%As far as we know, there is no prior nontrivial learning algorithm
%for monotone distributions over the Boolean cube.

\subsection{Testing properties of monotone distributions}
\label{section: testingpropertiesresults}

In addition to learning a distribution, several other 
basic tasks aimed at understanding distributions have
received attention.   These include
estimating the entropy of a distribution, the
size of the support and whether the distribution has
certain ``shape'' properties (monotonicity, convexity,
monotone hazard rate, etc.).
% whether the distribution
%is independent, and whether the distribution
%is equal to a certain fixed distribution or is a member
%of a known class of distributions..
For arbitrary probability distributions over huge domains,
the sample complexity and run-time bounds for the above
tasks can be prohibitive, 
provably requiring $\Omega \left(\frac{N_{\text{universe}}}{\log(N_{\text{universe}})} \right)$ samples. This is true in particular for the properties of support size, entropy and the distance to the uniform distribution  \cite{raskhodnikova2009strong,valiant2011testing, valiant2011estimating, wu2016minimax, wu2019chebyshev}.

This state of affairs motivates going beyond 
worst-case analysis and considering common 
classes of structured probability distributions,  a direction that has
been considered by many and with a large variety of results (cf.
\cite{batu2004sublinear, CanonneDKS17, indyk2012approximating, rubinfeldservedio2009testing, DaskalakisDSVV13, diakonikolas2015testing}).
Some specific examples include:
In \cite{batu2004sublinear} it is shown that testing whether a monotone
distribution is uniform requires only $\Theta(\log^3(N_{\text{universe}})/\epsilon^{3})$ samples,
in contrast to the $\Theta(\sqrt{N_{\text{universe}}}/\epsilon^2)$ samples required for
testing arbitrary distributions for uniformity \cite{paninski2008coincidence,
	ChanDVV14, DiakonikolasGPP19}.
The situation is analogous for the tasks of
testing whether two
distributions given by samples are either the same or far,
and testing whether a constant dimensional distribution is independent, which require only polylogarithmic samples if the unknown distributions are promised to be monotone on a total order \cite{batu2004sublinear}.

Algorithms for testing properties of
monotone probability distributions over the Boolean cube were studied in 
\cite{rubinfeldservedio2009testing, AdamaszekCS10}. 
It was shown that, given samples from a probability distribution over $\{0,1\}^n$ that is promised to be monotone, distinguishing the 
uniform distribution over $\{0,1\}^n$ from one that is $\epsilon$-far from uniform can be done using only $O\left(\frac{n}{\epsilon^2} \right)$ samples, which is nearly optimal.   
In contrast,
a number of other testing problems  cannot have such
dramatic improvements when the distribution is known to be monotone:
for example in \cite{rubinfeldservedio2009testing} it was shown that
for sufficiently small constant $\epsilon$ the estimation of entropy up to an additive error of $\epsilon n$ requires $2^{\Omega(n)}$ samples. However,
no
nontrivial\footnote{i.e. using monotonicity in an essential way and going beyond the bounds known for arbitrary probability distributions.} upper bounds on the sample 
complexity of any other computational tasks for monotone probability distributions over the Boolean cube are known.
%In this work, we continue exploring this research direction. In particular, we focus on three questions that were not considered in \cite{rubinfeldservedio2009testing}:

\subsubsection{Estimating support size}
We consider the task of additively estimating the support size of an unknown monotone probability distribution over the Boolean cube. 
The following assumption is standard in support size estimation:
\begin{definition}
	A probability distribution over a universe of size $N_{\text{universe}}$ is called \textbf{well-behaved} (in context of support size estimation) if for every $x$ in the set, the probability of $x$ is either zero or at least $1/N_\text{universe}$.
\end{definition}
The purpose of this definition is to rule out pathological cases in which there are items that are in the support, yet have probability very close to zero.
We henceforth adapt this definition to probability distributions over $\{0,1\}^n$, where we have $N_\text{universe}=2^n$. We prove the following theorem:
\begin{restatable}{theorem}{theoremSupportUpperBound}
	\label{theorem: support size upper bound}
	For every positive $\epsilon$, the following is true: for all sufficiently large $n$, 
	there exists an algorithm, which given $\frac{2^{n}}{2^{\Theta_\epsilon(\sqrt{n})}}$ samples from an unknown well-behaved monotone probability distribution $\rho$ over $\{0,1\}^n$, can reliably\footnote{By \textbf{reliably} we henceforth mean that the probability of success is at least $2/3$.} approximate the support size of $\rho$ with an additive error of up to $\epsilon$. The algorithm runs in time $O_{\epsilon}\left(\frac{2^{n}}{2^{\Theta_\epsilon(\sqrt{n})}}\right)$
\end{restatable}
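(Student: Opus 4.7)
The plan is to use the structural lemma developed earlier in the paper (the one that underpins Theorem~\ref{theorem: learning upper bound}) at a coarser granularity tailored to support size estimation. Specifically, I would partition $\{0,1\}^n$ into $M = 2^{n - O_\epsilon(\sqrt{n})}$ blocks $B_1, \ldots, B_M$, each of size $2^{O_\epsilon(\sqrt{n})}$, with the property that $\rho$ is, up to a small total variation error, approximately uniform on its support within each block. This would imply that for each block $B_i$, either essentially all of $B_i$ lies in $\mathrm{supp}(\rho)$ or the block carries negligible mass, so estimating $|\mathrm{supp}(\rho)|$ reduces to identifying which blocks carry nontrivial mass.

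The algorithm then proceeds as follows. Draw $N = 2^n / 2^{\Theta_\epsilon(\sqrt{n})}$ samples and compute the empirical mass $\hat{p}_i$ of each block. Classify $B_i$ as \emph{heavy} if $\hat{p}_i \geq \tau$ for the threshold $\tau = \Theta(\epsilon / M)$, and \emph{light} otherwise. Output $\sum_{\text{heavy } i} |B_i|$ as the estimate of $|\mathrm{supp}(\rho)|$. Three error sources need to be controlled. First, by a Chernoff bound plus a union bound over the $M$ blocks, $N \geq \Omega(\log(M) / \tau) = \tilde{\Omega}(M / \epsilon)$ samples suffice to correctly classify every block whose true mass is bounded away from $\tau$, and our value of $N$ satisfies this. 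Second, the light blocks contribute at most $M \tau \cdot 2^n = O(\epsilon \cdot 2^n)$ elements to the support, because well-behavedness gives $|B_i \cap \mathrm{supp}(\rho)| \leq \rho(B_i) \cdot 2^n \leq \tau \cdot 2^n$ for each light block. Third, the total variation error from the structural lemma also converts (via well-behavedness, since every support element has mass at least $1/2^n$) into an additive $O(\epsilon \cdot 2^n)$ error in support size.

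The main obstacle is in the structural lemma itself: it must guarantee that whenever $\rho$ assigns significant mass to a block, essentially all of that block lies in the support, and conversely. Monotonicity is essential here because, without it, a distribution could concentrate its block-mass on a single element and break the uniformity-on-support approximation. The lemma's construction must leverage monotonicity to show that, restricted to a small block of the cube, a monotone distribution is forced to spread out roughly uniformly over the block's upper portion. Finally, tuning the sub-constants hidden in $\Theta_\epsilon(\sqrt{n})$ to balance the three error sources (sampling, structural approximation, and light-block contribution) yields the target additive error of $\epsilon$, interpreting the support size as a fraction of $2^n$ in the standard way.
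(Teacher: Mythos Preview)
Your proposal has a genuine gap at its structural core. The claim you need---that $\{0,1\}^n$ can be partitioned into $2^{n-O_\epsilon(\sqrt{n})}$ blocks on which $\rho$ is ``approximately uniform on its support''---is not what any lemma in the paper provides, and you yourself flag it as ``the main obstacle'' without resolving it. Lemma~\ref{lemma: slack regret} (the one underpinning Theorem~\ref{theorem: learning upper bound}) controls the \emph{slack levels} of an $L_1$ approximation to $\rho$; it says nothing about uniformity of $\rho$ on its support within any block, and it is not clear how to extract such a statement from it. Without this, your algorithm fails: a heavy block could concentrate its mass on a small fraction of its points (monotonicity alone does not forbid this within an arbitrary block), so counting $|B_i|$ for heavy blocks can wildly overestimate the support. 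Your second and third error-source analyses both presuppose the unproved structural claim.

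The paper's argument is quite different and avoids block partitions entirely. It exploits the partial order directly: if a sample $z$ is drawn, then every $y \succeq z$ is certified to lie in $\mathrm{supp}(\rho)$ by monotonicity, so the algorithm simply estimates the fraction of $\{0,1\}^n$ that is \emph{covered} (i.e., lies above some sample). The structural input is not Lemma~\ref{lemma: slack regret} but the BHST DNF lemma (Lemma~\ref{lemma BHST}) applied to the Boolean indicator function of the support. This yields an approximating DNF with at most $O(1/\epsilon)$ distinct clause widths $k_i$, from which one shows that all but an $\epsilon/2$ fraction of support elements $x$ are ``good'': they sit at least $\Theta_\epsilon(\sqrt{n})$ levels above some clause width $k_i$, hence have at least $2^{\Theta_\epsilon(\sqrt{n})}$ support elements below them. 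Well-behavedness then guarantees each good $x$ is covered with high probability after $2^n/2^{\Theta_\epsilon(\sqrt{n})}$ samples. The key idea you are missing is this covering mechanism via the partial order; a flat block partition does not capture it.
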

We contrast this result to the results of \cite{raskhodnikova2009strong, valiant2011estimating,valiant2011power,valiant2011testing,wu2019chebyshev} that show that one needs $\Omega(N_{\text{universe}}/\log (N_{\text{universe}}))$ samples to estimate the support size of an arbitrary distribution up to a sufficiently small constant, which equals to $\Omega(2^n/n)$ for a universe of size $2^n$, such as the Boolean cube. 
\subsubsection{Estimating distance to uniformity}
We now consider the task of additively estimating the distance from an unknown monotone probability distribution over the Boolean cube to the uniform distribution. 
We prove the following theorem:
\begin{restatable}{theorem}{theoremDistanceToUniformUpperBound}
	\label{theorem: distance to uniform upper bound}
	For every positive $\epsilon$, the following is true: for all sufficiently large $n$, 
	there exists an algorithm, which given $\frac{2^{n}}{2^{\Theta_\epsilon(\sqrt{n})}}$ samples from an unknown monotone probability distribution $\rho$ over $\{0,1\}^n$, can reliably approximate the distance between $\rho$ and the uniform distribution over $\{0,1\}^n$ with an additive error of up to $\epsilon$. The algorithm runs in time $O \left(2^{n+O_{\epsilon}(\sqrt{n}\log n)} \right)$.
\end{restatable}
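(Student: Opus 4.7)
The plan is to leverage the paper's structural lemma to reduce the task to estimating a statistical distance on a much smaller induced domain, and then apply the $\ell_1$ distance estimator of \cite{valiant2011power}.

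First, I would invoke the structural lemma to obtain a partition $\mathcal{B} = \{B_1, \ldots, B_M\}$ of the Boolean cube into $M = 2^n / 2^{\Theta_\epsilon(\sqrt{n})}$ cells such that, for any monotone $\rho$, the cell-averaged distribution $\bar\rho$ defined by
\begin{equation*}
\bar\rho(x) \;=\; \frac{\rho(B_{c(x)})}{|B_{c(x)}|},
\end{equation*}
where $c(x)$ denotes the index of the cell containing $x$, satisfies $d_{\text{TV}}(\rho, \bar\rho) \leq \epsilon/3$. This is the key quantitative step and the one that genuinely uses monotonicity; the coarser approximation regime ($\sqrt{n}$ in the exponent rather than $n^{1/5}$ as in the learning theorem) is what makes this more efficient than full learning, and the trade-off is that $\bar\rho$ only has to preserve the scalar quantity $d_{\text{TV}}(\cdot, U)$, not the distribution itself.

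Second, I would observe that because both $\bar\rho$ and the uniform distribution $U$ are constant on each cell, their total variation distance collapses to a quantity on the index set $[M]$:
\begin{equation*}
d_{\text{TV}}(\bar\rho, U) \;=\; \frac{1}{2}\sum_{i=1}^{M}\left|\rho(B_i) - \frac{|B_i|}{2^n}\right| \;=\; d_{\text{TV}}(\pi, \sigma),
\end{equation*}
where $\pi(i) := \rho(B_i)$ is the induced distribution of $\rho$ on the cell indices (accessible via samples, by mapping each sample to $c(\cdot)$) and $\sigma(i) := |B_i|/2^n$ is the induced distribution of $U$ on the cell indices, which is fully known once the partition is fixed. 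I would then apply the $\ell_1$ distance estimator of \cite{valiant2011power}, which estimates $d_{\text{TV}}(\pi, \sigma)$ to additive error $\epsilon/3$ using $O(M/(\epsilon^2 \log M))$ samples from $\pi$; substituting $M = 2^n/2^{\Theta_\epsilon(\sqrt{n})}$ and $\log M = \Theta(n)$ yields a sample complexity of $2^n/2^{\Theta_\epsilon(\sqrt{n})}$, matching the claimed bound.

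Third, the triangle inequality gives $|d_{\text{TV}}(\bar\rho, U) - d_{\text{TV}}(\rho, U)| \leq d_{\text{TV}}(\rho, \bar\rho) \leq \epsilon/3$, so the final estimate is within $\epsilon$ of $d_{\text{TV}}(\rho, U)$ with probability at least $2/3$. The stated running time $2^{n + O_\epsilon(\sqrt{n}\log n)}$ comes from enumerating the $2^n$ points of the cube once to compute $\sigma$, spending $2^{O_\epsilon(\sqrt{n}\log n)}$ work per point to determine cell membership under the partition description afforded by the structural lemma. The main obstacle will be obtaining the structural lemma in the precise form needed here: every monotone $\rho$ must be $(\epsilon/3)$-close in total variation to some distribution constant on a fixed partition of only $2^{n - \Theta_\epsilon(\sqrt{n})}$ cells, with the partition determined by $n$ and $\epsilon$ alone. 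Once this is in hand, the collapse to a cell-level distance problem, the invocation of \cite{valiant2011power}, and the triangle inequality are essentially routine bookkeeping.
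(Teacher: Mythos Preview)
Your proposal rests on a structural lemma that the paper does not prove and that is, in the form you state it, almost certainly false. You posit a \emph{fixed} partition $\mathcal{B}$ of $\{0,1\}^n$ into $M=2^{n}/2^{\Theta_\epsilon(\sqrt{n})}$ cells, depending only on $n$ and $\epsilon$, such that \emph{every} monotone $\rho$ satisfies $d_{\text{TV}}(\rho,\bar\rho)\le\epsilon/3$ for the cell-averaged $\bar\rho$. But the paper's structural result (Corollary~\ref{corollary: slack regularity old lemma}) is $\rho$-dependent: it produces a monotone function $f$ close to $\rho$ whose set of ``slacky'' levels $\{k_1,\dots,k_t\}$ has size $t\le 16/\epsilon^2$, but \emph{which} levels are slacky varies with $\rho$. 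No fixed partition is furnished, and indeed a fixed partition with your guarantee would immediately yield a learning algorithm with $O(M/\epsilon^2)=2^{n}/2^{\Theta_\epsilon(\sqrt{n})}$ samples (just estimate each cell mass), strictly improving on the paper's own learning bound of $2^{n}/2^{\Theta_\epsilon(n^{1/5})}$. You even flag this tension yourself (``$\bar\rho$ only has to preserve the scalar quantity $d_{\text{TV}}(\cdot,U)$'') and then contradict it by asserting the full TV guarantee $d_{\text{TV}}(\rho,\bar\rho)\le\epsilon/3$. Without a $\rho$-independent partition you cannot map samples to cell indices, and the reduction to the Valiant--Valiant estimator collapses.

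The paper's actual argument is quite different and does not invoke \cite{valiant2011power} at all here. It splits the cube at a Hamming-weight threshold $h_0$; above $h_0$ the uniform distribution has only $O(\epsilon)$ mass, so the contribution of that region to $d_{\text{TV}}(\rho,U)$ is estimated crudely by the sample fraction landing there. Below $h_0$ the algorithm computes, for every $x$, a direct pointwise estimator $\hat\phi(x)$ obtained by averaging the empirical mass over a height-$L$ subcube beneath $x$ with $L=\Theta_\epsilon(\sqrt{n})$, and outputs $\tfrac12\sum_x|\hat\phi(x)-2^{-n}|$. Corollary~\ref{corollary: slack regularity old lemma} enters only in the \emph{analysis}: because the (unknown, $\rho$-dependent) approximant $f$ has at most $O(1/\epsilon^2)$ slacky levels below $h_0$, all but $O(L/\epsilon^2)$ levels are ``good'', and on good levels $\hat\phi(x)$ provably tracks $\rho(x)$; the total mass on the remaining ``bad'' levels is bounded using Fact~\ref{fact: boolean cube anti-concentration} and Claim~\ref{claim: upper bound on level density}. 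The $\rho$-dependence of the structure is thus never exposed to the algorithm, only to the proof.
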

We, again, contrast this result to the results of \cite{ valiant2011estimating,valiant2011power,valiant2011testing} that show that one needs $\Omega(N_{\text{universe}}/\log (N_{\text{universe}}))$ samples to estimate the distance of an arbitrary distribution to the uniform distribution, which equals to $\Omega(2^n/n)$ for a universe of size $2^n$, such as the Boolean cube. 

We also have the following sample complexity lower bound on this task, which we prove using the sub-cube decomposition technique of \cite{rubinfeldservedio2009testing}:
\begin{restatable}{theorem}{theoremDistanceToUniformLowerBound}
	\label{theorem: tolerant testing lower bound}
	For infinitely many positive integers $n$,
	there exist two probability distributions $\Delta_{\text{Close}}$ and $\Delta_{\text{Far}}$ over monotone distributions over $\{0,1\}^n$, satisfying:
	\begin{enumerate}
		\item Every distribution in $\Delta_{\text{Far}}$ is $1/2$-far from the uniform distribution.
		\item Any algorithm that takes only $o\left(2^{\frac{n^{0.5-0.01}}{2}}\right)$ samples from a probability distribution, fails to reliably distinguish between $\Delta_{\text{Close}}$ and $\Delta_{\text{Far}}$.
		\item Every distribution in $\Delta_{\text{Close}}$ is $o(1)$-close to the uniform distribution.
	\end{enumerate}
\end{restatable}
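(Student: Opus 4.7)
My plan is to construct the two ensembles via a sub-cube decomposition in the spirit of Rubinfeld--Servedio, and argue indistinguishability by a birthday-style collision bound. Fix $k = \lceil n^{0.49} \rceil$ and partition $\{0,1\}^n$ into $2^{k}$ sub-cubes $C_{w} = \{w\} \times \{0,1\}^{n-k}$, one for each $w \in \{0,1\}^{k}$. I would take $\Delta_{\text{Close}}$ to be the point mass on the uniform distribution over $\{0,1\}^n$. To define $\Delta_{\text{Far}}$, within each sub-cube $C_w$ I would sample, independently across $w$, a local monotone sub-distribution $\rho^{(w)}$ of total mass $2^{-k}$, drawn from a carefully chosen family $\mathcal{F}$ of local distributions on $C_w$. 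Each member of $\mathcal{F}$ is monotone and sufficiently non-uniform on $C_w$ so that the full realization ends up $\tfrac12$-far from uniform; $\mathcal{F}$ is engineered so that the average of a uniformly random choice equals the uniform distribution on $C_w$, and so that any two members paired across different sub-cubes satisfy the cross-sub-cube monotonicity condition (this typically forces a small fixed ``template'' coupling between the random choices, rather than fully independent randomization).

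Next I would verify two properties of each $\rho \sim \Delta_{\text{Far}}$. Local monotonicity within each $C_w$ is immediate from membership in $\mathcal{F}$; global monotonicity across sub-cubes follows from the compatibility of the paired choices. The distance bound $d_{\text{TV}}(\rho, U) \geq \tfrac12$ follows from a direct computation on the mass profile once $\mathcal{F}$ has been fixed. For the indistinguishability argument, given $m = o(2^{k/2}) = o(2^{n^{0.49}/2})$ samples, a birthday-type bound shows that no two samples land in the same sub-cube with probability $1 - O(m^2/2^k) = 1 - o(1)$. Conditioned on this no-collision event, each sample resides in a distinct $C_w$, so only the independent randomness of one $\rho^{(w)}$ is ever ``read'' per sample; because $\mathcal{F}$ is designed so that its uniform mixture equals $U|_{C_w}$, each sample is marginally uniform, and independence across sub-cubes makes the joint law of the $m$ samples under $\Delta_{\text{Far}}$ (conditional on no collision) identical to that of $m$ i.i.d.\ uniform samples from $\{0,1\}^n$. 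Combining with the collision bound, the total variation between the two sample-distributions is $o(1)$, ruling out a reliable distinguisher.

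The main obstacle will be constructing $\mathcal{F}$ so that all four properties hold simultaneously: within-sub-cube monotonicity, across-sub-cube compatibility, $\tfrac12$-farness of each realization from uniform, and uniform mean. The structural difficulty is that random up-sets of the Boolean lattice have very non-uniform element-inclusion probabilities -- elements of Hamming weight above $k/2$ are included almost surely, those below $k/2$ almost never -- so a naive random up-set construction is easily detected by looking at Hamming weights. Getting around this likely requires pairing members of $\mathcal{F}$ whose deviations from uniform cancel in expectation (for instance, a ``heavy-on-upper-half'' distribution averaged against its complementary ``heavy-on-lower-half'' partner), while ensuring the upper-density profiles of the pair remain weakly comparable so that the cross-sub-cube monotonicity constraint is not violated; balancing these competing demands is the technical heart of the argument.
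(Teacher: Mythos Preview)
Your plan has a fundamental obstruction: the family $\mathcal{F}$ you need cannot exist. Global monotonicity of $\rho$ restricted to a sub-cube $C_w=\{w\}\times\{0,1\}^{n-k}$ forces each rescaled local distribution $\tilde\rho^{(w)}:=2^k\rho^{(w)}$ to be monotone on $\{0,1\}^{n-k}$. But any monotone probability distribution $\tau$ on $\{0,1\}^{m}$ has $\tau(1^m)=\max_x\tau(x)\ge 2^{-m}$, with equality if and only if $\tau$ is uniform. Hence if the mixture $\mathbb{E}_{\tau\sim\mathcal{F}}[\tau]$ is to equal the uniform distribution on $\{0,1\}^{n-k}$, its value at $1^{n-k}$ must be exactly $2^{-(n-k)}$, forcing every $\tau$ in the support of $\mathcal{F}$ to be uniform. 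You therefore cannot simultaneously have each member of $\mathcal{F}$ monotone, each member far from uniform, and the mixture uniform; the difficulty you flag in your last paragraph is an impossibility, not a technicality. There is a second, independent obstruction: since every $\rho^{(w)}$ carries total mass $2^{-k}$ and cross-cube monotonicity demands $\rho^{(w)}(v)\le\rho^{(w')}(v)$ for all $v$ whenever $w\preceq w'$, sandwiching between $w=0^k$ and $w=1^k$ forces all the $\rho^{(w)}$ in any single realization to coincide. So even with your ``template coupling,'' the only remaining randomness in $\Delta_{\text{Far}}$ is one global draw $\tau\sim\mathcal{F}$, after which $\rho=U_k\times\tau$; the birthday argument on sub-cubes is then vacuous, and you are back to the first obstruction.

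The paper's construction avoids all of this by using a different notion of sub-cube. For $x\in\{0,1\}^n$ the \emph{subcube distribution} $S_x$ is uniform on the upper set $\{y:y\succeq x\}$; each $S_x$ is automatically monotone, and so is any mixture of them, with no cross-compatibility constraint to manage. Both ensembles are mixtures of $S_x$ over $x$ of Hamming weight $k=n^{0.5-0.01}$: the single distribution in $\Delta_{\text{Close}}$ mixes uniformly over \emph{all} such $x$, while a draw from $\Delta_{\text{Far}}$ mixes over $\tfrac12\cdot 2^{k}$ of them chosen at random. Farness of every member of $\Delta_{\text{Far}}$ follows from a support-size count; indistinguishability follows from a birthday argument on the mixture \emph{components} (with $o(2^{k/2})$ samples, no two come from the same $S_x$, so the two sampling processes coincide); and $o(1)$-closeness of $\Delta_{\text{Close}}$ to uniform reduces, by permutation symmetry, to a total-variation bound between $\text{Binomial}(n,\tfrac12)$ and $k+\text{Binomial}(n-k,\tfrac12)$. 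Note that $\Delta_{\text{Close}}$ is deliberately \emph{not} the uniform distribution, so that both ensembles share the same sampling mechanism and the coupling is clean.
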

\begin{remark} In our construction, the distribution $\Delta_{\text{Close}}$ consists of only one probability distribution. Additionally, the constant $0.01$ can be made arbitrarily small.
\end{remark}

Recall that in
\cite{rubinfeldservedio2009testing, AdamaszekCS10} 
it was shown that, given samples from a probability distribution over the Boolean cube that is promised to be monotone, distinguishing the 
uniform distribution from one that is $\epsilon$-far from uniform can be done using only $O\left(\frac{n}{\epsilon^2} \right)$ samples. Yet, as the theorem above shows, the \textbf{tolerant} version of this problem, which requires one to distinguish a distribution that is $o(1)$-close to the uniform from a distribution that is $1/2$-far from uniform, requires $\Omega\left(2^{\frac{n^{0.5-0.01}}{2}}\right)$ samples, which is dramatically greater.

\subsection{Technical overview}
\label{subsection: technical overview}
\subsubsection{Structural results}
Our analysis applies and builds upon the main structural lemma in \cite{blais2014dnf}. To state it, recall that a \textbf{DNF} is a Boolean function that is formed as an OR of ANDs, and it is monotone if there are no negations. Each AND is referred to as a \textbf{clause}, with the number of variables in the AND is referred to as the \textbf{width} of the clause. Their structural lemma shows that each monotone function can be approximated by a DNF with only a constant number of distinct clause widths. Specifically:
\begin{lemma}[Main Lemma in \cite{blais2014dnf}, abridged and restated]
	\label{lemma BHST}
	For every positive $\epsilon$, for all sufficiently large $n$,
	let $f$ be a monotone Boolean function over the domain $\{0,1\}^n$. There is a function $g=g_1 \lor ... \lor g_t$ with the following properties:
	(i) $t\leq 2/\epsilon$ (ii) each $g_i$ is a monotone DNF with terms of width exactly $k_i$
	(iii) $g$ disagrees with $f$ at no more than $\epsilon \cdot 2^n$ elements of $\{0,1\}^n$ (iv) $g(x) \leq f(x)$ for all $x$ in $\{0,1\}^n$.
\end{lemma}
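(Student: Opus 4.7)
The plan is to approximate $f$ from below by combining lower-approximators indexed by Hamming weight. For each $k \in \{0,1,\ldots,n\}$ define $g_k$ to be the monotone DNF whose clauses are exactly the width-$k$ minterms of $f$; equivalently, $g_k(y) = 1$ iff there exists $x \preceq y$ with $|x|=k$ and $f(x)=1$. By monotonicity of $f$ we have $g_k \leq f$, and every clause of $g_k$ has width exactly $k$. Any OR $g = g_{k_1} \vee \cdots \vee g_{k_t}$ therefore automatically satisfies properties (ii) and (iv); the task is to choose $t \leq 2/\epsilon$ indices so that $g$ disagrees with $f$ on few points.

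The key structural observation is that for every $y$ with $f(y)=1$, the set $W(y) \myeq \{k : g_k(y)=1\}$ is a contiguous interval $[w_{\min}(y),\,|y|]$, where $w_{\min}(y)$ is the width of the narrowest minterm of $f$ contained in $y$. Indeed, given any $x_0 \preceq y$ with $f(x_0)=1$ and $|x_0|=w_0$, we may flip coordinates of $y$ that are $0$ in $x_0$ to $1$ one at a time, staying below $y$ and preserving $f=1$ by monotonicity, and thereby realize every intermediate width from $w_0$ up to $|y|$. A truth point $y$ of $f$ is thus covered by $g$ iff the interval $W(y)$ contains at least one chosen level $k_i$.

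Let $p_k \myeq \Pr_{|y|=k}[f(y)=1]$ denote the level-$k$ density of $f$; by monotonicity $p_k$ is nondecreasing in $k$, via a standard coupling along a uniformly random maximal chain in the cube. I will pick $t \leq 2/\epsilon$ levels $k_1 < \cdots < k_t$ so that each consecutive gap in $0 = p_{k_0} \leq p_{k_1} \leq \cdots \leq p_{k_t} \leq p_{k_{t+1}} = 1$ is at most $\epsilon$; such a choice exists since these gaps telescope to at most $1$. To bound the disagreement, note that an uncovered $y$ has $|y|$ strictly between two consecutive chosen levels $k_i < k_{i+1}$ and satisfies $g_{k_i}(y)=0$. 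Because $g_{k_i}$ is itself a monotone function whose level-$k_i$ density equals $p_{k_i}$ by construction, the same density-monotonicity fact implies that the density of $\{y : g_{k_i}(y)=1\}$ at every level $m \geq k_i$ is at least $p_{k_i}$. Hence the uncovered density at any level $m \in (k_i, k_{i+1})$ is at most $p_m - p_{k_i} \leq p_{k_{i+1}} - p_{k_i} \leq \epsilon$; summing over $m$ produces overall disagreement at most $\epsilon \cdot 2^n$, establishing (iii).

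The main obstacle is ensuring that each $g_{k_i}$ inherits enough structure from $f$ to make the density comparison work, which relies on the interval property of $W(y)$ together with the fact that any monotone Boolean function has nondecreasing density along Hamming levels. Both are straightforward consequences of monotonicity, so once these two ingredients are established, the level-selection argument and the telescoping bound deliver the lemma essentially mechanically.
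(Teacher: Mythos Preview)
The paper does not prove this lemma; it is quoted from \cite{blais2014dnf} and used as a black box. So there is no ``paper's own proof'' to compare against. That said, your argument is essentially the one in \cite{blais2014dnf}, and it is also the Boolean-function analogue of the level-by-level threshold construction the present paper uses to prove its Main Structural Lemma (Lemma~\ref{lemma: slack regret}).

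Two small points worth tightening. First, your two descriptions of $g_k$ are not equivalent: the DNF whose clauses are the width-$k$ \emph{minterms} of $f$ is in general strictly smaller than the function ``$\exists\, x\preceq y,\ |x|=k,\ f(x)=1$'' (a level-$k$ truth point of $f$ need not be a minterm). It is the second description you actually use, and with it everything you claim---$g_k\le f$, width exactly $k$, the interval structure of $W(y)$, and $\Pr_{|y|=k}[g_k(y)=1]=p_k$---goes through. Second, ``the gaps telescope to at most $1$'' does not by itself justify the level selection; what you need is the greedy choice $k_{i+1}=\min\{k>k_i:\ p_k>p_{k_i}+\epsilon\}$ (with $k_1=\min\{k:p_k>\epsilon\}$), which guarantees $p_m-p_{k_i}\le\epsilon$ for every $m\in[k_i,k_{i+1})$ and terminates after at most $1/\epsilon\le 2/\epsilon$ steps since $p_{k_i}>i\epsilon$. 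With these two fixes your proof is complete.
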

For Theorem \ref{theorem: support size upper bound}, we use
the lemma above on the indicator function of the support of the probability distribution, which allows us to prove the correctness of our algorithm. For the problems of learning and estimating the distance to uniform, we go a step further and prove an analogous structural lemma for {\em monotone probability distributions}.

There are some crucial differences between monotone Boolean functions in the setting of Boolean function approximation and monotone probability distributions in our setting. First of all, the basic properties of the two objects are different: a Boolean function always has one of the two values (zero or one), which is usually not the case for a probability distribution, but a probability distribution, summed over $\{0,1\}^n$, has to equal one. 
Secondly, the relevant notions of a function $f_2$ being well-approximated by a function $f_1$ are different: for Boolean functions we bound the fraction of points on which $f_1$ and $f_2$ disagree, whereas for monotone probability distributions we would like to bound the $L_1$ distance between $f_1$ and $f_2$.   

To overcome these differences, we generalize to the setting of non-Boolean functions the main concept used in the proof of Lemma \ref{lemma BHST}: the concept of a \textbf{minterm} of a monotone Boolean function. In \cite{blais2014dnf} the minterm of a monotone Boolean function $f$ is defined as follows:
\[
\text{minterm}_f(x) \myeq
\begin{cases}
1 &\text{ if $f(x)=1$ and for all $y \prec x$, $f(y)=0$} \\
0 &\text{otherwise}\\
\end{cases}
\]
Using this language, the function $g$ in Lemma \ref{lemma BHST} can be characterized as a function, for which: 
\begin{equation}
\label{equation: in intro counting with indicators}
\sum_{h=0}^n \mathbf{1}_{\exists \: x \in \{0,1\}^n: \; ||x||=h \: \land \: \text{minterm}_g(x) \neq 0} 
\leq \frac{2}{\epsilon}
\end{equation}

We introduce the notion of monotone \textbf{slack} that generalizes the notion of a minterm to non-Boolean functions:

\[
\text{slack}_f(x)
\myeq
f(x)-\max_{y \prec x} f(y)
=
f(x)-\max_{y \preceq x \text{ and } ||y||=||x||-1} f(y)
\]
With such a definition at hand, one could hope to prove that every monotone probability distribution $\rho$ is well-approximated in the $L_1$ norm by a monotone function $f$, for which $\sum_{h=0}^n \mathbf{1}_{\exists \: x \in \{0,1\}^n: \; ||x||=h \: \land \: \text{slack}_f(x) \neq 0}$ is bounded by a constant independent of $n$.
We were not able to prove such a theorem, and instead we bound a related quantity that can be thought of as the weighted analogue of the expression in Equation \ref{equation: in intro counting with indicators}: $\sum_{h=0}^n R_h \cdot \mathbf{1}_{\exists \: x \in \{0,1\}^n: \; ||x||=h \: \land \: \text{slack}_f(x) \neq 0}$, where the $R_h$ are positive weights that can be chosen arbitrarily, as long as they satisfy a certain technical condition that ensures that not too many of these weights are too large. Precisely, our main lemma is:
\begin{restatable}[Main Structural Lemma]{lemma}{ourMainLemma}
	\label{lemma: slack regret}
	For all positive $\zeta$, for all sufficiently large $n$, the following is true: Let $\rho$ be a monotone probability distribution over $\{0,1\}^n$.
	Suppose, for each $h$ between $0$ and $n$ we are given a positive value $R_h$, and it is the case that:
	$\;\;
	\sum_{h=0}^n R_h \cdot \frac{\binom{n}{h}}{\sum_{j=h}^n \binom{n}{j}}
	\leq \zeta
	$
	
	Then, there exists a positive monotone function $f$, mapping $\{0,1\}^n$ to positive real numbers, satisfying:
	\begin{enumerate}
		\item For all $x$, it is the case that $\rho(x) \geq f(x)$.
		\item It is the case that:
		$\;\;
		\sum_{x \in \{0,1\}^n} \rho(x)-f(x) \leq \zeta
		$
		\item It is the case that:
		$
		\;\;
		\sum_{h=0}^n R_h \cdot \mathbf{1}_{\exists \: x \in \{0,1\}^n: \; ||x||=h \: \land \: \text{slack}_f(x) \neq 0}
		\leq 1
		$
		%\mathbb{1}
	\end{enumerate}
\end{restatable}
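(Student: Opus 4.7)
The plan is to construct $f$ via a bottom-up ``max-propagation'' determined by a chosen set $S\subseteq\{1,\ldots,n\}$ of \emph{free} levels: set
\[
	f_S(x)\ =\ \begin{cases}\rho(x),& ||x||=0\text{ or }||x||\in S,\\ \max_{y\prec x,\ ||y||=||x||-1}\,f_S(y),& \text{otherwise.}\end{cases}
\]
An easy induction on Hamming weight shows that $f_S$ is monotone, bounded above by $\rho$, and has $\mathrm{slack}_{f_S}(x)=0$ at every $x$ whose level lies outside $S$. Unrolling the recursion gives the closed form $f_S(x)=\max\{\rho(z):z\preceq x,\ ||z||=h^*(x)\}$, where $h^*(x):=\max\{h'\in S\cup\{0\}:h'\le ||x||\}$. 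So the lemma reduces to choosing $S$ satisfying $\sum_{h\in S}R_h\le 1$ and for which the $L_1$ deficit $\sum_x(\rho(x)-f_S(x))$ is at most $\zeta$.

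The key structural fact I intend to exploit is that for a monotone distribution on the cube the level averages $\bar p_h:=p_h/\binom{n}{h}$, where $p_h:=\sum_{||x||=h}\rho(x)$, are non-decreasing in $h$. This follows from double-counting pairs $x\prec y$ at adjacent levels: one obtains $p_{h+1}\ge \frac{n-h}{h+1}\,p_h$, and dividing by $\binom{n}{h+1}=\frac{n-h}{h+1}\binom{n}{h}$ yields $\bar p_{h+1}\ge \bar p_h$. Averaging the consequences $\bar p_h\le \bar p_j$ for $j\ge h$ with weights $\binom{n}{j}$ then gives the pointwise bound $p_h\le W_h:=\binom{n}{h}/\sum_{j\ge h}\binom{n}{j}$, so the hypothesis $\sum_h R_hW_h\le\zeta$ reads as an $R_h$-weighted bound on the mass profile of $\rho$.

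To bound the $L_1$ deficit of $f_S$, I would use the inequality $\max\ge\mathrm{avg}$ inside the closed form together with a double-counting calculation. Writing $h_0$ for the common value of $h^*(x)$ over $x$ at level $h$, the computation gives
\[
	\sum_{||x||=h}f_S(x)\ \ge\ \frac{\binom{n}{h}}{\binom{n}{h_0}}\,p_{h_0},
\]
so the per-level deficit at any $h\notin S$ is at most $\binom{n}{h}(\bar p_h-\bar p_{h_0})$. Writing $S\cup\{0\}=\{s_0<s_1<\cdots<s_t\}$ and summing over the gaps $(s_i,s_{i+1})$, the total deficit telescopes to a bound of the form $\sum_i(\bar p_{s_{i+1}}-\bar p_{s_i})\cdot B_i$, where $B_i:=\sum_{h\in(s_i,s_{i+1})}\binom{n}{h}$, a quantity controlled by how far $\bar p$ can grow across each $S$-gap.

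For the choice of $S$ itself, the natural intuition is to include exactly the levels whose $R$-weighted contribution to the mass profile is large, since those are the levels whose exclusion would cost the most deficit. The main obstacle I foresee is matching the global hypothesis $\sum_h R_hW_h\le\zeta$ with the telescoping deficit $\sum_i(\bar p_{s_{i+1}}-\bar p_{s_i})\cdot B_i$: since $W_h$ is only a worst-case upper bound on the mass at level $h$, while the telescoping sum involves the actual growth of $\bar p_h$, a charging argument will be needed that assigns each unit of deficit across a gap of $S$ to a nearby \emph{free} level whose $R_hW_h$ budget absorbs it, while simultaneously enforcing $\sum_{h\in S}R_h\le 1$. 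I expect that a layered refinement of a greedy selection---repeatedly subdividing gaps of $S$ whose binomial mass $B_i$ is too large relative to neighboring $R$-values---will yield the required $S$, and that engineering this simultaneous optimization is the technical heart of the argument.
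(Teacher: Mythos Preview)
Your $f_S$ construction is sound and, in fact, coincides exactly with what the paper produces: the paper also builds $f$ by fixing a set $S$ of ``free'' levels and max-propagating elsewhere. So the reduction you state---find $S$ with $\sum_{h\in S}R_h\le 1$ and $L_1$-deficit $\le\zeta$---is exactly the right target.

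The gap is in how you plan to select $S$. Your telescoped bound $\sum_i(\bar p_{s_{i+1}}-\bar p_{s_i})B_i$ throws away the per-level information about the \emph{actual} deficit and replaces it with a worst-case envelope in terms of $\bar p$; once you do that, there is no visible mechanism that links the hypothesis $\sum_h R_hW_h\le\zeta$ back to your deficit, since the hypothesis does not mention $\bar p$ at all. The charging/greedy-refinement strategy you sketch would have to somehow recover that link, and you have not indicated how.

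The paper's selection rule is far simpler and avoids this detour entirely. Process levels bottom-up, and at level $h$ look at the \emph{exact} would-be deficit $D_h:=\sum_{\|x\|=h}\bigl(\rho(x)-\max_{y\prec x}f(y)\bigr)$, which is also $\binom{n}{h}$ times the average slack of $f$ at that level. Put $h$ in $S$ (keep the slack) iff $D_h\ge R_hW_h$; otherwise zero out the slack. Property~(2) is then immediate: the total deficit is $\sum_{h\notin S}D_h<\sum_h R_hW_h\le\zeta$. For property~(3), the point is that on each $h\in S$ the average slack of $f$ is at least $R_h/\sum_{j\ge h}\binom{n}{j}$; feeding this into the telescoping inequality $\mu_f(h)\ge\sum_{k\le h}(\text{avg slack at }k)$ and then into $1\ge\sum_h\binom{n}{h}\mu_f(h)$ yields $\sum_{h\in S}R_h\le 1$ directly. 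No charging, no refinement, no appeal to the coarse bound $\bar p_h\le W_h/\binom{n}{h}$ is needed.

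In short: keep your $f_S$, but replace your proposed selection of $S$ with this adaptive threshold; it is both the missing ingredient and dramatically simpler than the route you were heading down.
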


Now, as a corollary, we present a simple special case (proven to be so in Subsection \ref{section: slackness regret}) that not only illustrates the power of Lemma \ref{lemma: slack regret}, but also is sufficient for our proof of Theorem \ref{theorem: distance to uniform upper bound}:  

\begin{restatable}{corollary}{corollaryOfOurMainLemma}
	\label{corollary: slack regularity old lemma}
	Let $\rho$ be a monotone probability distribution over $\{0,1\}^n$ and let $h_0$ be an integer for which:
	\[
	\frac{\epsilon}{4}
	\leq
	\Pr_{x \sim \{0,1\}^n}
	[||x|| \geq h_0]
	\leq
	\frac{\epsilon}{2}
	\]
	Then, there exists a positive monotone function $f:\{0,1\}^n \rightarrow R$ satisfying:
	\begin{enumerate}
		\item For all $x$, it is the case that $\rho(x) \geq f(x)$.
		\item It is the case that:
		$\;\;
		\sum_{x \in \{0,1\}^n} \rho(x)-f(x) \leq \frac{\epsilon}{4}
		$.
		\item There exists a set of values $\{k_1,...,k_t\}$ (ordered in an increasing order) with $t \leq \frac{16}{\epsilon^2}$, satisfying that if for some $x$ in $\{0,1\}^n$ we have $||x|| < h_0$ and $\text{slack}_{f}(x) \neq 0$, then $||x||=k_i$ for some $i$.
	\end{enumerate}
\end{restatable}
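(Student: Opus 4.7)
The plan is to derive this corollary directly from the Main Structural Lemma (Lemma \ref{lemma: slack regret}) by choosing the weights $R_h$ appropriately: I will assign mass $\Theta(\epsilon^2)$ to each level below $h_0$ and negligible positive mass to each level above. Concretely, I set $R_h = \epsilon^2/16$ for $h < h_0$ and $R_h = \epsilon^2/(16(n+1))$ for $h \geq h_0$, and invoke the lemma with $\zeta = \epsilon/4$.

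The main check is that this choice satisfies the hypothesis
\[
\sum_{h=0}^n R_h \cdot \frac{\binom{n}{h}}{\sum_{j=h}^n \binom{n}{j}} \leq \frac{\epsilon}{4}.
\]
I rewrite each ratio as $\Pr_{x \sim \{0,1\}^n}[||x||=h]/\Pr_{x \sim \{0,1\}^n}[||x|| \geq h]$. For $h < h_0$, the denominator is at least $\epsilon/4$ by the hypothesis on $h_0$, so each such term is at most $(4/\epsilon)\Pr[||x||=h]$. Summing over $h < h_0$ and multiplying by $R_h = \epsilon^2/16$ gives at most $(\epsilon/4)\Pr[||x|| < h_0] \leq (\epsilon/4)(1 - \epsilon/4) = \epsilon/4 - \epsilon^2/16$, where I used $\Pr[||x|| \geq h_0] \geq \epsilon/4$ a second time. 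For $h \geq h_0$, I use the trivial bound $\binom{n}{h}/\sum_{j=h}^n \binom{n}{j} \leq 1$, so the at most $n + 1$ tail terms contribute at most $(n+1) \cdot \epsilon^2/(16(n+1)) = \epsilon^2/16$. The two contributions add up to at most $\epsilon/4$, so the hypothesis holds.

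The first two conclusions of the corollary (pointwise domination $f \leq \rho$ and total slack at most $\epsilon/4$) are then the first two conclusions of the lemma read off verbatim. For the third, let $\{k_1, \ldots, k_t\}$ enumerate the Hamming weights $h < h_0$ at which some $x$ with $||x||=h$ has $\text{slack}_f(x) \neq 0$. The third conclusion of the lemma gives
\[
\sum_{h=0}^n R_h \cdot \mathbf{1}_{\exists \: x \in \{0,1\}^n: \; ||x||=h \: \land \: \text{slack}_f(x) \neq 0} \leq 1,
\]
and restricting the sum to $h < h_0$ yields $t \cdot (\epsilon^2/16) \leq 1$, hence $t \leq 16/\epsilon^2$, as required.

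The only real design decision is the choice of weights: allocating the budget $\zeta = \epsilon/4$ uniformly across levels below $h_0$ is what forces at most $16/\epsilon^2$ slack-bearing Hamming weights there, while the guarantee $\Pr[||x|| \geq h_0] \geq \epsilon/4$ is precisely what keeps the hypothesis sum from blowing up on that range. I do not anticipate a real obstacle; the only minor bookkeeping is ensuring that the strictly positive weights on $h \geq h_0$, which are required because the lemma insists on positive $R_h$, do not push the hypothesis sum above $\epsilon/4$, and dividing them by $n+1$ takes care of this.
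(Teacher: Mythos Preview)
Your proof is correct and follows essentially the same approach as the paper: apply Lemma~\ref{lemma: slack regret} with $\zeta=\epsilon/4$ and constant weights $R_h=\epsilon^2/16$ on the levels below the cutoff, then read off the three conclusions. The paper simply sets $R_h=0$ for $h>h_0$ and bounds the hypothesis sum by $\sum_{h\le h_0}\frac{\epsilon^2}{16}\cdot\frac{\binom{n}{h}}{2^n\cdot\epsilon/4}\le\frac{\epsilon}{4}$; your version is in fact slightly more careful, since you noticed that the lemma as stated asks for \emph{positive} $R_h$ and you allocated a small $\epsilon^2/(16(n+1))$ budget to the upper levels, tracking the extra $\epsilon^2/16$ slack on the lower range to make room for it.
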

For Theorem \ref{theorem: learning upper bound}, however, we use the full power of Lemma \ref{lemma: slack regret}.
\subsubsection{Algorithmic ideas}
Here we present an informal overview of the ideas involved in the design and analysis of our algorithms. Throughout we omit details and technicalities.
As already mentioned, our algorithms for Theorems \ref{theorem: support size upper bound}, \ref{theorem: distance to uniform upper bound} and \ref{theorem: learning upper bound} use respectively Lemma \ref{lemma BHST}, Corollary \ref{corollary: slack regularity old lemma} and Lemma \ref{lemma: slack regret} as their structural core.
Here we present the algorithmic ideas in the order of increasing technical sophistication.
\paragraph{Support size estimation (Theorem \ref{theorem: support size upper bound})}
The idea behind our support size estimation algorithm is as follows: if we received $x$ as a sample, then not only $x$ has to be in the support of $\rho$, but every $y$, satisfying $x \preceq y$ is in the support of $\rho$. For all such $y$, we say that $y$ is \textbf{covered} by $x$. Our algorithms estimates the support size of $\rho$ through estimating the number of all such $y$ that are covered by at least one of the samples. 

This algorithm can be made computationally efficient by standard methods in randomized algorithms, and the only non-trivial step is to show that $\frac{2^n}{2^{\Theta_{\epsilon}(\sqrt{n})}}$ samples suffice. To show this, we first apply Lemma \ref{lemma BHST} to the indicator function of the support of $\rho$ (which we from now on call the support function of $\rho$). This gives us a Boolean function $g$ that approximates well the support function of $\rho$ and has zero slack everywhere, except for a small number of levels\footnote{i.e. subsets of $\{0,1\}^n$ that have the same Hamming weight.} of $\{0,1\}^n$. For simplicity, assume that the support function of $\rho$ itself has this property, and there are only a small number of levels of the Boolean cube on which the support function of $\rho$ can have non-zero slack, which we call the \textbf{slacky} levels.

Now, we divide the elements of $\{0,1\}^n$ (which we also call \textbf{points}) into \textbf{good}\footnote{We later re-define these notions in order to adapt them for the technical details we ignore in the introduction.} points and \textbf{bad} points, with the former defined as all the points sufficiently close to a slacky level, and the latter defined as all the other points. Clearly, a given level of $\{0,1\}^n$ consists either fully from good points or fully from bad points, so we also refer to levels as good or bad.

We argue that if a point $y$ in the support of $\rho$ is a good point, then it is likely to be covered by one of the samples, because there is a large number of values $x$ in the support of $\rho$, for which $y \preceq x$. 

We conclude by bounding the number of elements in the support of $\rho$ that are bad, by using the fact that there cannot be too many slacky levels.
\paragraph{Estimation of distance to uniform (Theorem \ref{theorem: distance to uniform upper bound})}
To estimate the distance of a monotone distribution to uniform, we pick a value $h_0$ as in Corollary \ref{corollary: slack regularity old lemma} and break down the value of the total variation distance from $\rho$ to uniform into contributions from two disjoint components: (i) $\{x \in \{0,1\}^n \text{ s.t. } ||x|| \geq h_0\}$ and (ii) all the other points of $\{0,1\}^n$. In other words, we use $h_0$ as the cutoff value for the Hamming weight, to separate $\{0,1\}^n$ into components (i) and (ii). The first contribution is straightforward to estimate simply through estimating how likely a random sample $x$ from $\rho$ is to have $||x|| \geq h_0$, because it is straightforward to prove that if one redistributes the probability mass of $\rho$ in $\{x \in \{0,1\}^n \text{ s.t. } ||x|| \geq h_0\}$, while keeping the total amount of probability mass in this set fixed, the total variation distance between $\rho$ and the uniform distribution cannot change by more than $O_{\epsilon}(1)$.

For any element $x$ of the component (ii), we prepare an estimate of $\rho(x)$, which we call $\hat{\phi}(x)$. Our approach here is somewhat similar to the one for our support size estimation algorithm. 
In the case of support size estimation, we only registered whether $x$ was covered by a sample from $\rho$ or not. In this case, we actually need an estimate on $\rho(x)$ (as opposed to $\mathbf{1}_{\rho(x)\neq 0}$) which we obtain by studying the pattern of all the samples covering $x$. 
More precisely, suppose we draw $N_2$ samples from the distribution, which form a multiset $S_2$. We extract the estimate $\hat{\phi}(x)$ from the pattern of samples as follows:
\[
\hat{\phi}(x)
:=
\frac{1}{2^L} \cdot \frac{\max_{y \text{ s.t. } y \preceq x \text{ and } ||x||-||y|| = L}
	\bigg \lvert \bigg \{z \in S_2: y \preceq z \preceq x \bigg\} \bigg \rvert}{N_2}
\] 
Here $L$ is a parameter equal to $\Theta_{\epsilon}(\sqrt{n})$. We then estimate the contribution of set (ii) as $\sum_{x \in \{0,1\}^n: \; ||x|| \geq h_0} \left\lvert \hat{\phi}(x)-1/2^n \right\rvert$.

We show the correctness of our algorithm as follows.
We use a tail bound to show that $\hat{\phi}(x)$ concentrates sufficiently closely to the value: .
\begin{multline*}
\phi(x)
\myeq
\frac{1}{2^L} \cdot \max_{y \text{ s.t. } y \preceq x \text{ and } ||x||-||y|| = L} \text{  }
\Pr_{z \sim \rho}[ y \preceq z \preceq x ]
=\\
\frac{1}{2^L} \cdot \max_{y \text{ s.t. } y \preceq x \text{ and } ||x||-||y|| = L} \text{  }
\sum_{z \text{ s.t. } y \preceq z \preceq x} \rho(z)
\end{multline*}
Then, we apply Corollary \ref{corollary: slack regularity old lemma}, which implies that $\rho$ is approximated well by a function $f$ and a certain set of constraints on the slack of $f$ holds. 

Now for the sake of simplicity (analogously to the case of support size estimation), assume that $\rho$ itself satisfies the condition that below the threshold $h_0$ there are at most $O_{\epsilon}(1)$ levels of $\{0,1\}^n$ on which there are points $x$ with non-zero $\text{slack}_{\rho}(x)$ (in reality it is merely well-approximated by such a function).
We now can (analogously to the case of support size estimation) 
introduce the concepts of \textbf{slacky} levels as levels on which $\rho$ has non-zero slack, and \textbf{good} levels, which are below $h_0$ and farther than $L$ from all \textbf{slacky} levels of $\rho$. 
Now, one can prove that
for $x$ on a good level the value of
$\phi(x)$ equals precisely to $\rho(x)$, for the following reasons: First of all the inequality: \[\max_{y \text{ s.t. } y \preceq x \text{ and } ||x||-||y|| = L} \rho(y) \leq \phi(x) \leq \rho(x)\] follows immediately from the monotonicity of $\rho$ and the definition of $\phi$. Secondly, if $\rho$ has no slack on the levels between $||x||$ and $||x||-L$ (inclusive), then from the definition of slack it follows immediately using induction on $L$ that: \[\max_{y \text{ s.t. } y \preceq x \text{ and } ||x||-||y|| = L} \rho(y)=\rho(x)\] Therefore, it has to be the case that $\rho(x)=\phi(x)$.
% In reality, of course, $\rho$ is only well-approximated by a function with zero slack in this region, therefore the equality is also approximate.

Finally, we bound the contribution to the $L_1$ distance between $\hat{\phi}$ and $\rho$ of all the levels below $h_0$ that are not good (which we again call the \textbf{bad} levels). We do this by upper-bounding the number of bad levels, and then upper bounding the total probability mass on a single level below $h_0$.
\paragraph{Learning a monotone probability distribution (Theorem \ref{theorem: learning upper bound})}

As we saw, our algorithm for the estimation of the distance to the uniform distribution contained a component that learned in $L_1$ distance the restriction of $\rho$ on the levels below the cutoff $h_0$.
The main challenge here is to extend these ideas to levels above $h_0$. To this, we make the following changes to our setup:
\begin{itemize}
	\item Instead of having one fixed constant $L$ defining whether a point is close to a slacky level, we make this value level-dependent. In other words, for every $h$ we define $L_h$, and then after drawing $N$ samples, which form a multiset $S$, we compute:
	\[
	\hat{\phi}(x)
	:=
	\frac{1}{2^{\left \lfloor L_{||x||} \right \rfloor}} \cdot \frac{\max_{y \text{ s.t. } y \preceq x \text{ and } ||x||-||y|| = \left \lfloor L_{||x||} \right \rfloor}
		\bigg \lvert \bigg \{z \in S: y \preceq z \preceq x \bigg\} \bigg \rvert}{N}
	\]
	\item  Instead of using Corollary \ref{corollary: slack regularity old lemma}, we use the the full power of Lemma \ref{lemma: slack regret}. This, again gives us a function $f$ that approximates $\rho$ closely and has a restriction on its slacky levels.
\end{itemize} 
Finally, we pick values of $L_h$ in the algorithm and $R_h$ in the analysis so we balance (i) The random error from the deviation of $\hat{\phi}(x)$ from its expectation and (ii) The systematic error introduced by the slacky levels of $f$ and the levels close to them. As a result, we find that $\frac{2^n}{2^{\Theta(n^{1/5})}}$ samples suffice.

\section{Preliminaries}

We use the following basic definitions and notation:
%\begin{definition}
%	The $n$-dimensional Boolean cube is the set $\{0,1\}^n$. 
%\end{definition}

\begin{definition}
	For $x \in \{0,1\}^n$, its \textbf{Hamming weight} is denoted as $\lvert \lvert x \rvert \rvert$ and is equal to $\sum_i x_i$.
\end{definition}

%\begin{definition}
%	For $x$ and $y$ in $\{0,1\}^n$, we say $x \preceq y$ if for all values of $i$ between $1$ and $n$ we have that $x_i \leq y_i$. 
%\end{definition}
%\begin{definition}
%	A function $f: \{0,1\}^n \rightarrow R$ is \textbf{monotone} if for all $x$ and $y$ in $\{0,1\}^n$, whenever it is the case that $x \preceq y$, then it is also the case that $f(x) \leq f(y)$. 
%\end{definition}

%\begin{definition}
%	A probability distribution $\rho$ over $\{0,1\}^n$ is \textbf{monotone}, if its density is monotone over $\{0,1\}^n$ as a function. We denote the probability of $x$ in $\rho$ as $\rho(x)$.	
%\end{definition}

\begin{definition}
	\label{definition: average value of a function on a level}
	For a function $f: \{0,1\}^n \rightarrow R$, we define the \textbf{average value on level $k$} (with $0\leq k \leq n$) as:
	$
	\mu_{f}(k)=\frac{1}{\binom{n}{k}}\sum_{x\in \{0,1\}^n: ||x||=k}{f(x)}
	$.
	We also refer to average value on level $k$ for a probability distribution $\rho$, which we denote $\mu_{\rho}(k)$. By this we mean the average value on level $k$ of the density function of $\rho$.  
\end{definition}

\begin{definition}
	\label{definition: monotone slack}
	For a monotone function $f: \{0,1\}^n \rightarrow R$, we define the \textbf{monotone slack} $\text{slack}_f(x)$ at point $x \in \{0,1\}^n$ as follows:	$
	\text{slack}_f(x)
	\myeq
	f(x)-\max_{y \prec x} f(y)
	=
	f(x)-\max_{y \preceq x \text{ and } ||y||=||x||-1} f(y)
	$.
	We also stipulate that $\text{slack}_f(0^n)=f(0^n)$.
\end{definition}

\begin{definition}
	The \textbf{total variation distance} between two probability distributions $\rho_1$ and $\rho_2$ is defined as:
	
	$
	d_{\text{TV}}(\rho_1, \rho_2)
	\myeq \frac{1}{2} \sum_{x} \lvert \rho_1(x) - \rho_2(x)\rvert
	$.
\end{definition}

%\begin{definition}
%	We say that a probability distribution $\rho$ is \textbf{$\epsilon$-far} from the uniform distribution if the total variation distance between $\rho$ and the uniform distribution is at least $\epsilon$. Analogously, we say that $\rho$ is \textbf{$\epsilon$-close} from the uniform distribution if the total variation distance between $\rho$ and the uniform distribution uniform is at most $\epsilon$.
%	\end{definition}
The following are well-known facts, which were also used in \cite{blais2014dnf}:

\begin{fact}
	\label{fact: average density increases}
	For a monotone function $f: \{0,1\}^n \rightarrow R$, for all $k_1, k_2$ satisfying $0\leq k_1 \leq k_2 \leq n$, it is the case that $\mu_f(k_1) \leq \mu_f(k_2)$.
\end{fact}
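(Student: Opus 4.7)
The plan is to reduce the claim to the case of consecutive levels and then apply a simple double-counting argument over the bipartite graph connecting level $k$ to level $k+1$ via the Hamming cube cover relation. Since $k_1 \leq k_2$, it suffices to establish $\mu_f(k) \leq \mu_f(k+1)$ for every $0 \leq k < n$; chaining these $k_2 - k_1$ inequalities yields the full statement.

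For the one-step inequality, I would consider the set of cover pairs $E_k = \{(x,y) : ||x|| = k,\; ||y|| = k+1,\; x \prec y\}$. Each $x$ with $||x||=k$ participates in exactly $n-k$ such pairs (one for each $0$-coordinate of $x$ that can be flipped to $1$), and each $y$ with $||y||=k+1$ participates in exactly $k+1$ such pairs (one for each $1$-coordinate of $y$ that can be flipped to $0$). Monotonicity of $f$ gives $f(x) \leq f(y)$ for every $(x,y) \in E_k$, so summing over $E_k$ yields
\[
(n-k) \sum_{||x||=k} f(x)
\;=\; \sum_{(x,y)\in E_k} f(x)
\;\leq\; \sum_{(x,y)\in E_k} f(y)
\;=\; (k+1) \sum_{||y||=k+1} f(y).
\]

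The final step is the standard binomial identity $(n-k)\binom{n}{k} = (k+1)\binom{n}{k+1}$, which lets me rewrite the above as
\[
\binom{n}{k+1} \sum_{||x||=k} f(x) \;\leq\; \binom{n}{k} \sum_{||y||=k+1} f(y),
\]
and dividing both sides by $\binom{n}{k}\binom{n}{k+1}$ gives exactly $\mu_f(k) \leq \mu_f(k+1)$ by Definition \ref{definition: average value of a function on a level}. There is no real obstacle here; the only thing to be careful about is that the double counting is on cover pairs (Hamming-distance-$1$ comparable pairs) rather than on the full order relation, which is what makes both edge multiplicities clean and matches the binomial identity. Iterating the consecutive-level inequality from $k_1$ up to $k_2$ completes the proof.
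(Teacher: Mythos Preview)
Your proof is correct. The paper does not actually supply a proof of this fact; it is stated as a well-known result without argument. Interestingly, the paper later proves a strict generalization (Observation~\ref{observation: generalized basic fact}) via an averaging/probabilistic argument: it writes $f(x) \geq \text{slack}_f(x) + \mathbb{E}_{y}[f(y)]$ where $y$ is a uniformly random lower neighbor of $x$, and then averages over $x$ at level $k+1$. Your double-counting over cover pairs is the combinatorial unpacking of exactly that expectation step (the regularity of the bipartite cover graph is what makes the inner expectation over $y$ collapse to $\mu_f(k)$), so the two arguments are essentially the same; your version simply omits the slack term that you do not need here.
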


\begin{fact}
	\label{fact: boolean cube anti-concentration}
	For all $k$, it is the case that
	$
	\binom{n}{k}
	\leq
	\frac{2}{\sqrt{n}} \cdot 2^n
	$.
\end{fact}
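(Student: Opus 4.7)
The plan is to reduce the claim to a bound on the central binomial coefficient and then apply Stirling's approximation. First I would observe that the sequence $k \mapsto \binom{n}{k}$ is unimodal with maximum attained at $k = \lfloor n/2 \rfloor$ (and, when $n$ is odd, also at $\lceil n/2 \rceil$). This follows immediately from the ratio identity $\binom{n}{k+1}/\binom{n}{k} = (n-k)/(k+1)$, which is at least $1$ exactly when $k \leq (n-1)/2$. So it suffices to show that $\binom{n}{\lfloor n/2 \rfloor} \leq \frac{2 \cdot 2^n}{\sqrt{n}}$.

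For the bound on the central coefficient I would use Stirling's approximation in the sharp two-sided form $\sqrt{2\pi n}\,(n/e)^n \leq n! \leq e\sqrt{n}\,(n/e)^n$. In the even case $n = 2m$, plugging these bounds into $\binom{2m}{m} = (2m)!/(m!)^2$ and simplifying yields $\binom{2m}{m} \leq \frac{e \cdot 4^m}{\pi \sqrt{2m}}$; since $e/\pi < 1$, this is at most $\frac{2^n}{\sqrt{n}}$, which is comfortably within the required bound $\frac{2 \cdot 2^n}{\sqrt{n}}$. For the odd case $n = 2m+1$, I would use the identity $\binom{2m+1}{m} = \frac{1}{2}\binom{2m+2}{m+1}$ (which follows from Pascal's rule together with the symmetry $\binom{2m+1}{m}=\binom{2m+1}{m+1}$) and apply the even-case bound to $n+1 = 2m+2$, obtaining $\binom{n}{\lfloor n/2 \rfloor} \leq \frac{2^n}{\sqrt{n+1}}$, which again is at most $\frac{2 \cdot 2^n}{\sqrt{n}}$.

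This is essentially a routine calculation and there is no conceptual obstacle; the only mild care needed is to handle the parity of $n$ separately, since Stirling is cleanest when applied to the symmetric central coefficient $\binom{2m}{m}$. The generous constant $2$ on the right-hand side (versus the sharper asymptotic $\sqrt{2/\pi}$) leaves plenty of slack, so that even crude forms of Stirling suffice and no small-$n$ base cases need to be examined separately once $n \geq 1$.
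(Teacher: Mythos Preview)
Your proof is correct and standard. Note, however, that the paper does not actually supply a proof of this statement: it is listed among the ``well-known facts'' (alongside Fact~\ref{fact: average density increases}) and simply cited as having been used previously in \cite{blais2014dnf}. So there is nothing to compare against; your Stirling-based argument is a perfectly acceptable way to fill in the details, and the reduction via unimodality to the central coefficient together with the parity split is exactly the kind of routine verification one would expect here.
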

Now, we justify two claims we made in the introduction:
\begin{claim}
	\label{claim: learning lower bound}
	For sufficiently small $\epsilon_0$, for all sufficiently large $n$, any algorithm that learns an unknown monotone probability distribution over $\{0,1\}^n$ requires at least $\Omega(2^{0.15 n})$ samples from the distribution.
\end{claim}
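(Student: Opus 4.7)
The plan is to reduce from the entropy estimation lower bound for monotone distributions shown in \cite{rubinfeldservedio2009testing} via the standard Valiant-style reduction: a learner producing a hypothesis that is close in total variation distance automatically yields an entropy estimate (just compute the entropy of the explicit hypothesis), so a sample-efficient learner would contradict the entropy estimation lower bound.

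More concretely, I would proceed in three steps. First, extract from the construction on page 39 of \cite{rubinfeldservedio2009testing} two families $\mathcal{D}_{\text{hi}}$ and $\mathcal{D}_{\text{lo}}$ of monotone distributions on $\{0,1\}^n$ whose Shannon entropies differ by $\Omega(n)$, i.e., $H(\rho_{\text{hi}}) - H(\rho_{\text{lo}}) \ge c_0 n$ for some absolute constant $c_0 > 0$ (for every $\rho_{\text{hi}} \in \mathcal{D}_{\text{hi}}$ and $\rho_{\text{lo}} \in \mathcal{D}_{\text{lo}}$), and such that the argument of \cite{valiant2011testing} applied to these families rules out distinguishing them with $o(2^{0.15 n})$ samples. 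Second, invoke the Fannes--Audenaert continuity bound: for any two distributions $p, q$ on a universe of size $N = 2^n$,
\[
|H(p) - H(q)| \;\le\; d_{\text{TV}}(p,q) \cdot \log N + H_b\bigl(d_{\text{TV}}(p,q)\bigr) \;=\; d_{\text{TV}}(p,q)\cdot n + H_b\bigl(d_{\text{TV}}(p,q)\bigr),
\]
where $H_b$ denotes the binary entropy function. Third, carry out the reduction: suppose, for contradiction, that an algorithm $\mathcal{A}$ learns any monotone distribution to total variation distance at most $\epsilon_0$ using $o(2^{0.15 n})$ samples. Pick $\epsilon_0 > 0$ small enough that $\epsilon_0 + H_b(\epsilon_0)/n < c_0/3$ for all large $n$. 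Given samples from an unknown $\rho \in \mathcal{D}_{\text{hi}} \cup \mathcal{D}_{\text{lo}}$, run $\mathcal{A}$ to obtain a hypothesis $\hat\rho$ with $d_{\text{TV}}(\rho,\hat\rho) \le \epsilon_0$, compute $H(\hat\rho)$ exactly from its explicit description, and output ``$\text{hi}$'' iff $H(\hat\rho) \ge H(\rho_{\text{hi}}) - c_0 n/3$. By the Fannes--Audenaert bound, $|H(\rho) - H(\hat\rho)| \le c_0 n/3$, so this test is correct, giving a distinguisher with $o(2^{0.15 n})$ samples and contradicting the lower bound. Hence $\mathcal{A}$ must use $\Omega(2^{0.15n})$ samples.

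The main potential obstacle is verifying that the two families of monotone distributions hidden in the entropy lower bound of \cite{rubinfeldservedio2009testing} actually realize an entropy gap of $\Omega(n)$, rather than merely a constant gap; however this is precisely the regime in which their lower bound was stated (additive error $\epsilon n$ for entropy estimation), so the gap $c_0 n$ is already built into the construction. A secondary technical point is that the learner is only guaranteed to succeed with probability $2/3$: we can amplify to high probability at the cost of a constant factor in samples, and we can also restrict the testing reduction to the $2/3$-probability event that $\mathcal{A}$ succeeds, which is enough to still beat the entropy testing lower bound and yield the desired contradiction.
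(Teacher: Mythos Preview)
Your proposal is correct and follows essentially the same route as the paper: both reduce from the entropy-testing lower bound of \cite{rubinfeldservedio2009testing} via the continuity of entropy with respect to total variation distance (you cite Fannes--Audenaert, the paper cites the equivalent bound from \cite{valiant2011testing}). The only substantive difference is that the paper is explicit about the numerics: it invokes the \cite{rubinfeldservedio2009testing} statement that distinguishing monotone distributions with entropy at least $0.81n$ from those with entropy at most $n/2+\log T$ needs $\sqrt{T}/10$ samples, then sets $T=2^{0.3n}$ to get the gap $0.01n$ and the bound $\Omega(2^{0.15n})$; you leave this step abstract. One small misattribution to clean up: the $o(2^{0.15n})$ indistinguishability itself already comes from \cite{rubinfeldservedio2009testing}, not from applying \cite{valiant2011testing} to their families.
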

\begin{proof}
	From the argument in \cite[pages 1937-1938]{valiant2011testing} it follows that if two probability distributions are $\epsilon$-close in total variation distance, then their entropy values are within $2 \log(N_{\text{universe}}) \epsilon=2\epsilon n$. Therefore, the task of estimating the entropy of an unknown monotone probability up to an additive error $2 \epsilon n$ is not harder than learning it to withing total variation distance $\epsilon$. But in \cite[page 39]{rubinfeldservedio2009testing}
	it is shown that at least $\sqrt{T}/10$ samples are required for the task of distinguishing whether the unknown monotone probability distribution has entropy at least $0.81n$ or at most $n/2+\log T$. Picking $T=2^{0.3n}$ gives us the desired learning lower bound. 	
\end{proof}

\begin{claim}
	\label{claim: monotonicity testing}
	Given Theorem \ref{theorem: learning upper bound}, one can test whether an unknown distribution $\rho$
	over the Boolean cube is monotone or $\epsilon$-far from monotone with $O(\frac{2^n}{ n\epsilon^2})$ samples. 
\end{claim}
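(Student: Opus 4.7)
My plan is to prove the claim by combining the learning algorithm of Theorem \ref{theorem: learning upper bound} with the $L_1$ identity tester of \cite{valiant2011power} in a standard ``learn-then-test'' fashion. The tester will have two stages: a cheap learning stage producing a candidate hypothesis, and a more expensive identity-testing stage that checks whether $\rho$ actually matches the hypothesis.

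In the first stage, I would run the algorithm of Theorem \ref{theorem: learning upper bound} with accuracy parameter $\epsilon/3$ on the input samples from $\rho$. This draws only $2^n/2^{\Theta_\epsilon(n^{1/5})} = o(2^n/(n\epsilon^2))$ samples and yields an explicit description of a distribution $\hat\rho$. Since the output $\hat\rho$ need not itself be monotone, I would post-process it by computing the monotone probability distribution $\hat\rho_{\text{mon}}$ minimizing $d_{\text{TV}}(\hat\rho, \hat\rho_{\text{mon}})$; this projection onto the polytope of monotone distributions over $\{0,1\}^n$ is a convex program in $2^n$ variables and can be solved deterministically in $\text{poly}(2^n)$ time. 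In the second stage, I would apply the $L_1$ distance tester of \cite{valiant2011power} against the known distribution $\hat\rho_{\text{mon}}$ to decide whether $d_{\text{TV}}(\rho, \hat\rho_{\text{mon}}) \leq 2\epsilon/3$ or $d_{\text{TV}}(\rho, \hat\rho_{\text{mon}}) \geq \epsilon$; the tester accepts iff this check reports ``close.''

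For correctness, consider the two cases. If $\rho$ is monotone, then $\rho$ itself is a candidate in the projection, so $d_{\text{TV}}(\hat\rho, \hat\rho_{\text{mon}}) \leq d_{\text{TV}}(\hat\rho, \rho) \leq \epsilon/3$ by the guarantee of Theorem \ref{theorem: learning upper bound}, and by the triangle inequality $d_{\text{TV}}(\rho, \hat\rho_{\text{mon}}) \leq 2\epsilon/3$; the tester accepts with high probability. Conversely, if $\rho$ is $\epsilon$-far from every monotone distribution, then since $\hat\rho_{\text{mon}}$ is monotone, $d_{\text{TV}}(\rho, \hat\rho_{\text{mon}}) > \epsilon$ and the tester rejects. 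The sample complexity is dominated by the second stage: the identity tester of \cite{valiant2011power} uses $O(N_{\text{universe}} / \log(N_{\text{universe}}) \cdot 1/\gamma^2)$ samples to distinguish $L_1$ distance at most $\alpha$ from at least $\alpha + \gamma$, which for $N_{\text{universe}} = 2^n$ and $\gamma = \epsilon/3$ gives exactly $O(2^n/(n\epsilon^2))$ samples.

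The only substantive point to verify is that the identity tester of \cite{valiant2011power} applies in this tolerant ``$2\epsilon/3$ vs.\ $\epsilon$'' form with the stated sample bound on a domain of size $2^n$; this is a direct citation rather than an obstacle. Everything else (the projection step, the triangle-inequality argument, the combination of sample costs) is routine, so I would not expect any hard step beyond carefully quoting the right statement from \cite{valiant2011power} and noting that the first-stage sample cost is absorbed into the second-stage bound for all $\epsilon$ of interest.
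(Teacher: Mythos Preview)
Your proposal is correct and follows essentially the same learn-then-test approach as the paper's proof. The only cosmetic difference is that you project $\hat\rho$ onto the monotone polytope \emph{before} running the \cite{valiant2011power} distance tester, whereas the paper first tests identity of $\rho$ to the raw hypothesis $\hat\rho$ and then separately checks whether $\hat\rho$ is close to some monotone distribution; both orderings yield the same $O(2^n/(n\epsilon^2))$ sample bound.
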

\begin{proof}
	This can be done in the following way:
	(1) Use our learning algorithm with an error parameter $\epsilon/4$. This gives us a description of a distribution $\hat{\rho}$, which is $\epsilon/4$-close to $\rho$ if $\rho$ is monotone.
	(2) Estimate, using the estimator of \cite{valiant2011power}, the total variation distance between $\rho$ and $\hat{\rho}$ up to $\epsilon/4$. If the result is closer to $\epsilon$ than to zero, output NO. (3) Compute the total variation distance between $\hat{\rho}$ and the closest monotone probability distribution. If this distance estimate is closer to $\epsilon$ than to zero, output NO, otherwise output YES. For constant $\epsilon$, the sample complexity is dominated by step (3), which is $O(\frac{2^n}{\epsilon^2 n})$. It is easy to see that a monotone probability distribution will pass this test, whereas a distribution that is $\epsilon$-far from monotone will fail either step (2) or step (3).
\end{proof}
\section{Learning monotone probability distributions}

\begin{figure}
%	\centering
	\caption{}
	\label{algorithm: learning}
%	\hfill \newline
%	\noindent\fbox{
%		\parbox{40em}{
			\textbf{Algorithm for learning a monotone probability distribution over the Boolean cube} (given sample access from a distribution $\rho$, which is monotone over $\{0,1\}^n$).
			
			\dotfill

			\begin{enumerate}
				\item Set $A:=\frac{1}{2n} \cdot e^{\frac{1}{2000} \cdot n^{1/5}}$. For all $h \geq n/2$, set $L_h:=\max\left(\log\left(2 n A \cdot \frac{\binom{n}{h}}{2^n} \right
				),0\right)$
				
				Similarly, for all $h$, satisfying $n/2 > h \geq 0$, set:
				$L_h:= L_{n/2}=\log\left(2 n A \cdot \frac{\binom{n}{n/2}}{2^n} \right)$.  
				\item Set 
				$N:= \frac{2^n}{A} \cdot \frac{192}{\epsilon^2} \cdot (n+9 \sqrt{n}+4)
				$
				
				Draw $N$ samples from the probability distribution $\rho$ and denote the multiset of these samples as $S$.
				\item For all $x$ in $\{0,1\}^n$, if $||x|| < 9 \sqrt{n}$, then set $\hat{\phi}(x)=0$, otherwise compute:
				\[
				\hat{\phi}(x)
				:=
				\frac{1}{2^{\left \lfloor L_{||x||} \right \rfloor}} \cdot \frac{\max_{y \text{ s.t. } y \preceq x \text{ and } ||x||-||y|| = \left \lfloor L_{||x||} \right \rfloor}
					\bigg \lvert \bigg \{z \in S: y \preceq z \preceq x \bigg\} \bigg \rvert}{N}
				\] 
				Do this by first making a look-up table, which given arbitrary $z \in \{0,1\}^n$ returns the number of times $z$ was encountered in $S$. Then, use this look-up table to compute the necessary values of $\lvert \{z \in S: y \preceq z \preceq x \} \rvert$  by querying all these values of $z$ in the lookup table and summing the results up. 
				\item For all $x$ in $\{0,1\}^n$, compute the following:
				$
				\hat{\rho}(x)
				=\hat{\phi}(x)+\frac{1}{2^n}\left(1-\sum_{y \in \{0,1\}^n} \hat{\phi}(y)  \right) 
				$
				\item Output the value table of $\hat{\rho}$.
			\end{enumerate}
%	}}
\dotfill
\end{figure}
In this section we prove our upper-bound on the sample complexity of learning an unknown monotone probability distribution over the Boolean cube. We restate the theorem:
\theoremLearningUpperBound*
\begin{proof}
	We present the algorithm in Figure \ref{algorithm: learning}.
	The number of samples drawn from $\rho$ is $N=\frac{2^{n}}{2^{\Theta_{\epsilon}(n^{1/5})}}$. The run-time, in turn, is dominated by computing the values of $\hat{\phi}$ in step (3), in which the construction of the lookup table takes $O(n \cdot 2^n)$ time, and the time spent computing each $\hat{\phi}(x)$ can be upper bounded by the product of: (i) the number of pairs $(y,z)$ that simultaneously satisfy $y \preceq z \preceq x$ and $||y||-||x|| =L_{||x||}$, which can be upper-bounded by $O(n^{L_{||x||}} \cdot 2^{L_{||x||}})$ and (ii) the time it takes to look up a given $z$ in the lookup table, which can be upper-bounded by $O(n)$. Overall, this gives us a run-time upper bound of $O(2^{n+O_{\epsilon}(n^{1/5} \log n
		)})$.
	
	Now, the only thing to prove is correctness. Here is our main claim:
	
	\begin{claim}
		\label{claim: learning main claim}
		If the following conditions are the case: 
		\begin{itemize}
			\item[a)] As a function of $h$, $L_h$ is non-increasing.
			\item[b)] For all $h$, we have that $L_h \leq 9 \sqrt{n}$.
			\item[c)] \[\frac{1}{2^n}
			\cdot 
			\sum_{h=9 \sqrt{n}}^n
			\binom{n}{h}
			\cdot
			\frac{A}{2^{L_{h}}} \leq \frac{1}{2}\]
			\item[d)] \[\sum_{h=9 \sqrt{n}}^n L_h \cdot \left(\begin{cases}
			\frac{400}{n^{2.5}} &\text{ if $h \leq n/2-\sqrt{n \ln(n)}$} 
			\\ \frac{40000}{n} &\text{ if $n/2-\sqrt{n \ln(n)}< h < n/2+\sqrt{n}$} \\ 40000 \cdot \left(\frac{h-n/2}{n}\right)^2 &\text{ if $h \geq n/2+\sqrt{n}$}\end{cases} \right)\leq \frac{\epsilon^2}{20000}\]
		\end{itemize}
		Then, with probability at least $2/3$, it is the case that
		$\sum_{x \in \{0,1\}^n \text{ s.t. } 9 \sqrt{n} \leq ||x||}
		\bigg \lvert \hat{\phi}(x)-\rho(x) \bigg \rvert
		\leq \frac{\epsilon}{2}$.
	\end{claim}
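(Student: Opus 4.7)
My plan is to decompose the error $|\hat\phi(x)-\rho(x)|$ into a \emph{systematic} term $|\phi(x)-\rho(x)|$ and a \emph{random} term $|\hat\phi(x)-\phi(x)|$, where
\[
\phi(x):=\frac{1}{2^{\lfloor L_{||x||}\rfloor}}\max_{y\preceq x,\ ||x||-||y||=\lfloor L_{||x||}\rfloor}\ \sum_{y\preceq z\preceq x}\rho(z)
\]
is the deterministic analogue of $\hat\phi$ obtained by replacing the empirical count by its expectation under $\rho$. Conditions (c) and (d) are engineered so that each of these two contributions is $\leq\epsilon/4$.

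For the systematic part I would apply Lemma~\ref{lemma: slack regret} with a weight system $R_h$ chosen, up to constants, to match the per-level expressions appearing in condition (d); a routine calculation then shows the hypothesis $\sum_h R_h\binom{n}{h}/\sum_{j\geq h}\binom{n}{j}\leq\zeta$ holds with $\zeta=O(\epsilon)$, producing a monotone $f\leq\rho$ with $\sum_x(\rho(x)-f(x))\leq\zeta$ and $\sum_{h^*\text{ slacky in }f}R_{h^*}\leq 1$. Call a level $h$ \emph{good} if $f$ has zero slack on every level in $[h-\lfloor L_h\rfloor,h]$. For $x$ at a good level, iterating the zero-slack condition $\lfloor L_h\rfloor$ times yields some $y\preceq x$ with $||x||-||y||=\lfloor L_h\rfloor$ and $f(y)=f(x)$, and the monotonicity of $f$ then pins $f(z)=f(x)$ for every $y\preceq z\preceq x$; hence $\phi(x)\geq f(x)$, and since $\phi(x)\leq\rho(x)$ always by monotonicity of $\rho$, we get $|\phi(x)-\rho(x)|\leq\rho(x)-f(x)$, summing to at most $\zeta$. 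For \emph{bad} levels $h$, which by the non-increasingness of $L_h$ (condition (a)) lie in $\bigcup_{h^*\text{ slacky}}[h^*,h^*+L_{h^*}]$, I would use the crude bound $|\phi(x)-\rho(x)|\leq\rho(x)$ combined with the standard estimate $\text{mass on level }h\leq\binom{n}{h}/\sum_{j\geq h}\binom{n}{j}$ for monotone distributions and $\sum_{h^*}R_{h^*}\leq 1$; this is exactly where the three piecewise weights $L_h\cdot(\cdot)$ in condition (d) enter, delivering an $O(\epsilon)$ total.

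For the random part I would apply a Bernstein/Chernoff bound to each count $|\{z\in S:y\preceq z\preceq x\}|$, whose expectation is $N\sum_{y\preceq z\preceq x}\rho(z)$, and union-bound over the $\binom{||x||}{\lfloor L_{||x||}\rfloor}$ candidate $y$'s to control $|\hat\phi(x)-\phi(x)|$. Using the variance estimate $\text{Var}[\hat\phi(x)]\leq\phi(x)/(N\cdot 2^{L_{||x||}})$ and Cauchy--Schwarz first across $x$ at a fixed level (noting $\sum_{||x||=h}\phi(x)\leq\sum_{||x||=h}\rho(x)\leq 1$) and then across $h$, the expected total random deviation is bounded in terms of $\sum_{h\geq 9\sqrt n}\binom{n}{h}\cdot A/(2^n\cdot 2^{L_h})$; condition (c) is precisely the statement that this is at most $1/2$, and the choice $N=\Theta(n\cdot 2^n/(A\epsilon^2))$ absorbs the remaining constants to make the expected random error at most $\epsilon/8$. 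A Markov-style argument, or a second-moment/Bernstein bound applied level-by-level, then upgrades this to a probability-$2/3$ statement.

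The main obstacle will be the joint calibration of the structural weights $R_h$ fed into Lemma~\ref{lemma: slack regret} with the level-dependent truncation depths $L_h$: the bad-mass argument requires $R_h$ to dominate $L_h\cdot(\text{per-level mass})$, while the hypothesis of Lemma~\ref{lemma: slack regret} pushes $\sum_h R_h\binom{n}{h}/\sum_{j\geq h}\binom{n}{j}$ to be small. The three regimes of condition (d), cut at $h=n/2-\sqrt{n\log n}$ and $h=n/2+\sqrt n$, mirror the three asymptotic regimes of $\binom{n}{h}/\sum_{j\geq h}\binom{n}{j}$, and aligning them with the correspondingly cut $L_h$ so that both errors simultaneously come out to $O(\epsilon)$ is the delicate bookkeeping. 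Once this calibration is fixed, the structural telescoping (zero slack forces $f$ to be constant along chains) and the Chernoff/union-bound concentration are each short.
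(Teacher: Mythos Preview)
Your proposal is correct and takes essentially the same approach as the paper: the same intermediary $\phi$, the same application of Lemma~\ref{lemma: slack regret} with $R_h$ proportional to $L_h$ times the piecewise weight appearing in condition~(d), the same good/bad level split with the charging of bad levels to slacky levels via the non-increasingness in condition~(a), and the same per-level mass bound $\mu_\rho(h)\le 1/\sum_{j\ge h}\binom{n}{j}$ for monotone $\rho$. The only caveat is that for the random part the paper union-bounds Chernoff over \emph{all} pairs $(x,y)$ (condition~(b) caps their number at $2^n\cdot n^{9\sqrt n}$) to obtain directly the high-probability pointwise bound $|\hat\phi(x)-\phi(x)|\le\tfrac{\epsilon}{8}\max\bigl(A/(2^n\cdot 2^{\lfloor L_{\|x\|}\rfloor}),\,\phi(x)\bigr)$ and then sums using condition~(c); your variance/Cauchy--Schwarz aggregation would need additional care since $\hat\phi(x)$ is a maximum over $y$ and hence not an unbiased estimator of $\phi(x)$.
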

	
	We verify in Appendix A, subsection \ref{appendix subsection: verifying the conditions on L_h}, that $L_h$ indeed satisfy the conditions above.
	In fact, the values of $L_h$ and $A$ were chosen specifically to satisfy the constraints above.
	We prove Claim \ref{claim: learning main claim} in Section \ref{section: proof lerning main claim}, after we develop our main structural lemma in Section \ref{section: slackness regret}.
	
	We now bound the contribution to the $L_1$ distance between $\hat{\phi}$ to $\rho$ that comes from points of Hamming weight less than $9 \sqrt{n}$. Since $\sum_{x \in \{0,1\}^n} \rho(x)=1$ and $\rho$ is monotone, then whenever $||x|| \leq n/2$ we have $\rho(x) \leq 1/2^{n/2}$. Therefore, for sufficiently large $n$ we have:
	\[
	\sum_{x \in \{0,1\}^n \text{ s.t. }  ||x|| < 9 \sqrt{n} }
	\bigg \lvert \hat{\phi}(x)-\rho(x) \bigg \rvert
	=\sum_{x \in \{0,1\}^n \text{ s.t. }  ||x|| < 9 \sqrt{n} }
	\rho(x)
	\leq
	\frac{n^{9 \sqrt{n}}}{2^{n/2}}
	\leq \frac{\epsilon}{2}
	\]  
	Combining this with the bound in Claim \ref{claim: learning main claim} we get:
	\[
	\sum_{x \in \{0,1\}^n}
	\bigg \lvert \hat{\phi}(x)-\rho(x) \bigg \rvert
	\leq \epsilon
	\]  
	Overall, we have:
	\begin{multline*}
	2 \cdot d_{\text{TV}}(\rho, \hat{\rho})
	=
	\sum_{x \in \{0,1\}^n}
	\bigg \lvert \hat{\rho}(x)-\rho(x) \bigg \rvert
	=\\
	\sum_{x \in \{0,1\}^n}
	\bigg \lvert \hat{\phi}(x)-\rho(x) + \frac{1}{2^n}\left(1-\sum_{y \in \{0,1\}^n} \hat{\phi}(y) \right) \bigg \rvert
	\leq\\
	\sum_{x \in \{0,1\}^n}
	\bigg \lvert \hat{\phi}(x)-\rho(x) \bigg \rvert
	+\left \lvert 1-\sum_{y \in \{0,1\}^n} \hat{\phi}(y) \right \rvert
	=\\
	\sum_{x \in \{0,1\}^n}
	\bigg \lvert \hat{\phi}(x)-\rho(x) \bigg \rvert
	+
	\left \lvert
	\sum_{x \in \{0,1\}^n}
	\rho(x)-\hat{\phi}(x)
	\right \rvert
	\leq
	2 \cdot \sum_{x \in \{0,1\}^n}
	\bigg \lvert \hat{\phi}(x)-\rho(x) \bigg \rvert
	\leq 2 \cdot \epsilon
	\end{multline*}
	Thus, with probability at least $2/3$, we have $d_{\text{TV}}(\rho, \hat{\rho}) \leq \epsilon$.
\end{proof}

\subsection{Main lemma}
\label{section: slackness regret}
Here we prove the following structural lemma. The lemma, as well as its proof are inspired by the main structural lemma of \cite{blais2014dnf} (i.e. Lemma \ref{lemma BHST}). Recall that the slack of a monotone function was given in Definition \ref{definition: monotone slack}.

\ourMainLemma*
\begin{proof}
	We use the following process to obtain $f$:
	\begin{enumerate}
		\item[a)] Set $f^*=\rho$.
		\item[b)] For $h=0$ to $n$:
		\begin{itemize}
			\item If it is the case that:
			\begin{equation}
			\label{equation: process condition}
			\frac{1}{\binom{n}{h}}
			\cdot
			\sum_{x \in \{0,1\}^n \text{ s.t. } ||x||=h} \text{slack}_{f^*}(x)
			<
			R_h \cdot \frac{1}{\sum_{j=h}^n \binom{n}{h}}
			\end{equation}
			Then, for all $x$ in $\{0,1\}^n$, satisfying $||x||=h$ set:
			$
			f^{*}(x) :=
			f^{*}(x)-
			\text{slack}_{f^*}(x)
			$.
		\end{itemize}
		\item[c)] Set $f=f^*$ and output $f$.
	\end{enumerate}
	%The following observation is immediate from the definition of the monotone slack:
	%\begin{observation}
	%	G
	%\end{observation}
	
	By inspection, $f^*$ remains monotone and positive at every iteration of the process. Therefore, $f$ is also monotone and positive.
	
	Property (1) in the Lemma is true, because at every step of the process, values of $f^*$ only decrease.
	
	To see why
	Property (2) is the case, note that the value $\sum_{x \in \{0,1\}^n}
	\rho(x)-f(x)$ is zero in the beginning of the process, and at a step $h$ it either stays the same or decreases by at most $R_h \cdot \frac{\binom{n}{h}}{\sum_{j=h}^n \binom{n}{j}}$. Therefore we can upper-bound:
	\[
	\sum_{x \in \{0,1\}^n}
	\rho(x)-f(x)
	\leq
	\sum_{h=0}^n
	R_h \cdot \frac{\binom{n}{h}}{\sum_{j=h}^n \binom{n}{j}}
	\leq \zeta
	\]
	
	Now, the only thing left to prove is that property (3) holds. 
	
	From the definition of monotone slack, it follows that modifying the value of a function on points of Hamming weight $j$ does not affect the slack on any point with Hamming weight lower than $j$. Therefore, the value $\frac{1}{\binom{n}{j}}
	\cdot
	\sum_{x \in \{0,1\}^n \text{ s.t. } ||x||=j} \text{slack}_{f^*}(x)
	$ will not change as $f^*$ changes after the $j$th iteration. Therefore, this value will be equal to $\frac{1}{\binom{n}{j}}
	\cdot
	\sum_{x \in \{0,1\}^n \text{ s.t. } ||x||=j} \text{slack}_{f}(x)
	$. Thus, the value of $\frac{1}{\binom{n}{j}}
	\cdot
	\sum_{x \in \{0,1\}^n \text{ s.t. } ||x||=j} \text{slack}_{f}(x)
	$ is either zero or at least $R_h \cdot \frac{1}{\sum_{j=h}^n \binom{n}{h}}$. 
	
	Now, we need the following generalization of Fact \ref{fact: average density increases}:
	\begin{observation}
		\label{observation: generalized basic fact}
		Let $f$ be an arbitrary monotone function $\{0,1\}^n \rightarrow R$. Then, for any $k$ in $[0,n-1]$ it is the case that:
		\[
		\mu_f(k+1)
		\geq
		\mu_f(k)
		+\frac{1}{\binom{n}{k+1}}
		\cdot
		\sum_{x \in \{0,1\}^n \text{ s.t. } ||x||=k+1}
		\text{slack}_f(x)
		\]
	\end{observation}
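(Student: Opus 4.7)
The plan is to prove the observation by a standard double-counting argument over the edges of the Boolean lattice connecting level $k$ to level $k+1$, using the slack bound on each edge. The key identity is that each $x$ with $||x|| = k+1$ has exactly $k+1$ predecessors $y \prec x$ with $||y|| = k$, and symmetrically each $y$ with $||y|| = k$ has exactly $n-k$ successors $x \succ y$ with $||x|| = k+1$. Since $(k+1)\binom{n}{k+1} = (n-k)\binom{n}{k}$, the two ways of counting the sum $\sum f(y)$ over such pairs agree.

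More precisely, I would first observe that from the definition of slack, for every $x$ with $||x|| = k+1$ and every $y \prec x$ with $||y|| = k$, we have $f(y) \leq \max_{z \prec x, ||z||=k} f(z) = f(x) - \text{slack}_f(x)$. Summing this inequality over all such edges gives
\[
\sum_{x: ||x||=k+1} \sum_{y \prec x, ||y||=k} f(y) \;\leq\; \sum_{x: ||x||=k+1} (k+1)\bigl(f(x) - \text{slack}_f(x)\bigr).
\]
Then by the double-counting identity mentioned above, the left-hand side equals
\[
\sum_{y: ||y||=k} (n-k) f(y) = (n-k)\binom{n}{k}\mu_f(k) = (k+1)\binom{n}{k+1}\mu_f(k).
\]
Dividing through by $(k+1)\binom{n}{k+1}$ and rearranging produces the claimed inequality
\[
\mu_f(k+1) \;\geq\; \mu_f(k) + \frac{1}{\binom{n}{k+1}} \sum_{x: ||x||=k+1} \text{slack}_f(x).
\]

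There is no real obstacle here; the only thing to check carefully is the combinatorial identity $(k+1)\binom{n}{k+1} = (n-k)\binom{n}{k}$ used to rewrite the double sum, and the fact that the slack bound $f(y) \leq f(x) - \text{slack}_f(x)$ applies uniformly to all $k+1$ predecessors of $x$ (not just the one achieving the maximum). Since $\text{slack}_f(x) \geq 0$ for a monotone $f$, one recovers Fact \ref{fact: average density increases} as the special case where the slack terms are dropped, confirming that this is indeed a strengthening in the promised direction.
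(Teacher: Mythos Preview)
Your proof is correct and is essentially the same argument as the paper's, just phrased combinatorially rather than probabilistically: where you sum over all edges between levels $k$ and $k+1$ and invoke the identity $(k+1)\binom{n}{k+1}=(n-k)\binom{n}{k}$, the paper bounds $\max_{y\prec x} f(y)\ge \mathbb{E}_{y\prec x}[f(y)]$ and then averages over $x$, using the equivalent probabilistic fact that a uniform $x$ of weight $k+1$ followed by a uniform predecessor yields a uniform $y$ of weight $k$.
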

	\begin{proof}
		For all $x$ with $||x||=k+1$ we have that:
		\[
		f(x)
		=
		\text{slack}_f(x)
		+
		\max_{y \in \{0,1\}^n \text{ s.t. } ||y||=k \text{ and } y \preceq x} f(y)
		\]
		We have that:
		\[
		\max_{y \in \{0,1\}^n \text{ s.t. } ||y||=k \text{ and } y \preceq x} f(y)
		\geq
		\mathbb{E}_{y \sim \{0,1\}^n \text{ conditioned on } ||y||=k \text{ and } y \preceq x}[f(y)]
		\]
		Therefore:
		\[
		f(x)
		\geq
		\text{slack}_f(x)
		+
		\mathbb{E}_{y \sim \{0,1\}^n \text{ conditioned on } ||y||=k \text{ and } y \preceq x}[f(y)]
		\]
		Averaging the both sides, we get:
		\begin{multline}
		\mu_f(k+1)
		\geq
		\frac{1}{\binom{n}{k+1}}
		\cdot
		\sum_{x \in \{0,1\}^n \text{ s.t. } ||x||=k+1}
		\text{slack}_f(x)
		+\\
		\mathbb{E}_{x \sim \{0,1\}^n \text{ conditioned on } ||x||=k+1}
		\mathbb{E}_{y \sim \{0,1\}^n \text{ conditioned on } ||y||=k \text{ and } y\preceq x}[f(y)]=\\
		\frac{1}{\binom{n}{k+1}}
		\cdot
		\sum_{x \in \{0,1\}^n \text{ s.t. } ||x||=k+1}
		\text{slack}_f(x)
		+
		\mathbb{E}_{y \sim \{0,1\}^n \text{ conditioned on } ||y||=k}[f(y)]=\\
		\frac{1}{\binom{n}{k+1}}
		\cdot
		\sum_{x \in \{0,1\}^n \text{ s.t. } ||x||=k+1}
		\text{slack}_f(x)
		+
		\mu_f(k)
		\end{multline}
		Above, the penultimate equality followed from a simple probabilistic fact: if one picks a random $n$-bit string of Hamming weight $k+1$ and then sets to zero a random bit that equals to one, this is equivalent to picking a random $n$-bit string of weight $k$. 
	\end{proof}
	
	Using the Observation \ref{observation: generalized basic fact} repeatedly and recalling that in Definition \ref{definition: monotone slack} we defined $\text{slack}_f(0^n)=f(0^n)$, we get that for all $h$:
	\begin{multline*}
	\mu_{f}(h)
	\geq
	\mu_f(0)+
	\sum_{k=1}^h \frac{1}{\binom{n}{k}}
	\cdot
	\sum_{x \in \{0,1\}^n \text{ s.t. } ||x||=k}
	\text{slack}_f(x)
	=\\
	\sum_{k=0}^h \frac{1}{\binom{n}{k}}
	\cdot
	\sum_{x \in \{0,1\}^n \text{ s.t. } ||x||=k}
	\text{slack}_f(x)
	\geq
	\sum_{k=0}^h
	R_k \cdot \frac{1}{\sum_{j=k}^n \binom{n}{j}}
	\cdot
	\mathbf{1}_{\exists \: x \in \{0,1\}^n: \; ||x||=k \: \land \: \text{slack}_f(x) \neq 0}
	\end{multline*}
	Summing this up over all $h$ and changing the order of summations, we get:
	\begin{multline*}
	1=\sum_{x \in \{0,1\}^n} \rho(x)
	\geq \sum_{x \in \{0,1\}^n} f(x)
	= \sum_{h=0}^n \binom{n}{h} \mu_{f}(h)
	\geq\\
	\sum_{h=0}^n \binom{n}{h} 
	\sum_{k=0}^h
	R_k \cdot \frac{1}{\sum_{j=k}^n \binom{n}{j}}
	\cdot
	\mathbf{1}_{\exists \: x \in \{0,1\}^n: \; ||x||=k \: \land \: \text{slack}_f(x) \neq 0}=\\
	\sum_{k=0}^n \sum_{h=k}^n \binom{n}{h} 
	R_k \cdot \frac{1}{\sum_{j=k}^n \binom{n}{j}}
	\cdot
	\mathbf{1}_{\exists \: x \in \{0,1\}^n: \; ||x||=k \: \land \: \text{slack}_f(x) \neq 0}
	=\\
	\sum_{k=0}^n
	R_k \cdot
	\mathbf{1}_{\exists \: x \in \{0,1\}^n: \; ||x||=k \: \land \: \text{slack}_f(x) \neq 0}
	\end{multline*}
	This finishes the proof of the lemma.
\end{proof}

Now, we prove the following corollary:

\corollaryOfOurMainLemma*

\begin{proof}
	We use Lemma \ref{lemma: slack regret}, setting $\zeta=\epsilon/4$ and
	\[
	R_h=
	\begin{cases}
	\frac{\epsilon^2}{16} &\text{ if $h \leq h_0$} \\
	0 &\text{ otherwise}
	\end{cases}
	\]
	We verify the precondition to Lemma \ref{lemma: slack regret}, by using that $\sum_{x \in \{0,1\}^n} \rho(x)-f(x) \leq \frac{\epsilon}{4}$:
	\begin{multline*}
	\sum_{h=0}^n R_h \cdot \frac{\binom{n}{h}}{\sum_{j=h}^n \binom{n}{j}}
	=
	\sum_{h=0}^{h_0} \frac{\epsilon^2}{16} \cdot \frac{\binom{n}{h}}{\sum_{j=h}^n \binom{n}{j}}
	\leq
	\sum_{h=0}^{h_0} \frac{\epsilon^2}{16} \cdot \frac{\binom{n}{h}}{\sum_{j=h_0}^n \binom{n}{j}}
	\leq\\
	\sum_{h=0}^{h_0} \frac{\epsilon^2}{16} \cdot \frac{\binom{n}{h}}{2^n \cdot \epsilon/4}
	=\frac{\epsilon}{4} \cdot
	\sum_{h=0}^{h_0} \frac{\binom{n}{h}}{2^n}
	\leq \frac{\epsilon}{4} 
	\end{multline*}
	Now, we simply check that properties (1), (2) and (3) of the Lemma directly imply the properties (1), (2) and (3) of the Corollary respectively. This completes the proof.
\end{proof}
To use Lemma \ref{lemma: slack regret}, we need an upper bound on the value of $\frac{\binom{n}{h}}{\sum_{j \geq h}^n \binom{n}{j}}$. The following claim provides such an upper bound:
\begin{claim}
	\label{claim: bound on sum of binomials divided by binomial}
	For all sufficiently large $n$, for all $h$, satisfying $0 \leq h \leq n$, it is the case that:
	\[
	\frac{\binom{n}{h}}{\sum_{j \geq h}^n \binom{n}{j}}
	\leq
	\left(\begin{cases}
	\frac{2}{n^2} &\text{ if $h \leq n/2-\sqrt{n \ln(n)}$}
	\\ \frac{200}{\sqrt{n}} &\text{ if $n/2-\sqrt{n \ln(n)}< h < n/2+\sqrt{n}$} \\ 
	200 \cdot \frac{h-n/2}{n} &\text{ if $h \geq n/2+\sqrt{n}$}\end{cases} \right)
	\]
\end{claim}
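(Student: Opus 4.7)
The quantity $\binom{n}{h}/\sum_{j\geq h}\binom{n}{j}$ is exactly $\Pr_{X\sim\mathrm{Bin}(n,1/2)}[X=h\mid X\geq h]$, which reframes the claim as a bound on the conditional point mass of a symmetric binomial to the right of a threshold. My plan is to split into the three regimes of the statement, using Fact~\ref{fact: boolean cube anti-concentration}, a standard Hoeffding tail bound, and the ratio identity $\binom{n}{j+1}/\binom{n}{j}=(n-j)/(j+1)$ as the only elementary tools.

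For $h\leq n/2-\sqrt{n\ln n}$, Hoeffding gives $\binom{n}{h}\leq 2^n\cdot e^{-2(n/2-h)^2/n}\leq 2^n/n^2$, while the symmetry of $\mathrm{Bin}(n,1/2)$ around $n/2$ yields $\sum_{j\geq h}\binom{n}{j}\geq\sum_{j\geq n/2}\binom{n}{j}\geq 2^{n-1}$, so the ratio is at most $2/n^2$. For $n/2-\sqrt{n\ln n}<h<n/2+\sqrt{n}$, Fact~\ref{fact: boolean cube anti-concentration} bounds the numerator by $(2/\sqrt{n})\cdot 2^n$, and since $h<n/2+\sqrt{n}$ the denominator is at least $\sum_{j\geq\lceil n/2+\sqrt{n}\rceil}\binom{n}{j}$, which is a positive constant fraction of $2^n$ for all large $n$ by Berry--Esseen (or directly by a Stirling estimate comparing those $\Theta(\sqrt{n})$ middle binomials to $\binom{n}{n/2}$). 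Dividing gives a bound of the form $C/\sqrt{n}$, easily within $200/\sqrt{n}$.

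The real work is the third case. Write $m=h-n/2\geq\sqrt{n}$. When $m\geq n/200$ the target $200m/n\geq 1$ is trivial, so the meaningful subregime is $\sqrt{n}\leq m\leq n/200$. Here I will pick a window length $L=\lfloor n/(40m)\rfloor$ and show that every binomial $\binom{n}{j}$ for $h\leq j\leq h+L$ stays within a factor of $2$ of $\binom{n}{h}$. This reduces to the elementary estimate
\[
-\ln\!\frac{\binom{n}{h+L}}{\binom{n}{h}}
\;=\;\sum_{i=0}^{L-1}\ln\!\left(1+\frac{2m+2i+1}{n/2-m-i}\right)
\;\leq\;\frac{2mL+L^2+L}{n/2-m-L},
\]
obtained from $\ln(1+u)\leq u$; plugging in $L=\lfloor n/(40m)\rfloor$ with $m\leq n/200$ bounds the right-hand side by a small constant (so the ratio is $\geq 1/2$). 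Since $\binom{n}{j}$ is decreasing in $j$ for $j\geq n/2$, this gives $\sum_{j=h}^{h+L}\binom{n}{j}\geq(L+1)\binom{n}{h+L}\geq (L+1)\binom{n}{h}/2$, hence $\binom{n}{h}/\sum_{j\geq h}\binom{n}{j}\leq 2/(L+1)\leq 80m/n\leq 200m/n$.

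The main obstacle will be choosing the constants so that the single constant $200$ works uniformly across all three cases; in the third case $L$ must be picked delicately -- short enough that the product of ratios $\binom{n}{h+L}/\binom{n}{h}$ stays bounded away from $0$, but long enough for the $1/(L+1)$ factor to yield the desired $m/n$-sized conditional point mass. Once $L$ is fixed the remaining manipulations are routine arithmetic with $\ln(1+u)\leq u$ and the monotonicity of the binomial sequence past its mode.
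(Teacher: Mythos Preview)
Your proof is correct and, in the first and third regimes, follows essentially the same line as the paper: Hoeffding plus the trivial lower bound $2^{n-1}$ on the tail for $h\leq n/2-\sqrt{n\ln n}$, and a window argument of length $\Theta\!\bigl(n/(h-n/2)\bigr)$ with elementary log bounds on the product of consecutive binomial ratios for $h\geq n/2+\sqrt{n}$. (The paper takes window length $\tfrac{1}{4}\cdot\tfrac{n}{h-n/2}$ and splits off the trivially bounded range at $h>11n/20$ rather than your $m\geq n/200$, and it writes each factor as $(1-a)/(1+b)$ and applies $\ln(1-w)\geq -w/(1-w)$ where you write it as $1+u$ and apply $\ln(1+u)\leq u$; the mechanism is identical.)

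The one substantive difference is the middle range. You bound numerator and denominator separately, invoking Fact~\ref{fact: boolean cube anti-concentration} for the numerator and a Berry--Esseen / Stirling tail estimate to show $\sum_{j\geq n/2+\sqrt{n}}\binom{n}{j}\geq c\cdot 2^n$ for the denominator. The paper instead proves that $h\mapsto \binom{n}{h}\big/\sum_{j\geq h}\binom{n}{j}$ is non-decreasing in $h$, which reduces the middle range to the already-handled endpoint $h=\lceil n/2+\sqrt{n}\rceil$. Your route is shorter and more direct but imports an external normal-approximation fact; the paper's monotonicity argument is fully self-contained within the binomial-ratio toolbox but costs an extra half-page computation.
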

\begin{proof}
	See Appendix A, subsection \ref{appendix subsection: proof of claim bound on sum of binomials divided by binomial}
	
\end{proof}

\subsection{Proof of claim \ref{claim: learning main claim}}

\label{section: proof lerning main claim}

For all $x$ in $\{0,1\}^n$, satisfying $9 \sqrt{n} \leq ||x||$, we define the following quantity:
\begin{multline}
\label{equation definition of phi learning}
\phi(x)
\myeq
\frac{1}{2^{\left \lfloor L_{||x||}\right \rfloor}} \cdot \max_{y \text{ s.t. } y \preceq x \text{ and } ||x||-||y|| = {\left \lfloor L_{||x||}\right \rfloor}} \text{  }
\Pr_{z \sim \rho}[ y \preceq z \preceq x ]
=\\
\frac{1}{2^{\left \lfloor L_{||x||}\right \rfloor}} \: \cdot \max_{y \text{ s.t. } y \preceq x \text{ and } ||x||-||y|| = {\left \lfloor L_{||x||}\right \rfloor }} \text{  }
\sum_{z \text{ s.t. } y \preceq z \preceq x} \rho(z)
\end{multline}

Observe that
since for every such $x$ and $y$ there are $2^{\left \lfloor L_{||x||} \right \rfloor}$ values of $z$ satisfying $y \preceq z \preceq x $, and $\rho$ is a monotone probability distribution, it has to be the case that $\phi(x) \leq \rho(x)$ for all $x$ on which $\phi(x)$ is defined.

More interestingly,
we will be claiming that $\phi$ is (in terms of $L_1$ distance) a good approximation to $\rho$, but first we will show that $\hat{\phi}$ is a good approximation to $\phi$, assuming that the values $L_h$ are not too small:
\begin{claim}
	\label{claim: phi hat is close to phi}
	If it is the case that $\frac{1}{2^n}
	\cdot
	\sum_{h=9 \sqrt{n}}^n
	\binom{n}{h}
	\cdot
	\frac{A}{2^{L_{h}}} \leq \frac{1}{2}$, 
	then, with probability at least $7/8$, it is the case that:
	\begin{equation}
	\label{equation big union bound for phi learning}
	\sum_{x \in \{0,1\}^n \text{ s.t. } 9 \sqrt{n} \leq ||x||}
	\bigg \lvert \hat{\phi}(x)-\phi(x) \bigg \rvert
	\leq
	\frac{\epsilon}{4}
	\end{equation}
\end{claim}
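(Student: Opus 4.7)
My plan is to prove the claim by bounding the expected value of the sum and applying Markov's inequality. Concretely, I will show
\[
\mathbb{E}\Bigl[\sum_{x:\|x\|\geq 9\sqrt{n}}|\hat{\phi}(x)-\phi(x)|\Bigr]\leq\frac{\epsilon}{32},
\]
which immediately implies the desired $\Pr[\sum_x|\hat\phi(x)-\phi(x)|\leq\epsilon/4]\geq 7/8$. For a point $x$ with $h:=\|x\|\geq 9\sqrt{n}$, set $L:=\lfloor L_h\rfloor$ and $Y_x:=\{y\preceq x: \|y\|=h-L\}$; for each $y\in Y_x$ let $p_{x,y}:=\Pr_{z\sim\rho}[y\preceq z\preceq x]$ and $\hat p_{x,y}:=|\{z\in S: y\preceq z\preceq x\}|/N$, so that $\phi(x)=2^{-L}\max_{y\in Y_x}p_{x,y}$ and $\hat\phi(x)=2^{-L}\max_{y\in Y_x}\hat p_{x,y}$.

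The first step is a pointwise reduction: using $|\max_i a_i - \max_i b_i|\leq\max_i|a_i-b_i|$,
\[
|\hat\phi(x)-\phi(x)|\leq 2^{-L}\max_{y\in Y_x}|\hat p_{x,y}-p_{x,y}|.
\]
The second step is concentration per pair: since $N\hat p_{x,y}\sim\mathrm{Bin}(N,p_{x,y})$, we have $\mathbb{E}|\hat p_{x,y}-p_{x,y}|\leq\sqrt{p_{x,y}(1-p_{x,y})/N}\leq\sqrt{p_{x,y}/N}$. The third step is aggregation: I would bound the max by the sum, giving $\mathbb{E}[\max_{y\in Y_x}|\hat p_{x,y}-p_{x,y}|]\leq\sum_{y\in Y_x}\sqrt{p_{x,y}/N}$, and then apply Cauchy--Schwarz twice---once within each $x$ and once across all pairs---to reduce the bound to a product of $\sum_x 4^{-L_{\|x\|}}|Y_x|$ (controlled via condition~(c)) and $\sum_x\sum_{y\in Y_x}p_{x,y}$. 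The latter I would handle by swapping the order of summation: $\sum_x\sum_{y\in Y_x}p_{x,y}=\sum_{z}\rho(z)\sum_h\binom{n-\|z\|}{h-\|z\|}\binom{\|z\|}{h-L_h+\|z\|-\|z\|}$, which can be evaluated by standard binomial identities.

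The main obstacle I anticipate is a clean reconciliation between the combinatorial factor $|Y_x|=\binom{\|x\|}{L_{\|x\|}}$ and the weight $4^{-L_{\|x\|}}$ after the first Cauchy--Schwarz. The values $L_h$ in the algorithm are calibrated so that condition~(c) yields $\sum_h\binom{n}{h}/2^{L_h}\leq 2^{n-1}/A$, and the sample complexity $N=\Theta(2^n(n+\sqrt n)/(A\epsilon^2))$ is tuned accordingly, but the precise accounting to cancel all factors requires care, since $\binom{h}{L}$ can exceed $2^L$ when $L\ll h$. If the naive max-by-sum route turns out to be lossy, I would instead apply Bernstein's inequality to each pair and union-bound over $(x,y)$, splitting by whether $p_{x,y}$ exceeds a threshold $\tau_x\asymp\epsilon^2 A/(n\cdot 2^{L_{\|x\|}})$: below $\tau_x$ the contribution is handled by the first-moment bound $\Pr[\hat p_{x,y}>0]\leq Np_{x,y}$ combined with condition~(c); above $\tau_x$ Bernstein yields $|\hat p_{x,y}-p_{x,y}|\leq O(\sqrt{p_{x,y}/N})$ with high probability and the contribution reduces to the same combinatorial sums. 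Either route should culminate in an $O(\epsilon)$ bound on the expectation, from which Markov's inequality delivers the claim.
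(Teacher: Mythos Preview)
Your primary route (expectation bound plus Markov via Cauchy--Schwarz) does not go through, for exactly the reason you flag. After bounding the max by a sum you must control $\sum_{x}4^{-\lfloor L_{\|x\|}\rfloor}|Y_x|=\sum_{h}\binom{n}{h}\binom{h}{\lfloor L_h\rfloor}4^{-\lfloor L_h\rfloor}$. Near $h=n/2$ with $L_h=\Theta(n^{1/5})$ one has $\binom{h}{L}/4^{L}\asymp (cn^{4/5})^{n^{1/5}}$, so this sum is of order $2^{n}\cdot 2^{\Theta(n^{1/5}\log n)}$, not $2^{n}/A$; no choice of $N$ in the stated range survives the square root. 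The double-counting formula you propose for $\sum_{x}\sum_{y}p_{x,y}$ is also not correct as written (the constraints on $y$ depend simultaneously on $x$ and $z$, not only on $\|z\|$), and in any case the resulting count blows up by a similar binomial factor. So the Markov/Cauchy--Schwarz plan should be abandoned, not patched.

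Your fallback---Bernstein plus a union bound over pairs $(x,y)$---is the right idea and is essentially what the paper does, but the specific splitting you describe is off. Using $\Pr[\hat p_{x,y}>0]\leq Np_{x,y}$ for the ``small'' regime does not bound $|\hat p_{x,y}-p_{x,y}|$: when a single sample lands in the subcube, $\hat p_{x,y}=1/N$ and the error is $\Theta(1/N)$, and summing this over the $\leq 2^{n}\cdot n^{9\sqrt n}$ pairs is again hopeless. The correct move is to apply Chernoff to \emph{each} pair in both regimes and obtain, with failure probability at most $1/(8\cdot 2^{n}\cdot n^{9\sqrt n})$,
\[
\bigl|\hat p_{x,y}-p_{x,y}\bigr|\;\leq\;\frac{\epsilon}{8}\,\max\!\Bigl(\frac{A}{2^{n}},\,p_{x,y}\Bigr).
\]
The total number of pairs is at most $2^{n}\cdot n^{9\sqrt n}$ because $|Y_x|\leq n^{L_{\|x\|}}\leq n^{9\sqrt n}$ (this is where condition~(b) is used), and $N$ was chosen precisely to make the per-pair failure probability this small---note the factor $(n+9\sqrt n+4)$ in $N$. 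On the good event one gets $|\hat\phi(x)-\phi(x)|\leq(\epsilon/8)\max\bigl(A/(2^{\lfloor L_{\|x\|}\rfloor}2^{n}),\,\phi(x)\bigr)$; summing over $x$, the first branch is exactly what condition~(c) bounds by $1$ (after replacing $\lfloor L_h\rfloor$ by $L_h-1$), and the second is at most $\sum_x\phi(x)\leq\sum_x\rho(x)\leq 1$. That yields the claimed $\epsilon/4$ directly, with no Markov step.
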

\begin{proof}
	See Appendix A, subsection \ref{appendix subsection: phi hat is close to phi}, for the proof, which follows using tail bounds.
\end{proof}

Now, we apply Lemma \ref{lemma: slack regret} to $\rho$, with value $\zeta:=\epsilon/100$. For now, we postpone setting the values of $R_h$, which we will do later in our derivation (of course, we will then check that the required constraint is indeed satisfied by these values).

This gives a positive monotone function $f$ that satisfies the three conditions of Lemma \ref{lemma: slack regret}.
We separate all the values of $x$ in $\{0,1\}^n$ for which $9 \sqrt{n} \leq ||x|| $ into two kinds: \textbf{good} and \textbf{bad}. We say that $x$ is \textbf{bad} if there is some $y$ for which $0 \leq ||x||-||y|| <\left \lfloor  L_{||x||} \right \rfloor$ and $\text{slack}_f(y)$ is non-zero. Otherwise, $x$ if  \textbf{good}. Clearly, for a given Hamming weight value, wither every point with this Hamming weight is good, or every such point is bad. 

We can write:
\begin{equation}
\label{equation: decomposition to good and bad learning}
\sum_{x \in \{0,1\}^n \text{ s.t. } 9 \sqrt{n} \leq ||x||}
\lvert \phi(x)-f(x) \rvert
=
\sum_{\text{good } x}
\lvert \phi(x)-f(x) \rvert
+
\sum_{\text{bad } x}
\lvert \phi(x)-f(x) \rvert
\end{equation}

Now, we bound the two terms above separately. If $x$ is good, then it is the case that for all $y$ satisfying $||x||-\lfloor L_{||x||} \rfloor < ||y|| \leq ||x||$ we have $\text{slack}_{f}(x)=0$, and therefore $f(y)=\max_{y' \in \{0,1\}^n \text{ s. t. } y' \preceq y \text{ and } ||y||-||y'||=1} f(y')$. Using this relation recursively, we obtain that:
\[
f(x)=\max_{y \in \{0,1\}^n \text{ s. t. } y \preceq x \text{ and } ||x||-||y||=\lfloor L_{||x||} \rfloor} f(y)
\] 
Therefore, since $f$ is monotone, we obtain that:
\[
f(x)=
\frac{1}{2^{\lfloor L_{||x||} \rfloor}} \cdot \max_{y \text{ s.t. } y \preceq x \text{ and } ||x||-||y|| = \lfloor L_{||x||} \rfloor} \text{  }
\sum_{z \text{ s.t. } y \preceq z \preceq x} f(z)
\]
By Lemma \ref{lemma: slack regret}, it is the case $\rho(x) \geq f(x)$. This, together with the equation above and Equation \ref{equation definition of phi learning} implies:
\[
\phi(x) \geq f(x)
\] 
But we also know that $\rho(x) \geq \phi(x)$. Therefore:
\begin{equation}
\label{equation bound on good points learning}
\sum_{\text{good } x}
\lvert \phi(x)-f(x) \rvert
\leq
\sum_{\text{good } x}
\lvert \rho(x)-f(x) \rvert
\leq
\frac{\epsilon}{4}
\end{equation}
Where the last inequality follows from Lemma \ref{lemma: slack regret}.

Now, we bound the contribution of bad points. Since $\phi(x) \leq \rho(x)$, $f(x) \leq \rho(x)$ and recalling the definition of a bad point, we get:
\begin{multline}
\label{equation: bounding bad one learning}
\sum_{\text{bad } x}
\lvert \phi(x)-f(x) \rvert
\leq 
\sum_{\text{bad } x}
\max(\phi(x),f(x))
\leq 
\sum_{\text{bad } x}
\rho(x)
\leq\\
\sum_{h_2=9 \sqrt{n} }^n \mu_\rho(h_2) \cdot \binom{n}{h_2} \cdot
\mathbf{1}_{\exists \: x \in \{0,1\}^n: \; \left(h_2-\left \lfloor L_{h_2} \right \rfloor < ||x|| \leq h_2\right) \: \land \: \text{slack}_f(x) \neq 0}
\end{multline}
Since Lemma \ref{lemma: slack regret} gives us a bound on a weighed sum of indicator variables of the form
$ \mathbf{1}_{\exists \: x \in \{0,1\}^n: \; ||x||=h \: \land \: \text{slack}_f(x) \neq 0}$, we would like to upper-bound the expression above by such a weighted sum. To do this, to every Hamming weight value $h$ that has a point $x$ with non-zero $\text{slack}_f(x)$ (we call such Hamming weight value $h$ \textbf{slacky}) we ``charge'' every value $h_2$, for which points of Hamming weight $h_2$ are rendered bad because $h$ is slacky. This will happen only if $h_2 \geq h$ and $h_2-\left \lfloor L_{h_2} \right \rfloor < h$ . But since $\left \lfloor L_{h_2} \right \rfloor$ can only decrease as $h_2$ increases, the latter can happen only if $h_2 - \left \lfloor L_h \right \rfloor < h$. Therefore:

\begin{equation}
\sum_{\text{bad } x}
\lvert \phi(x)-f(x) \rvert
\leq
\sum_{h=0}^n \left(\sum_{h_2=h}^{h+\left \lfloor L_{h} \right \rfloor-1} \mu_\rho(h_2) \cdot \binom{n}{h_2}  \right)
\cdot
\mathbf{1}_{\exists \: x \in \{0,1\}^n: \; ||x||=h \: \land \: \text{slack}_f(x) \neq 0}
\end{equation}

Now, to upper-bound $\mu_{\rho}(h_2)$, we need the following claim:

\begin{claim}
	\label{claim: upper bound on level density}
	For any monotone probability distribution $\rho$ it is the case that for all $h$:
	\[
	\mu_{\rho}(h)
	\leq
	\frac{1}{\sum_{j=h}^n \binom{n}{j}}
	\]
\end{claim}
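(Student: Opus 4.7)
The plan is a short, direct argument using monotonicity of the average level density (Fact \ref{fact: average density increases}) together with the normalization $\sum_{x} \rho(x) = 1$.

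First I would rewrite the total probability mass by grouping elements according to their Hamming weight:
\[
1 = \sum_{x \in \{0,1\}^n} \rho(x) = \sum_{j=0}^{n} \binom{n}{j} \, \mu_\rho(j).
\]
Next I would restrict the sum to $j \geq h$, which can only make it smaller, obtaining
\[
1 \geq \sum_{j=h}^{n} \binom{n}{j} \, \mu_\rho(j).
\]
Now I would invoke Fact \ref{fact: average density increases}: since $\rho$ is monotone, $\mu_\rho(j) \geq \mu_\rho(h)$ for every $j \geq h$. Substituting this lower bound on $\mu_\rho(j)$ into the previous inequality gives
\[
1 \geq \mu_\rho(h) \sum_{j=h}^{n} \binom{n}{j},
\]
and dividing through by $\sum_{j=h}^{n}\binom{n}{j}$ yields the claim.

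There is no real obstacle: the entire content is that a non-decreasing sequence of nonnegative averages, weighted by $\binom{n}{j}$, must be bounded because its total sum equals $1$. I would simply be careful to cite Fact \ref{fact: average density increases} (which asserts $\mu_\rho$ is non-decreasing in $h$ for any monotone $\rho$) so that the step of replacing $\mu_\rho(j)$ with $\mu_\rho(h)$ for $j \geq h$ is explicitly justified.
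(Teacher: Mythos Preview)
Your proposal is correct and is exactly the argument the paper has in mind: the paper's proof simply says ``This follows immediately from Fact \ref{fact: average density increases} and that $\sum_{x \in \{0,1\}^n} \rho(x)=1$,'' and you have spelled out precisely those two ingredients.
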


\begin{proof}
	This follows immediately from Fact \ref{fact: average density increases} and that $\sum_{x \in \{0,1\}^n} \rho(x)=1$.
\end{proof}
Claim \ref{claim: upper bound on level density}, Equation \ref{equation: bounding bad one learning} and Claim \ref{claim: bound on sum of binomials divided by binomial} together imply:
\begin{multline}
\label{equation: bounding bad 2 learning}
\sum_{\text{bad } x}
\lvert \phi(x)-f(x) \rvert
\leq
\sum_{h=0}^n \left(\sum_{h_2=h}^{h+\left \lfloor L_{h} \right \rfloor-1} \frac{\binom{n}{h_2}}{\sum_{j=h_2}^n \binom{n}{j}}  \right)
\mathbf{1}_{\left(\substack{\exists \: x \in \{0,1\}^n:\\ \; ||x||=h \: \land \: \text{slack}_f(x) \neq 0}\right)}
\; \leq\\
\sum_{h=0}^n \left(\sum_{h_2=h}^{h+\left \lfloor L_{h} \right \rfloor-1} \left(\begin{cases} \frac{200}{\sqrt{n}} &\text{ if $h_2 < n/2+\sqrt{n}$} \\ 200 \cdot \frac{h_2-n/2}{n} &\text{ if $h_2 \geq n/2+\sqrt{n}$} \end{cases} \right)  \right)
\mathbf{1}_{\left(\substack{\exists \: x \in \{0,1\}^n:\\ \; ||x||=h \: \land \: \text{slack}_f(x) \neq 0}\right)}
\; \leq \\
\sum_{h=0}^n L_h \cdot \left(\begin{cases} \frac{200}{\sqrt{n}} &\text{ if $h+L_h < n/2+\sqrt{n}$} \\ 200 \cdot \frac{h+L_h-n/2}{n} &\text{ if $h+L_h \geq n/2+\sqrt{n}$}\end{cases} \right)  
\mathbf{1}_{\left(\substack{\exists \: x \in \{0,1\}^n:\\ \; ||x||=h \: \land \: \text{slack}_f(x) \neq 0}\right)}
\end{multline}

Now, we claim that:
\begin{equation}
\label{equation: two cases adding L_h does not increase by more than 10}
\left(\begin{cases} \frac{200}{\sqrt{n}} &\text{ if $h+L_h < n/2+\sqrt{n}$} \\ 200 \cdot \frac{h+L_h-n/2}{n} &\text{ if $h+L_h \geq n/2+\sqrt{n}$}\end{cases} \right)
\leq
10 \cdot
\left(\begin{cases} \frac{200}{\sqrt{n}} &\text{ if $h < n/2+\sqrt{n}$} \\ 200 \cdot \frac{h-n/2}{n} &\text{ if $h\geq n/2+\sqrt{n}$}\end{cases} \right)
\end{equation}
This follows by considering three cases (i) $h+L_h < n/2+\sqrt{n}$, in which case this is equivalent to $\frac{200}{\sqrt{n}} \leq \frac{2000}{\sqrt{n}}$, which is trivially true. (ii) $h \geq n/2 + \sqrt{n}$, in which case since $L_h \leq 9 \sqrt{n}$, we have that $\frac{h+L_h-n/2}{n}\leq 10 \cdot \frac{h-n/2}{n}$ (iii) $h+L_h \geq n/2 +\sqrt{n}$, but $h < n/2 +\sqrt{n}$, in which case since $L_h \leq 9 \sqrt{n}$, we have that $\frac{h+L_h-n/2}{n}\leq \frac{\sqrt{n}+L_h}{n}\leq{10}{\sqrt{n}}$.

Combining Equations \ref{equation: bounding bad 2 learning} and \ref{equation: two cases adding L_h does not increase by more than 10}, we get:
\begin{multline}
\label{equation: bounding bad 3 learning}
\sum_{\text{bad } x}
\lvert \phi(x)-f(x) \rvert
\leq\\
\sum_{h=0}^n L_h \cdot 10 \cdot \left(\begin{cases} \frac{200}{\sqrt{n}} &\text{ if $h < n/2+\sqrt{n}$} \\ 200 \cdot \frac{h-n/2}{n} &\text{ otherwise} \end{cases} \right) \cdot 
\mathbf{1}_{\exists \: x \in \{0,1\}^n: \; ||x||=h \: \land \: \text{slack}_f(x) \neq 0}
\end{multline}
Recall that we postponed setting the values of $R_h$. The equation above motivates us to set:
\[
R_h:=\frac{200}{\epsilon} \cdot L_h \cdot \left(\begin{cases} \frac{200}{\sqrt{n}} &\text{ if $h < n/2+\sqrt{n}$} \\ 200 \cdot \frac{h-n/2}{n} &\text{ otherwise} \end{cases} \right) 
\]
Now, we check the constraint on $R_h$ in Lemma \ref{lemma: slack regret}. Using Claim \ref{claim: bound on sum of binomials divided by binomial} and the premise of Claim \ref{claim: learning main claim}:

\begin{multline*}
\sum_{h=0}^n R_h \cdot \frac{\binom{n}{h}}{\sum_{j \geq h}^n \binom{n}{j}}
\leq
\sum_{h=0}^n R_h \cdot \left(\begin{cases}
\frac{2}{n^2} &\text{ if $h \leq n/2-\sqrt{n \ln(n)}$}
\\ \frac{200}{\sqrt{n}} &\text{ if $n/2-\sqrt{n \ln(n)}< h < n/2+\sqrt{n}$} \\ 
200 \cdot \frac{h-n/2}{n} &\text{ if $h \geq n/2+\sqrt{n}$}\end{cases} \right)
= \\
\frac{200}{\epsilon}
\cdot
\sum_{h=0}^n L_h \cdot \left(\begin{cases}
\frac{400}{n^{2.5}} &\text{ if $h \leq n/2-\sqrt{n \ln(n)}$}
\\ \frac{40000}{n} &\text{ if $n/2-\sqrt{n \ln(n)}< h < n/2+\sqrt{n}$} \\ 
40000 \cdot \left(\frac{h-n/2}{n} \right)^2 &\text{ if $h \geq n/2+\sqrt{n}$}\end{cases} \right)
\leq \frac{\epsilon}{100} =\zeta
\end{multline*}
Therefore, Lemma \ref{lemma: slack regret}, together with Equation \ref{equation: bounding bad 3 learning} implies that:

\begin{equation*}
\sum_{\text{bad } x}
\lvert \phi(x)-f(x) \rvert
\leq 
\sum_{h=0}^n \frac{\epsilon}{20} \cdot R_h  \cdot 
\mathbf{1}_{\exists \: x \in \{0,1\}^n: \; ||x||=h \: \land \: \text{slack}_f(x) \neq 0}
\leq
\frac{\epsilon}{20}
\end{equation*}
Now, using triangle inequality and then combining the inequality above with Equations \ref{equation: chernoff plus hoeffding learning}, \ref{equation: decomposition to good and bad learning} and \ref{equation bound on good points learning} we get:

\begin{multline}
\label{bound on everything except super low learning}
\sum_{x \in \{0,1\}^n \text{ s.t. } 9 \sqrt{n} \leq ||x||}
\bigg \lvert \hat{\phi}(x)-\rho(x) \bigg \rvert
\leq\\
\sum_{x \in \{0,1\}^n \text{ s.t. } 9 \sqrt{n} \leq ||x||}
\bigg \lvert \hat{\phi}(x)-\phi(x) \bigg \rvert
+
\sum_{x \in \{0,1\}^n \text{ s.t. } 9 \sqrt{n} \leq ||x||}
\bigg \lvert \phi(x)-\rho(x) \bigg \rvert \leq \\
\frac{\epsilon}{4}+\frac{\epsilon}{100}+\frac{\epsilon}{20} \leq \frac{\epsilon}{2}
\end{multline}
\section{Estimating the distance to uniform}
\begin{figure}
%	\centering
	\caption{}
	\label{algorithm: estimating the distance to uniform efficiently}
	\hfill \newline
%	\noindent\fbox{
%		\parbox{45em}{
			\textbf{Algorithm for the estimation of distance to uniform efficiently} (given sample access from a distribution $\rho$, which is monotone over $\{0,1\}^n$.)
			
			\dotfill
			
			\begin{enumerate}
				\item Pick set $h_0$ to be an integer for which it is the case that:
				\begin{equation}
				\label{equation: h_0 is good}
				\frac{\epsilon}{4}
				\leq
				\Pr_{x \sim \{0,1\}^n}
				[||x|| \geq h_0]
				\leq
				\frac{\epsilon}{2}
				\end{equation}
				Do this by going through every integer candidate $h_{\text{candidate}}$ in the interval and computing the fraction of points $x$ in $\{0,1\}$ for which $||x|| \geq h_{\text{candidate}}$. Finally, pick $h_0$ to be one of $h_{\text{candidate}}$ for which the relation above holds.
				\item Set $N_1:=\frac{32 \ln 2}{\epsilon^2}$. Draw $N_1$ samples from the probability distribution $\rho$ and denote the multiset of these samples as $S_1$.
				\item Set:
				\[
				\hat{d}_1 :=\frac{1}{2} \cdot \frac{
					\bigg \lvert \bigg \{z \in S_1: ||z|| \geq h_0 \bigg\} \bigg \rvert}{N_1}
				\]
				\item Set $L:= \left \lfloor \frac{\sqrt{n} \epsilon^4}{512} \right \rfloor$.
				\item Set 
				\[N_2:=\frac{2^n}{2^L} \cdot \frac{192}{\epsilon^2}
				\cdot
				\bigg(
				n \ln 2+L \ln n +4 \ln 2
				\bigg)
				\]
				Draw $N_2$ samples from the probability distribution $\rho$ and denote the multiset of these samples as $S_2$.
				\item For all $x$, satisfying $L \leq ||x|| < h_0$, compute:
				\[
				\hat{\phi}(x)
				:=
				\frac{1}{2^L} \cdot \frac{\max_{y \text{ s.t. } y \preceq x \text{ and } ||x||-||y|| = L}
					\bigg \lvert \bigg \{z \in S_2: y \preceq z \preceq x \bigg\} \bigg \rvert}{N_2}
				\] 
				Do this by first making a look-up table, which given arbitrary $z \in \{0,1\}^n$ returns the number of times $z$ was encountered in $S_2$. Then, use this look-up table to compute the necessary values of $\lvert \{z \in S_2: y \preceq z \preceq x \} \rvert$  by querying all these values of $z$ in the lookup table and summing the results up. 
				\item Compute the following:
				\[
				\hat{d}_2
				:=
				\frac{1}{2} \cdot
				\sum_{x \text{ s.t. } L \leq ||x|| < h_0}
				\bigg \lvert
				\hat{\phi}(x)-\frac{1}{2^n}
				\bigg \rvert
				\]
				\item Output $\hat{d}_1+\hat{d}_2$.
			\end{enumerate}
		
		\dotfill
%	}}
\end{figure}
In this section we prove our upper-bound on the sample complexity of estimating the distance from uniform of an unknown monotone probability distribution over the Boolean cube. We restate the theorem:
\theoremDistanceToUniformUpperBound*
\begin{proof}
	We present the algorithm in Figure \ref{algorithm: estimating the distance to uniform efficiently}. The number of samples drawn from $\rho$ is $N_1+N_2=\frac{2^{n}}{2^{\Theta_{\epsilon}(\sqrt{n})}}$. The run-time, in turn, is dominated\footnote{Step 1 requires only $2^n\poly(n)$ time, which is less than what step (6) requires. By inspection, other steps require even less run-time. Incidentally, the task in step 1 can be done much faster by randomized sampling, but since this is not the run-time bottleneck, we use this direct approach for the sake of simplicity.} by computing the values of $\hat{\phi}$ in step (6), in which the construction of the lookup table takes $O(n \cdot 2^n)$ time and the time spent computing each $\hat{\phi}(x)$ can be upper bounded by the product of: (i) the number of pairs $(y,z)$ that simultaneously satisfy $y \preceq z \preceq x$ and $||x||-||y|| =L$, which can be upper-bounded by $O(n^L \cdot 2^L)$ and (ii) the time it takes to look up a given $z$ in the lookup table, which can be upper-bounded by $O(n)$. Overall, this gives us a run-time upper bound of $O(2^{n+O_{\epsilon}(\sqrt{n} \log n
		)})$.
	
	Now, the only thing left to prove is correctness. First of all, it is not a priori clear that there exists a value of $h_0$ satisfying Equation \ref{equation: h_0 is good} (on Figure \ref{algorithm: estimating the distance to uniform efficiently}). This is true for the following reason: imagine changing $h_{\text{candidate}}$ from $n$ to $0$ by decrementing it in steps of one. Then $\Pr_{x \in \{0,1\}^n}[||x|| \geq h_{\text{candidate}}]$ will increase from $\frac{1}{2^n}$ to $1$ and by Fact \ref{fact: boolean cube anti-concentration} it will not increase by more than $\frac{2}{\sqrt{n}}$ at any given step. For sufficiently large $n$ we have $\frac{2}{\sqrt{n}}<\frac{\epsilon}{4}$. Then it is impossible to skip over the interval between $\frac{\epsilon}{4}$ and $\frac{\epsilon}{2}$ in just one step of length at most $\frac{2}{\sqrt{n}}$, and therefore Equation \ref{equation: h_0 is good} (on Figure \ref{algorithm: estimating the distance to uniform efficiently}) will be the case for some value of $h_{\text{candidate}}$.

	We decompose the total variation distance between $\rho$ and the uniform distribution into three terms:
	\begin{multline}
	\label{equation: grand scheme for distance to uniform}
	\frac{1}{2} \cdot \sum_{x \in \{0,1\}^n}
	\bigg \lvert \rho(x)-\frac{1}{2^n} \bigg \rvert
	=\\
	\frac{1}{2} \cdot \sum_{\substack{x \in \{0,1\}^n \text{ s.t. }\\ ||x|| \geq h_0}}
	\bigg \lvert \rho(x)-\frac{1}{2^n} \bigg \rvert
	+
	\frac{1}{2} \cdot \sum_{\substack{x \in \{0,1\}^n \text{ s.t. }\\ L \leq ||x|| < h_0}}
	\bigg \lvert \rho(x)-\frac{1}{2^n} \bigg \rvert
	+
	\frac{1}{2} \cdot \sum_{\substack{x \in \{0,1\}^n \text{ s.t. }\\ ||x|| < L}}
	\bigg \lvert \rho(x)-\frac{1}{2^n} \bigg \rvert
	\end{multline}
	We argue that the first term is well approximated by $\hat{d_1}$, the second term is well approximated by $\hat{d_2}$, and the third term is negligible. As the reader will see, out of these three terms, the middle term is the least trivial to prove guarantees for.  
	
	We will first handle the first term: From the triangle inequality, Hoeffding's bound and Equation \ref{equation: h_0 is good} (on Figure \ref{algorithm: estimating the distance to uniform efficiently}) it follows immediately that with probability at least $7/8$ it is the case that:
	\begin{multline}
	\label{equation: bound on d 1}
	\bigg \lvert
	\hat{d}_1
	-
	\frac{1}{2} \cdot
	\sum_{x \in \{0,1\}^n \text{ s.t. } ||x|| \geq h_0}
	\bigg \lvert \rho(x)-\frac{1}{2^n} \bigg \rvert
	\bigg \rvert
	\leq \\
	\bigg \lvert
	\hat{d}_1
	-
	\frac{1}{2} \cdot
	\sum_{x \in \{0,1\}^n \text{ s.t. } ||x|| \geq h_0}
	\rho(x) 
	\bigg \rvert
	+
	\frac{1}{2} \cdot
	\sum_{x \in \{0,1\}^n \text{ s.t. } ||x|| \geq h_0}
	\frac{1}{2^n}
	\leq
	\frac{\epsilon}{8}
	+
	\frac{\epsilon}{4}
	=
	\frac{3\epsilon}{8}
	\end{multline}
	
	Now, we use the two following facts: (i) Since $\sum_x \rho(x)=1$ and $\rho$ is monotone, for every $x$ with $||x|| \leq L$ it should be the case that $\rho(x) \leq \frac{1}{2^{n-L}}$. (ii) The number of different values of $x$ in $\{0,1\}^n$ for which $||x|| \leq L$ can be upper bounded by $n^{L}$. We get for sufficiently large $n$:
	\begin{multline}
	\label{equation: bound on TV contribution of the very low points}
	\frac{1}{2} \cdot
	\sum_{x \in \{0,1\}^n \text{ s.t. } ||x|| < L}
	\bigg \lvert \rho(x)-\frac{1}{2^n} \bigg \rvert
	\leq
	\frac{1}{2} \cdot
	\sum_{x \in \{0,1\}^n \text{ s.t. } ||x|| < L}
	\left( \frac{1}{2^n}+ \rho(x) \right)
	\leq\\
	\frac{1}{2} \cdot
	n^L \cdot \left(\frac{1}{2^n}+\frac{1}{2^{n-L}} \right)
	=
	o(1)
	\leq \frac{\epsilon}{8}
	\end{multline}
	
	The rest of this section will be dedicated to proving the following claim:
	
	\begin{claim}
		\label{claim: main technical claim}
		With probability at least $7/8$ it is the case that:
		\begin{equation}
		\label{equation: bound on d 2}
		\bigg \lvert
		\hat{d}_2
		-
		\frac{1}{2} \cdot
		\sum_{x \in \{0,1\}^n \text{ s.t. } L \leq ||x|| < h_0}
		\bigg \lvert \rho(x)-\frac{1}{2^n} \bigg \rvert
		\bigg \rvert
		\leq \frac{\epsilon}{2}
		\end{equation}
	\end{claim}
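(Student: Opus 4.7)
The plan is to bound $\sum_{x:\, L \leq ||x|| < h_0} |\hat{\phi}(x) - \rho(x)|$ by $\epsilon$ with probability at least $7/8$, since the entrywise reverse triangle inequality applied to $|\hat{\phi}(x) - 2^{-n}|$ and $|\rho(x) - 2^{-n}|$ immediately reduces the claim to this. To decompose this sum, I introduce the population analogue
\begin{equation*}
\phi(x) \myeq \frac{1}{2^L} \max_{y \preceq x,\, ||x||-||y|| = L} \Pr_{z \sim \rho}[y \preceq z \preceq x]
\end{equation*}
of $\hat{\phi}(x)$, and write $|\hat{\phi}(x) - \rho(x)| \leq |\hat{\phi}(x) - \phi(x)| + (\rho(x) - \phi(x))$, where $\phi(x) \leq \rho(x)$ follows from the monotonicity of $\rho$.

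For the random-error term $\sum_x |\hat{\phi}(x) - \phi(x)|$, I would apply a Chernoff/Hoeffding bound to the empirical count $|\{z \in S_2: y \preceq z \preceq x\}|/N_2$ for each pair $(y, x)$ with $y \preceq x$ and $||x|| - ||y|| = L$, and then union-bound over all such pairs (there are at most $2^n \binom{n}{L}$ of them). The sample size $N_2$ in the algorithm is calibrated exactly so that, with probability $\geq 15/16$, the aggregate random error is at most $\epsilon/3$; this closely parallels (and is simpler than) the proof of Claim \ref{claim: phi hat is close to phi} in the learning section.

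The systematic-error term $\sum_x (\rho(x) - \phi(x))$ is where the structural result enters and is the main obstacle. I would apply Corollary \ref{corollary: slack regularity old lemma} to $\rho$ to obtain a positive monotone $f \leq \rho$ with $\sum_x (\rho(x) - f(x)) \leq \epsilon/4$ such that, below $h_0$, the set of Hamming weights at which $f$ has nonzero slack is contained in some set $K$ with $|K| \leq 16/\epsilon^2$. I would call $x$ \emph{good} if $(||x||-L, ||x||] \cap K = \emptyset$ and \emph{bad} otherwise. For good $x$, iterating the zero-slack identity $f(y) = \max_{y' \prec y,\, ||y'||=||y||-1} f(y')$ across the $L$ intermediate levels yields $f(x) = \max_{y \preceq x,\, ||x||-||y||=L} f(y)$, and combining with $f \leq \rho$ gives $f(x) \leq \phi(x) \leq \rho(x)$, so the good-$x$ contribution is at most $\sum_x (\rho(x) - f(x)) \leq \epsilon/4$.

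The bad contribution I would bound crudely by $\sum_{\text{bad}\, x} \rho(x)$, controlled level by level: by Claim \ref{claim: upper bound on level density} together with $\Pr_{x \sim \{0,1\}^n}[||x|| \geq h_0] \geq \epsilon/4$, the total $\rho$-mass on any level $h < h_0$ is at most $\binom{n}{h}/\sum_{j \geq h_0}\binom{n}{j} = O(1/(\epsilon \sqrt{n}))$ by Fact \ref{fact: boolean cube anti-concentration}. Since at most $|K| \leq 16/\epsilon^2$ slacky levels each render at most $L$ levels bad, the total bad mass is $O(L/(\epsilon^3 \sqrt{n}))$, which the algorithm's choice $L = \lfloor \sqrt{n}\epsilon^4/512 \rfloor$ drives below $\epsilon/4$. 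Adding the random, good, and bad pieces yields $\sum_x |\hat{\phi}(x) - \rho(x)| \leq \epsilon$, completing the proof.
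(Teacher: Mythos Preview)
Your proposal is correct and follows essentially the same route as the paper: define the population quantity $\phi$, control $\sum|\hat\phi-\phi|$ by Chernoff plus a union bound over all $(x,y)$ pairs, then invoke Corollary~\ref{corollary: slack regularity old lemma} and the good/bad split to control the systematic error via the level-density estimate $\mu_\rho(h)\le 4/(\epsilon 2^n)$ and Fact~\ref{fact: boolean cube anti-concentration}. The only cosmetic difference is that the paper routes the final triangle inequality through $|\hat\phi-\phi|+|\phi-f|+|f-\rho|$ rather than your $|\hat\phi-\phi|+(\rho-\phi)$, but the underlying bounds are identical.
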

	Once this is proven, it follows by a union bound that with probability at least $3/4$ both Equations \ref{equation: bound on d 1} and \ref{equation: bound on d 2} will be the case. This, together with Equation \ref{equation: bound on TV contribution of the very low points}, when substituted into Equation \ref{equation: grand scheme for distance to uniform} will imply that:
	\[
	\bigg \lvert
	\sum_{x \in \{0,1\}^n}
	\bigg \lvert \rho(x)- \frac{1}{2^n} \bigg \rvert-(\hat{d}_1+\hat{d}_2)
	\bigg \rvert
	\leq \epsilon
	\]
	This will imply the correctness of our algorithm.
\end{proof}

\subsection{Proof of Claim \ref{claim: main technical claim}}
For all $x$ in $\{0,1\}^n$, satisfying $L \leq ||x|| < h_0$, we define the following quantity:
\begin{multline}
\label{equation definition of phi}
\phi(x)
\myeq
\frac{1}{2^L} \cdot \max_{y \text{ s.t. } y \preceq x \text{ and } ||x||-||y|| = L} \text{  }
\Pr_{z \sim \rho}[ y \preceq z \preceq x ]
=\\
\frac{1}{2^L} \cdot \max_{y \text{ s.t. } y \preceq x \text{ and } ||x||-||y|| = L} \text{  }
\sum_{z \text{ s.t. } y \preceq z \preceq x} \rho(z)
\end{multline}

Observe that
since for every such $x$ and $y$ there are $2^L$ values of $z$ satisfying $y \preceq z \preceq x $, and $\rho$ is a monotone probability distribution, it has to be the case that $\phi(x) \leq \rho(x)$ for all $x$ on which $\phi(x)$ is defined.

We will be claiming that $\phi(x)$ is (in terms of $L_1$ distance) a good approximation to $\rho(x)$, but first we will show that $\hat{\phi}(x)$ is a good approximation to $\phi(x)$:
\begin{claim}
	With probability at least $7/8$, it is the case that:
	\begin{equation}
	\label{equation big union bound for phi}
	\sum_{x \in \{0,1\}^n \text{ s.t. } L \leq ||x|| < h_0}
	\bigg \lvert \hat{\phi}(x)-\phi(x) \bigg \rvert
	\leq
	\frac{\epsilon}{4}
	\end{equation}
\end{claim}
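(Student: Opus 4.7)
The plan is to bound $|\hat\phi(x)-\phi(x)|$ pointwise via a Bernstein tail bound combined with a union bound, and then to control the final $\ell_1$ sum by Cauchy--Schwarz. For each admissible pair $(x,y)$ with $y\preceq x$ and $\|x\|-\|y\|=L$, write
\[
q(x,y):=\Pr_{z\sim\rho}[y\preceq z\preceq x],\qquad \hat q(x,y):=\frac{|\{z\in S_2:\,y\preceq z\preceq x\}|}{N_2},
\]
so that $N_2\,\hat q(x,y)$ is a Binomial$(N_2,q(x,y))$ random variable. Since the maximum is $1$-Lipschitz in sup norm, the basic pointwise inequality is
\[
|\hat\phi(x)-\phi(x)|\;\leq\;\frac{1}{2^L}\max_{y:\,y\preceq x,\,\|x\|-\|y\|=L}\bigl|\hat q(x,y)-q(x,y)\bigr|.
\]

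Next, I would apply Bernstein's inequality
\[
\Pr\!\bigl[|\hat q(x,y)-q(x,y)|>t\bigr]\;\leq\;2\exp\!\left(-\frac{N_2\,t^2}{2q(x,y)+2t/3}\right)
\]
and choose $t_{x,y}=c_1\epsilon^2\cdot 2^L/2^n+c_2\epsilon\sqrt{2^L q(x,y)/2^n}$ with constants $c_1,c_2$ large enough that the right-hand side is at most $1/(8\cdot\#\text{pairs})$. The number of admissible pairs is at most $2^{n-L}\binom{n}{L}$, whose natural logarithm is at most $n\ln 2+L\ln n+O(1)$, exactly matching the factor appearing in the definition of $N_2$. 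A union bound then guarantees that $|\hat q(x,y)-q(x,y)|\leq t_{x,y}$ holds simultaneously for every pair with probability at least $7/8$.

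On this event, using $\max_y q(x,y)=2^L\phi(x)$ and summing the pointwise bound over $x$ gives
\[
\sum_{x:\,L\leq\|x\|<h_0}|\hat\phi(x)-\phi(x)|\;\leq\;\frac{c_1\epsilon^2}{2^n}\cdot 2^n+\frac{c_2\epsilon}{\sqrt{2^n}}\sum_x\sqrt{\phi(x)}.
\]
The first contribution is $O(\epsilon^2)$. For the second, Cauchy--Schwarz together with $\sum_x\phi(x)\leq\sum_x\rho(x)\leq 1$ (already observed in the excerpt) yields $\sum_x\sqrt{\phi(x)}\leq\sqrt{2^n\cdot 1}=\sqrt{2^n}$, so the second contribution is $O(\epsilon)$. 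Tuning $c_1,c_2$ makes the total at most $\epsilon/4$.

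The main obstacle is that a uniform additive Hoeffding bound over all pairs is insufficient: a sum of $\epsilon/4$ over $2^n$ values of $x$ would force a per-pair deviation of order $\epsilon\cdot 2^L/2^n$, whereas Hoeffding with $N_2\sim 2^n/2^L$ samples only delivers $\Theta(2^{L/2}/2^{n/2})$, exponentially too large. The $\sqrt{q(x,y)}$ scaling supplied by Bernstein is therefore essential, and Cauchy--Schwarz converts the resulting $\sqrt{\phi(x)}$ terms into a sum governed by the single global constraint $\sum_x\phi(x)\leq 1$.
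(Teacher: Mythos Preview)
Your approach is correct and reaches the same conclusion, but the paper takes a slightly more direct route. Instead of Bernstein followed by Cauchy--Schwarz, the paper uses the multiplicative form of Chernoff's bound to show that, for each pair $(x,y)$, with failure probability at most $1/(8\cdot 2^n\cdot n^L)$,
\[
\frac{1}{2^L}\bigl|\hat q(x,y)-q(x,y)\bigr|\;\le\;\frac{\epsilon}{8}\max\!\Bigl(\frac{1}{2^n},\,\frac{q(x,y)}{2^L}\Bigr).
\]
After the union bound, taking the maximum over $y$ gives $|\hat\phi(x)-\phi(x)|\le\frac{\epsilon}{8}\max(1/2^n,\phi(x))$, and summing this directly yields
\[
\sum_x|\hat\phi(x)-\phi(x)|\;\le\;\frac{\epsilon}{8}\Bigl(2^n\cdot\frac{1}{2^n}+\sum_x\phi(x)\Bigr)\;\le\;\frac{\epsilon}{8}(1+1)=\frac{\epsilon}{4},
\]
using only $\sum_x\phi(x)\le\sum_x\rho(x)\le 1$. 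No Cauchy--Schwarz is needed because the multiplicative error already scales linearly in $\phi(x)$ rather than as $\sqrt{\phi(x)}$.

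Your Bernstein route is a legitimate alternative: the $\sqrt{q}$ scaling is weaker than the paper's linear-in-$q$ scaling, but Cauchy--Schwarz compensates exactly, since $\sum_x\sqrt{\phi(x)}\le\sqrt{2^n\sum_x\phi(x)}\le\sqrt{2^n}$. With the specific $N_2$ in the algorithm, one can check (via the standard ``solve Bernstein for $t$'' form $t\gtrsim\sqrt{2q\ln(1/\delta)/N_2}+\ln(1/\delta)/N_2$) that absolute constants $c_2\approx 1/\sqrt{96}$ and $c_1\approx 1/288$ suffice, giving a total of roughly $0.11\epsilon<\epsilon/4$. One small remark: your pair count $2^{n-L}\binom{n}{L}$ is in fact exact (choose the $L$ coordinates where $x$ and $y$ differ, then the common values on the remaining $n-L$ coordinates), whereas the paper uses the cruder bound $2^n\cdot n^L$; either way the logarithm matches the factor in $N_2$.
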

\begin{proof}
	We claim that for any pair $(x,y)$, such that $\phi$ is defined on $x$ and $||x||-||y|| = L$, with probability at least $1-\frac{1}{8\cdot 2^n \cdot n^L}$ the following holds:
	\begin{equation}
	\label{equation: chernoff plus hoeffding}
	\frac{1}{2^L}
	\bigg \lvert
	\Pr_{z \sim \rho}[ y \preceq z \preceq x ]
	-
	\frac{
		\lvert  \{z \in S: y \preceq z \preceq x \}  \rvert}{N_2}
	\bigg \rvert
	\leq 
	\frac{\epsilon}{8} \cdot \max \left(\frac{1}{2^n}
	,\frac{1}{2^L} \Pr_{z \sim \rho}[ y \preceq z \preceq x ] \right)
	\end{equation}
	We use Chernoff's bound to prove this as follows. Denote by $q$ the value $\Pr_{z \sim \rho}[ y \preceq z \preceq x ]$. If $q \geq \frac{2^L}{2^n}$ then by Chernoff's bound we have:
	\begin{multline*}
	\Pr \left[
	\bigg \lvert
	\lvert  \{z \in S: y \preceq z \preceq x \}  \rvert
	-qN_2
	\bigg \rvert \geq \frac{\epsilon}{8} qN_2
	\right]
	\leq
	2\exp\left(-\frac{1}{3} \left(\frac{\epsilon}{8} \right)^2 qN_2 \right)
	\leq\\
	2\exp\left(-\frac{1}{3} \left(\frac{\epsilon}{8} \right)^2 \frac{2^L}{2^n} \cdot N_2 \right)
	=
	\frac{1}{8\cdot 2^n \cdot n^L}
	\end{multline*}
	Otherwise, if we have $q < \frac{2^L}{2^n}$, then by Chernoff's bound:
	\begin{multline*}
	\Pr \left[
	\bigg \lvert
	\lvert  \{z \in S: y \preceq z \preceq x \}  \rvert
	-qN_2
	\bigg \rvert \geq \frac{\epsilon}{8} \cdot \frac{2^L}{2^n} \cdot N_2
	\right]
	\leq
	2\exp\left(-\frac{1}{3} \left(\frac{\epsilon}{8} \cdot \frac{2^L}{2^n} \cdot \frac{1}{q} \right)^2 q N_2 \right)
	\leq\\
	2\exp\left(-\frac{1}{3} \left(\frac{\epsilon}{8} \right)^2 \frac{2^L}{2^n} \cdot N_2 \right)
	=
	\frac{1}{8\cdot 2^n \cdot n^L}
	\end{multline*}
	
	Now, by taking a union bound, it follows that with probability $7/8$ for all such pairs $(x,y)$ Equation \ref{equation: chernoff plus hoeffding} will be the case. For all $x$ on which $\phi$ is defined it then will be the case that:
	\[
	\bigg \lvert
	\hat{\phi}(x)-\phi(x)
	\bigg \rvert
	\leq 
	\frac{\epsilon}{8} \cdot \max \left(\frac{1}{2^n}
	,\phi(x) \right)
	\]
	Summing this for all $x$ in the domain of $\phi$ we get:
	\begin{multline*}
	\sum_{x \in \{0,1\}^n \text{ s.t. } L \leq ||x|| < h_0}
	\bigg \lvert \hat{\phi}(x)-\phi(x) \bigg \rvert
	\leq
	\frac{\epsilon}{8}
	\cdot
	\left(
	2^n \cdot \frac{1}{2^n}
	+
	\sum_{x \in \{0,1\}^n \text{ s.t. } L \leq ||x|| < h_0} \phi(x)
	\right)
	\leq\\
	\frac{\epsilon}{8}
	\cdot
	\left(
	1
	+
	\sum_{x \in \{0,1\}^n \text{ s.t. } L \leq ||x|| < h_0} \rho(x)
	\right)
	\leq
	\frac{\epsilon}{4}
	\end{multline*}
\end{proof}

Now, we apply Corollary \ref{corollary: slack regularity old lemma} to $\rho$. This gives a positive monotone function $f$ that satisfies the three conditions of Corollary \ref{corollary: slack regularity old lemma}.
We separate all the values of $x$ in $\{0,1\}^n$ for which $L \leq ||x|| < h_0 $ into two kinds: \textbf{good} and \textbf{bad}. Recall that by Corollary \ref{corollary: slack regularity old lemma} an element $x$ of $\{0,1\}^n$ for which $L \leq ||x|| < h_0 $ can have $\text{slack}_{f}(x)\neq 0$ only if $||x||=k_i$ for some $i$ between $1$ and $t$. We say that $x$ is \textbf{bad} if there is some $k_i$ for which $0 \leq ||x||-k_i \leq L$. Otherwise, $x$ if  \textbf{good}.

We can write:
\begin{equation}
\label{equation: decomposition to good and bad}
\sum_{x \in \{0,1\}^n \text{ s.t. } L \leq ||x|| < h_0}
\lvert \phi(x)-f(x) \rvert
=
\sum_{\text{good } x}
\lvert \phi(x)-f(x) \rvert
+
\sum_{\text{bad } x}
\lvert \phi(x)-f(x) \rvert
\end{equation}

Now, we bound the two terms above separately. If $x$ is good, then it is the case that for all $z$ satisfying $||x||-L \leq ||z||\leq ||x||$ we have $\text{slack}_{f}(x)=0$, and therefore $f(z)=\max_{z' \in \{0,1\}^n \text{ s. t. } z' \preceq z \text{ and } ||z||-||z'||=1} f(z')$. Using this relation recursively, we obtain that:
\[
f(x)=\max_{z \in \{0,1\}^n \text{ s. t. } z \preceq x \text{ and } ||x||-||z||=L} f(z)
\] 
Therefore, since $f$ is monotone, we obtain that:
\[
f(x)=
\frac{1}{2^L} \cdot \max_{y \text{ s.t. } y \preceq x \text{ and } ||x||-||y|| = L} \text{  }
\sum_{z \text{ s.t. } y \preceq z \preceq x} f(z)
\]
By Corollary \ref{corollary: slack regularity old lemma}, it is the case $\rho(x) \geq f(x)$. This, together with the equation above and Equation \ref{equation definition of phi} implies:
\[
\phi(x) \geq f(x)
\] 
But we also know that $\rho(x) \geq \phi(x)$. Therefore:
\begin{equation}
\label{equation bound on good points}
\sum_{\text{good } x}
\lvert \phi(x)-f(x) \rvert
\leq
\sum_{\text{good } x}
\lvert \rho(x)-f(x) \rvert
\leq
\frac{\epsilon}{4}
\end{equation}
Where the last inequality follows from Corollary \ref{corollary: slack regularity old lemma}.

Now, we bound the contribution of bad points. 

\begin{multline*}
\sum_{\text{bad } x}
\lvert \phi(x)-f(x) \rvert
\leq 
\sum_{\text{bad } x}
\max(\phi(x),f(x))
\leq 
\sum_{\text{bad } x}
\rho(x)
=\\
\sum_{k \in [L, h_0] \text{ s.t. for some $k_i$: $\lvert k-k_i\rvert \leq L$} } \mu_\rho(k) \cdot \binom{n}{k}
\end{multline*}

Now, by Claim \ref{claim: upper bound on level density} we have $\mu_\rho(k) \leq \frac{1}{2^n} \cdot \frac{4}{\epsilon}$ and by Fact \ref{fact: boolean cube anti-concentration} we have that $\binom{n}{k} \leq \frac{2}{\sqrt{n}} \cdot 2^n$. Combining these two facts with the inequality above we get:

\[\sum_{\text{bad } x}
\lvert \phi(x)-f(x) \rvert
\leq
\left(
\frac{1}{2^n} \cdot \frac{4}{\epsilon}
\right) \cdot
\left(
\frac{2}{\sqrt{n}} \cdot 2^n
\right)
\cdot
\left(
L \cdot t
\right)
=
\frac{4}{\epsilon} \cdot
\frac{2}{\sqrt{n}}
\cdot
L \cdot t 
\]
Substituting the value of $L$ and the upper bound on $t$ from Corollary \ref{corollary: slack regularity old lemma} we get:
\[
\sum_{\text{bad } x}
\lvert \phi(x)-f(x) \rvert
\leq
\frac{4}{\epsilon} \cdot
\frac{2}{\sqrt{n}}
\cdot
\frac{\epsilon^4 \sqrt{n}}{512}
\cdot \frac{16}{\epsilon^2}
=
\frac{\epsilon}{4}
\]
Combining this with Equations \ref{equation: decomposition to good and bad} and \ref{equation bound on good points} we get:
\begin{equation}
\label{equation: phi and f are close to each other}
\sum_{x \in \{0,1\}^n \text{ s.t. } L \leq ||x|| < h_0}
\lvert \phi(x)-f(x) \rvert
\leq
\frac{\epsilon}{2}
\end{equation}

Overall, we have:
\begin{multline*}
\bigg \lvert
\sum_{x \in \{0,1\}^n \text{ s.t. } L \leq ||x|| < h_0}
\lvert \rho(x)-1/2^n  \rvert
-
\sum_{x \in \{0,1\}^n \text{ s.t. } L \leq ||x|| < h_0}
\lvert \hat{\phi}(x)-1/2^n  \rvert
\bigg \rvert
\leq\\
\sum_{x \in \{0,1\}^n \text{ s.t. } L \leq ||x|| < h_0}
\lvert \hat{\phi}(x)- \rho(x)  \rvert
\leq
\sum_{x \in \{0,1\}^n \text{ s.t. } L \leq ||x|| < h_0}
\lvert \hat{\phi}(x) - \phi(x)  \rvert
+\\
\sum_{x \in \{0,1\}^n \text{ s.t. } L \leq ||x|| < h_0}
\lvert \phi(x)-f(x)  \rvert
+
\sum_{x \in \{0,1\}^n \text{ s.t. } L \leq ||x|| < h_0}
\lvert f(x) - \rho(x)  \rvert
\end{multline*}
This three terms can be bound using respectively Equation \ref{equation big union bound for phi}, Corollary \ref{corollary: slack regularity old lemma} and Equation \ref{equation: phi and f are close to each other}. This gives us:
\begin{multline*}
\bigg \lvert
2 \cdot 
\hat{d}_2
-
\sum_{x \in \{0,1\}^n \text{ s.t. } L \leq ||x|| < h_0}
\bigg \lvert \rho(x)-\frac{1}{2^n} \bigg \rvert
\bigg \rvert
=\\
\bigg \lvert
\sum_{x \in \{0,1\}^n \text{ s.t. } L \leq ||x|| < h_0}
\lvert \rho(x)-1/2^n  \rvert
-
\sum_{x \in \{0,1\}^n \text{ s.t. } L \leq ||x|| < h_0}
\lvert \hat{\phi}(x)-1/2^n  \rvert
\bigg \rvert
\leq
\epsilon
\end{multline*}

Therefore, with probability at least $7/8$ Equation \ref{equation: bound on d 2} holds, which proves Claim \ref{claim: main technical claim} and completes the proof of correctness.
\section{Estimating the support size}
In this section we prove our upper-bound on the sample complexity of estimating the support size of an unknown monotone probability distribution over the Boolean cube. Recall that a probability distribution $\rho$ is \textbf{well-behaved} if for every $x$ either $\rho(x)=0$ or $\rho(x) \geq 1/2^n$.
We restate the theorem:
\theoremSupportUpperBound*
\begin{proof}
	The algorithm we use is listed in Figure \ref{algorithm: estimating the support size}.
	\begin{figure}
	%	\centering
		\caption{}
		\label{algorithm: estimating the support size}
		\hfill \newline
	%	\noindent\fbox{
	%		\parbox{45em}{
				\textbf{Algorithm for the estimation of support size.} (given sample access from the distribution.)
				
				\dotfill
				
				\begin{enumerate}
					\item Set \[M_1= \frac{2^n}{2^{\frac{ \epsilon^2}{64}\sqrt{n}}} \left(\ln{\frac{32}{\epsilon}}+1 \right)\]
					\item Take $M_1$ samples from the probability distributions. Call the set of these samples $S_1$.
					\item Set \[M_2=\frac{32 \ln 2}{\epsilon^2}\]
					\item Pick $M_2$ elements of $\{0,1\}^n$ uniformly at random. Call these samples $S_2$.
					\item We say that a point $y$ is \textbf{covered} if in $S_1$ there exists at least one $z$, so that $z \preceq y$. One can check if a point $y$ is covered by going through all the $M_1$ elements in $S_1$. Using this checking procedure, compute the fraction $\hat{\eta}$ of the elements in $S_2$ that are covered.
					\item Output $\hat{\eta}$.
				\end{enumerate}
	%	}}

\dotfill
	
\end{figure}
	
	Clearly, the sample complexity is: 
	\[O\left(\frac{2^n}{2^{\frac{ \epsilon^2}{64}\sqrt{n}}} \left(\ln{\frac{32}{\epsilon}}+1 \right)\right)=\frac{2^n}{2^{\Theta_{\epsilon}(\sqrt{n})}}\]
	In turn, the run-time is:
	\[O\left(\frac{2^n}{2^{\frac{ \epsilon^2}{64}\sqrt{n}}} \left(\ln{\frac{32}{\epsilon}}+1 \right)\cdot \frac{32 \ln 2}{\epsilon^2}\right)=\frac{2^n}{2^{\Theta_{\epsilon}(\sqrt{n})}}\]
	
	Now, all is left to prove is correctness. 
	
	Let $\eta$ denote the fraction of elements in $\{0,1\}^n$ that are covered by our samples in $S_1$. Then, a random element of $\{0,1\}^n$ is covered with probability $\eta$. Therefore, by the Hoeffding bound it follows that:
	\begin{equation}
	\label{equation algorithm for support size 1}
	\Pr_{S_2}\left[|\hat{\eta}-\eta|>\frac{\epsilon}{4}\right]
	\leq
	2 \exp\left(-2\left(\frac{\epsilon}{4}\right)^2 M_2\right) = \frac{1}{
		8}
	\end{equation}
	The last equality follows by substituting the value of $M_2$.
	
	Since $\rho$ is monotone, it has to be the case that every point that is covered is in the support of $\rho$. Hence, the support size of $\rho$ is at least $\eta \cdot 2^n$. 
	
	Now, all we need to show is that  $\eta \cdot 2^n$ is not likely to be much smaller than the support size of $\rho$. We call a point $x$ in the support of $\rho$ \textbf{good} if there are at least $2^{\frac{ \epsilon^2}{64} \sqrt{n}}$ points $y$ each of which satisfying: (i) $y$ belongs to the support of $\rho$. (ii) $x \preceq y$. If a point in the support of $\rho$ is not good, then it is \textbf{bad}. We will show that the bad points are few, while a lot of the good points are likely to be covered.
	
	Let $f_{\text{support}}$ be defined as follows:
	\[
	f_{\text{support}} \myeq \begin{cases} 1 \text{ if } \rho(x) \neq 0 \\ 0 \text{ otherwise}\end{cases}
	\]
	In other words, $f_{\text{support}}$ is the indicator function of the support of $\rho$. Since $\rho$ is a monotone probability distribution, $f_{\text{support}}$ is a monotone function. Therefore, applying Lemma \ref{lemma BHST}, there exists a function $g=g_1 \lor ... \lor g_t$ that $\epsilon/4$-approximates $f_{\text{support}}$, where $t \leq 8/\epsilon$ and each $g_i$ is a monotone DNF with terms of width exactly $k_i$. Additionally, $g(x) \leq f(x)$ for all $x$ in $\{0,1\}^n$
	
	\begin{claim}
		For all $i$, $g_i$ contains at most $\frac{\epsilon^2}{32} \cdot 2^n$ bad points. 
	\end{claim}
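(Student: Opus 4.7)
The plan is to exploit the fact that each $g_i$ is a monotone DNF with terms of a single fixed width $k_i$ together with $g(x) \leq f_{\text{support}}(x)$ to show that any point satisfying $g_i$ lies below a very large set of support points, which will force bad points in $g_i$ to concentrate near the top of the cube.

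First, I would make the following structural observation. Suppose $g_i(x)=1$. Then $x$ satisfies some term $T$ of $g_i$, and since $T$ is a monotone conjunction, every $y$ with $x \preceq y$ also satisfies $T$, hence $g_i(y)=1$, hence $g(y)=1$. Since $g(y) \leq f_{\text{support}}(y)$, every such $y$ lies in the support of $\rho$. There are exactly $2^{n-\|x\|}$ points $y \in \{0,1\}^n$ with $x \preceq y$, so $x$ has at least $2^{n-\|x\|}$ support points above it. Consequently, if $x$ is a bad point satisfied by $g_i$, then $2^{n-\|x\|} < 2^{\frac{\epsilon^2}{64}\sqrt{n}}$, i.e., $\|x\| > n - \frac{\epsilon^2}{64}\sqrt{n}$.

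Hence every bad $x$ with $g_i(x)=1$ lies in one of at most $\lceil \frac{\epsilon^2}{64}\sqrt{n}\rceil$ top Hamming levels. Applying Fact \ref{fact: boolean cube anti-concentration}, each such level has at most $\frac{2}{\sqrt{n}}\cdot 2^n$ points, so the total number of bad points in $g_i$ is bounded by
\[
\left\lceil \tfrac{\epsilon^2}{64}\sqrt{n}\right\rceil \cdot \tfrac{2}{\sqrt{n}}\cdot 2^n \;\leq\; \tfrac{\epsilon^2}{32}\cdot 2^n,
\]
for sufficiently large $n$, which is exactly the claimed bound. There is no genuine obstacle here once one notices that a monotone DNF with satisfied term forces the entire up-set above $x$ into the support; the rest is a one-line counting argument using the standard binomial tail bound.
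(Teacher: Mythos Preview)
Your argument reads the definition of ``good'' literally, as requiring many support points $y$ with $x \preceq y$, i.e., above $x$. Unfortunately that inequality in the paper's stated definition is a typo: both the covering argument that follows (a sample $z$ covers $x$ exactly when $z \preceq x$, so one needs many support points \emph{below} $x$, each of mass at least $1/2^n$, to guarantee that $x$ gets covered) and the paper's own proof of this very claim require ``good'' to mean many support points $y$ with $y \preceq x$. Under the literal reading your reasoning is valid, but notice that it never actually uses the fixed-width-$k_i$ structure: once $g_i(x)=1$ you only invoke monotonicity, and indeed any $x$ in the support already has $2^{n-\|x\|}$ support points above it by monotonicity of $\rho$ alone, without reference to $g_i$ at all. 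That the key hypothesis went unused should have been a red flag.

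Under the intended definition your up-set observation is irrelevant, and this is exactly where the width-$k_i$ structure becomes essential. The paper argues as follows: if $g_i(x)=1$ then $x$ satisfies some term supported on a coordinate set $H_1$ with $|H_1|=k_i$. When $\|x\| \geq k_i + \tfrac{\epsilon^2}{64}\sqrt{n}$, choose a set $H_2$ of $\tfrac{\epsilon^2}{64}\sqrt{n}$ coordinates outside $H_1$ on which $x$ equals $1$; any $y$ that is $1$ on $H_1$, arbitrary on $H_2$, and $0$ elsewhere still satisfies the term (hence lies in the support via $g \leq f_{\text{support}}$) and has $y \preceq x$. This yields $2^{\frac{\epsilon^2}{64}\sqrt{n}}$ support points below $x$, so $x$ is good. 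Consequently the bad points with $g_i(x)=1$ are confined to the band $k_i \leq \|x\| < k_i + \tfrac{\epsilon^2}{64}\sqrt{n}$ (not the top of the cube), after which the same level-counting via Fact~\ref{fact: boolean cube anti-concentration} gives the bound.
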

	\begin{proof}
		Recall that $g_i$ is $k_i$-regular, and therefore every point $x$ on which $g_i(x)=1$ needs to have Hamming weight $\lvert \lvert x \rvert \rvert \geq k_i$.
		
		\begin{claim}
			If $x$ satisfies  $g_i(x)=1$ and $\lvert \lvert x \rvert \rvert \geq k_i+\frac{ \epsilon^2}{64} \sqrt{n}$, then $x$ has to be good
		\end{claim}
		\begin{proof}
			Since $g_i$ is a DNF and $g_i(x)=1$ then $x$ satisfies at least one of the terms of $g_i$. If there are more than one, arbitrarily pick one of them. Let this term be
			\[
			t(y)=\bigwedge_{j \in H_1} y_j
			\] 
			The width of this AND has to be $k_i$, therefore $|H|=k_i$. Since $\lvert \lvert x \rvert \rvert \geq k_i+\frac{\epsilon^2}{64} \sqrt{n}$, there must be $\frac{ \epsilon^2}{64} \sqrt{n}$ values of $j$ for which $x_j =1$ but $j$ is not in $H_1$. Denote the set of these values of $j$ as $H_2$. 
			
			Now, consider an element $y \in\{0,1\}^n$ satisfying the criteria:
			\begin{itemize}
				\item For all $j$ in $H_1$, $y_j=1$.
				\item For all $j$ neither in $H_1$ nor in $H_2$, $y_j=0$.
			\end{itemize}
			Clearly, $t(y)=1$, which implies $g_i(y)=1$, $g(y)=1$, and $f_{\text{support}}(y)=1$. Also, for all $j$, we have $y_j \leq x_j$, and therefore $y \preceq x$. Finally, for all $j$ in $H_2$ the value of $y_j$ can be set arbitrarily to zero or one, and therefore there are $2^{|H_2|}$ such points, which is at least $2^{\frac{ \epsilon^2}{64} \sqrt{n}}$. Therefore, $x$ is a good point.    
			
		\end{proof}  
		
		Thus, we can upper-bound the number of bad points $x$ on which $g_i(x)=1$ by:
		\begin{multline*}
		\bigg \lvert \bigg\{x \in \{0,1\}^n: g(x)=1 \text{ and }k_i+\frac{\epsilon^2}{64} \sqrt{n}  > \sum_j x_j \geq k_i \bigg\} \bigg \rvert
		\leq\\
		\bigg \lvert \bigg\{x \in \{0,1\}^n: k_i+\frac{\epsilon^2}{64} \sqrt{n}  > \sum_j x_j \geq k_i \bigg\} \bigg \rvert
		=\sum_{j=k_i}^{k_i+\frac{\epsilon^2}{64} \sqrt{n}-1} \binom{n}{j} \leq
		\frac{ \epsilon^2}{64} \sqrt{n} \binom{n}{n/2}
		\end{multline*}
		
		By Fact \ref{fact: boolean cube anti-concentration}, for sufficiently large $n$, it is the case that $\binom{n}{n/2} \leq 2 \cdot \frac{2^n}{\sqrt{ n}}$. This implies that the expression above is upper-bounded by $\frac{\epsilon^2}{32} \cdot 2^n$, which completes the proof of this claim.
	\end{proof}
	
	Since $g=g_1 \lor ... \lor g_t(x)$, our claim implies the following:
	
	\begin{multline*}
	\bigg \lvert \bigg\{ x: g(x)=1 \text{ and $x$ is bad}
	\bigg\} \bigg \rvert
	\leq
	\sum_{i=1}^t \bigg \lvert \bigg\{ x: g_i(x)=1 \text{ and $x$ is bad}
	\bigg\} \bigg \rvert
	\leq\\ t \cdot \frac{\epsilon^2}{32} \cdot 2^n
	\leq \frac{8}{\epsilon} \cdot \frac{\epsilon^2}{32} \cdot 2^n=
	\frac{\epsilon}{4} \cdot 2^n
	\end{multline*}
	In addition, there could be at most $\frac{\epsilon}{4} \cdot 2^n$ bad points among the points on which $f_\text{support}$ and $g$ disagree. Thus, in total, there are at most $\frac{\epsilon}{2} \cdot 2^n$ bad points. 
	
	Finally, we need to argue that it is likely that many of the good points get covered:
	\begin{claim}
		Suppose there are $G$ good points. Then, with probability at least $7/8$ it will be the case that at least $1-\epsilon/4$ fraction of these good points are covered.
	\end{claim}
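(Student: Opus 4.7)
The plan is to apply Markov's inequality to the number of uncovered good points, after first establishing a strong per-point coverage lower bound. I would begin by fixing an arbitrary good point $x$ and lower bounding $\Pr_{z \sim \rho}[z \preceq x]$, the probability that a single sample from $\rho$ covers $x$. The construction inside the preceding sub-claim actually exhibits, for every good $x$, at least $T := 2^{\epsilon^2 \sqrt{n}/64}$ distinct points $y$ in the support of $\rho$ with $y \preceq x$ (obtained by pinning the coordinates in $H_1$ to $1$ and letting the coordinates in $H_2$ vary freely). Well-behavedness of $\rho$ gives $\rho(y) \geq 1/2^n$ for each such $y$, hence
\[
\Pr_{z \sim \rho}[z \preceq x] \;\geq\; T/2^n.
\]
This inequality is the quantitative heart of the whole argument.

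Next, by independence of the $M_1$ samples in $S_1$, the probability that none of them is $\preceq x$ is at most
\[
(1 - T/2^n)^{M_1} \;\leq\; \exp(-M_1 T/2^n) \;=\; \exp\bigl(-\ln(32/\epsilon) - 1\bigr) \;<\; \epsilon/32,
\]
where in the equality I substitute the chosen value $M_1 = (2^n/T)(\ln(32/\epsilon) + 1)$. So each good point is uncovered with probability at most $\epsilon/32$.

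Finally, by linearity of expectation the expected number of uncovered good points is at most $G\epsilon/32$, and Markov's inequality yields
\[
\Pr\bigl[\#\{\text{uncovered good points}\} > G\epsilon/4\bigr] \;\leq\; \frac{\epsilon/32}{\epsilon/4} \;=\; \frac{1}{8},
\]
so with probability at least $7/8$ at most an $\epsilon/4$ fraction of the good points are uncovered, which is precisely the claim. The only real (mild) obstacle is the first step: one needs to track the partial-order direction of the witnesses produced by the preceding sub-claim's construction and combine it with well-behavedness to obtain the $T/2^n$ per-sample bound; everything afterwards is routine probabilistic arithmetic.
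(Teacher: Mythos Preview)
Your argument is correct and mirrors the paper's exactly: bound each good point's non-coverage probability by $\epsilon/32$ via well-behavedness and the choice of $M_1$, then apply Markov's inequality to the count of uncovered good points. One small clarification: the $T$ witnesses $y \preceq x$ for an \emph{arbitrary} good $x$ are guaranteed directly by the definition of ``good'' rather than by the preceding sub-claim's $H_1/H_2$ construction (which only certifies that certain specific $x$ meet that definition); your tracking of the direction $y \preceq x$ is indeed the correct one needed for the covering argument.
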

	\begin{proof}
		For every good point $x$ there exist least $2^{\frac{\epsilon^2}{64} \sqrt{n}}$ values of $y$ for which i) $x \preceq y$ and ii) $y$ is in the support of $\rho$. Since $x \preceq y$, if $y$ is ever picked from the distribution, then $x$ will be covered. Since $y$ is in the support of $\rho$, and $\rho$ is well-behaved, we have $\rho(y) \geq \frac{1}{2^n}$. Together, these imply that the probability that a random sample from $\rho$ covers $x$ is at least $\frac{2^{\frac{ \epsilon^2}{64} \sqrt{n}}}{2^n}$. Hence, the probability that any of the $M_1$ i.i.d. samples taken from $\rho$ does not cover $x$ is at most:
		\[
		\left(1- \frac{2^{\frac{\epsilon^2}{64} \sqrt{n}}}{2^n}\right)^{M_1}
		=\left(1- \frac{2^{\frac{\epsilon^2}{64} \sqrt{n}}}{2^n}\right)^{\frac{2^n}{2^{\frac{ \epsilon^2}{64}\sqrt{n}}} \left(\ln{\frac{32}{\epsilon}}+1 \right)}
		\leq
		\frac{1}{e^{\ln \frac{32}{\epsilon}}}=\frac{\epsilon}{
			32}
		\] 
		
		Let $C$ denote a random variable, whose value equals to the number of the good points (out of total $G$) covered after taking $M_1$ i.i.d. samples from $\rho$. 
		
		The value of $C$ has to satisfy these two constraints: (i) It has to be between $0$ and $G$ (ii) By linearity of expectation, $E[C]\geq (1-\frac{\epsilon}{32})G$. Thus, to finish the proof of the Lemma, it is sufficient to show the following claim:
		\begin{claim}
			If, for some fixed $G$, a random variable $C$ is supported on $[0,G]$ and $E[C] \geq (1-\frac{\epsilon}{32})G$, then $\Pr[C\geq( 1- \epsilon/4)G] \geq 7/8$.
		\end{claim}
		\begin{proof}
			This is immediate from Markov's inequality for the random variable $G-C$.
		\end{proof}
	\end{proof}
	Now, we put it all together. Suppose that the bad events we previously identified do not happen. In particular, we know that with probability at least $7/8$ we have:
	\[
	\bigg\lvert \hat{\eta} - \eta \bigg \rvert \leq \frac{\epsilon}{4}
	\]
	Additionally, we also know that with probability at least $7/8$ it is the case that:
	\begin{multline*}
	\bigg\lvert \frac{\big \lvert \left\{x: \substack{f_{\text{support}}(x)=1 \text{ and}\\ \text{$x$ is good and covered} }
		\right\}  \big \rvert}{2^n} - \frac{ \big\lvert \left\{ x: \substack{f_{\text{support}}(x)=1 \text{ and }\\\text{$x$ is good}}
		\right\}  \big\rvert}{2^n} \bigg \rvert
	\leq\\
	\frac{\epsilon}{4} \cdot \frac{\lvert \{ x:f_{\text{support}}(x)=1 \text{ and $x$ is good}
		\}  \rvert}{2^n}
	\end{multline*}
	By union bound, the probability that none of this bad events happens is at least $3/4$, which we will henceforth assume. Using the inequalities above together with the fact that the fraction of bad points is at most $\epsilon/2$ we get: 
	\begin{multline*}
	\bigg\lvert \hat{\eta}-\frac{\lvert \{ x: f_{\text{support}}(x)=1
		\}  \rvert}{2^n} \bigg \rvert
	\leq
	\bigg\lvert \hat{\eta} - \eta \bigg \rvert
	+
	\bigg\lvert \eta - \frac{\lvert \{ x: f_{\text{support}}(x)=1
		\}  \rvert}{2^n} \bigg \rvert
	=\\
	\bigg\lvert \hat{\eta} - \eta \bigg \rvert
	+
	\bigg\lvert \frac{\lvert \{ x: f_{\text{support}}(x)=1 \text{ and $x$ is covered}
		\}  \rvert}{2^n} - \frac{\lvert \{ x: f_{\text{support}}(x)=1
		\}  \rvert}{2^n} \bigg \rvert
	\leq
	\bigg\lvert \hat{\eta} - \eta \bigg \rvert
	+\\
	\bigg\lvert \frac{\big\lvert \left\{ x: \substack{f_{\text{support}}(x)=1 \text{ and}\\ \text{$x$ is good and covered}}
		\right\}  \big\rvert}{2^n} - \frac{\big\lvert \left\{ x: \substack{ f_{\text{support}}(x)=1 \text{ and }\\ \text{$x$ is good}}
		\right\}  \big\rvert}{2^n} \bigg \rvert
	+\\
	\frac{\lvert \{ x: f_{\text{support}}(x)=1 \text{ and $x$ is bad}
		\}  \rvert}{2^n}
	\leq
	\frac{\epsilon}{4}+\frac{\epsilon}{4} \cdot \frac{\lvert \{ x: f_{\text{support}}(x)=1 \text{ and $x$ is good}
		\}  \rvert}{2^n} + \frac{\epsilon}{2}
	\leq\\
	\frac{\epsilon}{4}+\frac{\epsilon}{4}+\frac{\epsilon}{2}=\epsilon
	\end{multline*}
	This completes the proof of correctness.
\end{proof}
\section{A lower bound on tolerant testing of uniformity}
In this section we prove a sample complexity lower bound on the problem of tolerantly testing the uniformity of an unknown monotone probability distribution over $\{0,1\}^n$: the task of distinguishing a distribution that is $o(1)$-close to uniform from a distribution that is sufficiently far from uniform. Recall the theorem:
\theoremDistanceToUniformLowerBound*
\begin{proof}
	A basic building block of our construction is the following:
	\begin{definition}
		For a member of the Boolean cube $x$, the \textbf{subcube distribution} $S_x$ is the probability distribution that picks $y$ uniformly, subject to $y \succeq x$.  
	\end{definition}
	All our distributions will be mixtures of such subcube distributions. For all the mixtures we will use, each subcube in the mixture is given the same weight. This method involving subcube distributions was used in \cite{rubinfeldservedio2009testing} to prove property testing lower bounds for monotone probability distributions.
	
	We construct $\Delta_{\text{Close}}$ to have only one member, which is equal to the uniform mixture of $S_x$ for all $\binom{n}{n^{0.5-0.01}}$ values of $x$ with Hamming weight $n^{0.5-0.01}$.
	
	We define a random member of $\Delta_{\text{Far}}$ to be the uniform mixture of $\frac{1}{2} 2^{n^{0.5-0.01}}$ subcube distributions $S_{x_j}$, where each of the $x_j$ is picked randomly among all the members of the Boolean cube with Hamming weight $n^{0.5-0.01}$. 
	
	We show that any member of $\Delta_{\text{Far}}$ is sufficiently far from uniform by upper-bounding the size of its support (i.e. the number of elements that have non-zero probability). Each of the subcube distributions has a support size of $2^{n-n^{0.5-0.01}}$. The support size of a mixture of distributions is at most the sum of the supports sizes of the respective distributions. Therefore, the support size of a member of $\Delta_{\text{Far}}$ is at most: \[2^{n-n^{0.5-0.01}} \cdot \frac{1}{2} 2^{n^{0.5-0.01}}=\frac{1}{2} 2^{n}\] This is sufficient to conclude that any member of $\Delta_{\text{Far}}$ is $1/2$-far from uniform.
	
	A random member $D_1$ of $\Delta_{\text{Far}}$ and the sole member $D_2$ of $\Delta_{\text{Close}}$ cannot be reliably distinguished using only $o\left(2^{\frac{n^{0.5-0.01}}{2}}\right)$ samples. This follows by the argument used in \cite{rubinfeldservedio2009testing}: Because of the number of samples, with probability at least $0.99$, the samples drawn from a random distribution from $D_1$ will all be from different subcube distributions. Also with probability at least $0.99$, this will also be true for the sole distribution of $D_2$. If both of these things happen (which is the case with probability at least $0.98$), the samples will be statistically indistinguishable. Thus, no tester can distinguish between $D_1$ and $D_2$ with an advantage greater than $0.02$.
	
	Finally, we need to prove that $D_2$ is $o(1)$-close to the uniform distribution. Here, the proof goes as follows. Both $D_2$ and the uniform distribution are symmetric with respect to a change of indices. This implies that the distance between these probability distributions equals to the distance between random variables $R_2$ and $R_1$, where $R_1$ is distributed as the Hamming weight of a random sample from $D_2$, whereas $R_2$ is distributed as the Hamming weight of uniformly random element of the Boolean cube. It is not hard to see that $R_1$ and $R_2$ are distributed according to binomial distributions with slightly different parameters. Now, the problem is equivalent to proving that the two following probability distributions are $o(1)$-close in total variation distance:
	\begin{itemize}
		\item A sum of $n$ i.i.d. uniform random variables from $\{0,1\}$.
		\item A sum of $n-n^{0.5-0.01}$ i.i.d. uniform random variables from $\{0,1\}$.
	\end{itemize}
	
	It is convenient to first bound the variation distance between 1) the sum of $k$ i.i.d. uniform random variables from $\{0,1\}$ and 2) $k+1$ i.i.d. uniform random variables from $\{0,1\}$, where $k$. We write the total variation distance as:
	\begin{multline*}
	\frac{1}{2^k}
	+
	\sum_{i=1}^{k-1} \bigg \lvert   \frac{1}{2^k}\binom{k}{i} -  \frac{1}{2^{k-1}} \binom{k-1}{i} \bigg\rvert
	=
	\frac{1}{2^k}
	\left(1+\sum_{i=1}^{k-1} \bigg \lvert   \binom{k-1}{i} -  \binom{k-1}{i-1} \bigg\rvert
	=
	\right)=\\
	\frac{1}{2^k}
	\left(1+\sum_{i=1}^{(k-1)/2}\left(  \binom{k-1}{i} -  \binom{k-1}{i-1}\right)
	+
	\sum_{i=(k-1)/2}^{k-1}\left(  \binom{k-1}{i-1} - \binom{k-1}{i} \right)
	\right)
	= \\
	\frac{1}{2^k}
	\left(
	1+\binom{k-1}{(k-1)/2}-1
	+\binom{k-1}{(k-1)/2}-1
	\right)
	=O \left(\frac{1}{\sqrt{k}}\right)
	\end{multline*}
	We telescoped the sums, and used the inequality that for all $k$, we have that $\binom{k}{k/2} \leq O\left(\frac{2^k}{\sqrt{k}} \right)$. For simplicity, we assumed above that $k-1$ is even, the odd case can be handled analogously.
	Thus, we have an upper bound of $O(1/\sqrt{k})$ on the total variation distance. 
	
	Using this, together with the triangle inequality for total variation distance, we bound the variation distance between 1) the sum of $n$ i.i.d. uniform random variables from $\{0,1\}$ and 2) the sum of $n-n^{0.5-0.01}$ i.i.d. uniform random variables from $\{0,1\}$ by \[O\left(\frac{n^{0.5-0.01}}{{n}^{0.5}}\right)=o(1)\].
	
	This finishes the proof.
\end{proof}
\bibliography{mybib}

\section{Appendix A}
\subsection{Verifying the conditions on $L_h$.}
\label{appendix subsection: verifying the conditions on L_h}
Recall that we defined $A$ and $L_h$ as follows:
\begin{itemize}
	\item $A:=\frac{1}{2n} \cdot e^{\frac{1}{2000} \cdot n^{1/5}}$
	\item For all $h \geq n/2$, we set $L_h:=\max\left(\log\left(2 n A \cdot \frac{\binom{n}{h}}{2^n} \right
	),0\right)$
	\item For all $h$, satisfying $n/2 > h \geq 9 \sqrt{n}$, we set:
	$L_h:= L_{n/2}=\log\left(2 n A \cdot \frac{\binom{n}{n/2}}{2^n} \right)$.
\end{itemize}

Here we prove that these values of $A$ and $L_h$ satisfy the following four conditions:
\begin{itemize}
	\item[a)] As a function of $h$, $L_h$ is non-increasing.
	\item[b)] For all $h$, we have that $L_h \leq 9 \sqrt{n}$.
	\item[c)] \[\frac{1}{2^n}
	\cdot 
	\sum_{h=9 \sqrt{n}}^n
	\binom{n}{h}
	\cdot
	\frac{A}{2^{L_{h}}} \leq \frac{1}{2}\]
	\item[d)] \[\sum_{h=9 \sqrt{n}}^n L_h \cdot \left(\begin{cases}
	\frac{400}{n^{2.5}} &\text{ if $h \leq n/2-\sqrt{n \ln(n)}$} 
	\\ \frac{40000}{n} &\text{ if $n/2-\sqrt{n \ln(n)}< h < n/2+\sqrt{n}$} \\ 40000 \cdot \left(\frac{h-n/2}{n}\right)^2 &\text{ if $h \geq n/2+\sqrt{n}$}\end{cases} \right)\leq \frac{\epsilon^2}{20000}\]
\end{itemize}

We will need the following standard fact can be proven, for example, by comparing $\sum_{i=0}^N i^k$ and $\int_{i=0}^N i^k \: di$:
\begin{fact}
	\label{fact: sum of kth powers}
	For any positive constant $k$ and for sufficiently large $n$, it is the case that:
	$
	\sum_{i=0}^n i^k=(1+o(1))\:\frac{n^{k+1}}{k+1}
	$.
\end{fact}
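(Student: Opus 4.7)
The plan is to sandwich $\sum_{i=0}^n i^k$ between two integrals of $x^k$, exploiting the monotonicity of $f(x)=x^k$ on $[0,\infty)$ (which holds because $k>0$), and then show that both bounding integrals agree asymptotically with $n^{k+1}/(k+1)$. This is essentially the hint given in the surrounding text.

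More concretely, for the lower bound I would observe that on each interval $[i-1,i]$ with $i\geq 1$ we have $x^k\leq i^k$ pointwise, so $\int_{i-1}^{i} x^k\,dx \leq i^k$. Summing over $i=1,\dots,n$ and using $0^k=0$ (since $k>0$) gives $\int_0^n x^k\,dx = n^{k+1}/(k+1) \leq \sum_{i=0}^n i^k$. For the matching upper bound, the inequality $x^k\geq i^k$ on $[i,i+1]$ yields $i^k \leq \int_i^{i+1} x^k\,dx$; summing over $i=0,\dots,n$ gives $\sum_{i=0}^n i^k \leq \int_0^{n+1} x^k\,dx = (n+1)^{k+1}/(k+1)$.

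To conclude, I would factor $(n+1)^{k+1} = n^{k+1}(1+1/n)^{k+1}$. Since $k$ is a fixed positive constant, $(1+1/n)^{k+1}=1+O_k(1/n) = 1+o(1)$, so the upper bound becomes $(1+o(1))\cdot n^{k+1}/(k+1)$. Combined with the lower bound this yields the desired asymptotic equality $\sum_{i=0}^n i^k = (1+o(1))\,n^{k+1}/(k+1)$. The argument is a standard Riemann-sum comparison and presents no real obstacle; the only minor point of care is that the constant hidden inside the $o(1)$ depends on $k$, which is permitted because $k$ is assumed fixed while $n\to\infty$.
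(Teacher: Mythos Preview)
Your proof is correct and follows exactly the integral-comparison approach that the paper itself suggests (the paper does not spell out a proof, only indicates that one compares $\sum_{i=0}^n i^k$ with $\int_0^n x^k\,dx$). Your sandwich argument via monotonicity of $x^k$ is precisely the standard execution of that hint.
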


The truth of conditions (a) and (b) follows immediately by inspection. In fact a statement stronger than (b) is the case: for sufficiently large $n$ we have $L_h\leq \log(n \cdot A) \leq 2 \cdot n^{1/5} $. Regarding condition (c), we have:
\begin{multline*}
\frac{1}{2^n}
\cdot
\sum_{h=9 \sqrt{n}}^n
\binom{n}{h}
\cdot
\frac{A}{2^{L_{h}}}
=\\
\frac{A}{2^n}
\left(
\sum_{h=9 \sqrt{n}}^{n/2}
\binom{n}{h}
\frac{1}{2nA} \cdot \frac{2^n}{\binom{n}{n/2}}
+
\sum_{h=n/2}^n
\binom{n}{h}
\cdot
\min \left(\frac{1}{2nA} \cdot \frac{2^n}{\binom{n}{h}} ,1\right)
\right)
\leq
\sum_{h=9 \sqrt{n}}^n \frac{1}{2n} \leq \frac{1}{2}
\end{multline*}
Finally, recall that for all $h$, we have $L_h \leq  2 \cdot n^{1/5}$. For sufficiently large $n$, we have:
\begin{multline}
\label{equation verifying condition d 1}
\sum_{h=9 \sqrt{n}}^n L_h \cdot \left(\begin{cases}
\frac{400}{n^{2.5}} &\text{ if $h \leq n/2-\sqrt{n \ln(n)}$} 
\\ \frac{40000}{n} &\text{ if $n/2-\sqrt{n \ln(n)}< h < n/2+\sqrt{n}$} \\ 40000 \cdot \left(\frac{h-n/2}{n}\right)^2 &\text{ if $h \geq n/2+\sqrt{n}$}\end{cases} \right)\
\leq \\
\sum_{h=9 \sqrt{n}}^{n/2-\sqrt{n \ln n}} 2 \cdot n^{1/5} \cdot \frac{400}{n^{2.5}}
+
\sum_{h=n/2-\sqrt{n \ln n}}^{n+\sqrt{n}} 2 \cdot n^{1/5} \cdot \frac{40000}{n}
+
\sum_{h=n/2+\sqrt{n}}^n 40000 \cdot \left(\frac{h-n/2}{n}\right)^2 \cdot L_h
\leq \\
\frac{\epsilon^2}{40000}+\sum_{h=n/2+\sqrt{n}}^n 40000 \cdot \left(\frac{h-n/2}{n}\right)^2 \cdot L_h
\end{multline}
We now bound the last term using Hoeffding's inequality to bound the value of $\binom{n}{h}$, making a change of variables with $i:=\frac{n-h}{2}$ and then using Fact \ref{fact: sum of kth powers} to bound the resulting summation. Precisely, we have the following chain of inequalities (some of which are only true for sufficiently large $n$):
\begin{multline*}
\sum_{h=n/2+\sqrt{n}}^n 40000 \cdot \left(\frac{h-n/2}{n}\right)^2 \cdot L_h
\leq\\
\sum_{h=n/2}^n 40000 \cdot \left(\frac{h-n/2}{n}\right)^2 \cdot \: \max\left(\log\left(2n A \cdot \frac{\binom{n}{h}}{2^n} \right
),0\right)
\leq \\
\sum_{h=n/2}^n 40000 \cdot \left(\frac{h-n/2}{n}\right)^2 \cdot \; \max\left(\log\left(2n A \cdot \exp \left(-2\frac{(h-n/2)^2}{n} \right) \right
),0\right)
=\\\sum_{i=0}^{\sqrt{\frac{n}{2} \ln(2nA)}} 
\frac{40000}{\ln 2} \cdot \left( \frac{i}{n} \right)^2
\left( 
\ln (2nA) - 2 \cdot \frac{i^2}{n}
\right)
=\\
\frac{40000}{\ln 2} \cdot
\left((1+o(1)) \: \frac{\left(\sqrt{\frac{n}{2} \ln(2nA)}\right)^3}{3n^2} \:  
\ln (2nA) - (1+o(1)) \:2 \cdot \frac{\left(\sqrt{\frac{n}{2} \ln(2nA)}\right)^5}{5n^3}
\right)
\end{multline*}

Finally, simplifying and substituting the value of $A$ we get:
\[
\sum_{h=n/2+\sqrt{n}}^n 40000 \cdot \left(\frac{h-n/2}{n}\right)^2 \cdot L_h
\leq
(1+o(1)) \cdot
\frac{40000}{\ln 2} \cdot \frac{\sqrt{2}}{30 \sqrt{n}}
\left( \ln(2nA) \right)^{5/2}
\leq\frac{\epsilon^2}{40000}
\]
Condition (d) is verified by combining Equation \ref{equation verifying condition d 1} with the equation above.

\subsection{Proof of Claim \ref{claim: bound on sum of binomials divided by binomial}}
\label{appendix subsection: proof of claim bound on sum of binomials divided by binomial}
Here we prove that for all sufficiently large $n$, for all $h$, satisfying $0 \leq h \leq n$, it is the case that:
\[
\frac{\binom{n}{h}}{\sum_{j \geq h}^n \binom{n}{j}}
\leq
\left(\begin{cases}
\frac{2}{n^2} &\text{ if $h \leq n/2-\sqrt{n \ln(n)}$}
\\ \frac{200}{\sqrt{n}} &\text{ if $n/2-\sqrt{n \ln(n)}< h < n/2+\sqrt{n}$} \\ 
200 \cdot \frac{h-n/2}{n} &\text{ if $h \geq n/2+\sqrt{n}$}\end{cases} \right)
\]

We first handle the case when $h \geq n/2+\sqrt{n}$. If, furthermore, $h > 11n/20$, then it is sufficient to prove that $\frac{\binom{n}{h}}{\sum_{j \geq h}^n \binom{n}{j}} \leq 10$, which is trivially true. Thus, we now assume that $h \leq 11n/20$

It is the case that:
\begin{equation}
\label{equation: ratio of consequtive binomials.} \frac{\binom{n}{k}}{\binom{n}{k-1}}=\frac{1-\frac{k-n/2-1}{n/2}}{1+\frac{k-n/2}{n/2}}
\end{equation}
Therefore, we can write:
\[
\frac{\sum_{j \geq h}^n \binom{n}{j}}{\binom{n}{h}}
=
\sum_{j=h}^n  \prod_{k=h+1}^j
\frac{1-\frac{k-n/2-1}{n/2}}{1+\frac{k-n/2}{n/2}}
\]
Since $n/2 +\sqrt{n} \leq h \leq 11n/20$, for sufficiently large $n$ we have that $h+\frac{1}{4} \cdot \frac{n}{h-n/2}\leq n$. Using this, we can truncate the sum above, and then lower-bound the result by the product of the smallest summand with the total number of summands, getting:
\[
\frac{\sum_{j \geq h}^n \binom{n}{j}}{\binom{n}{h}}
\geq
\sum_{j=h}^{h+\frac{1}{4} \cdot \frac{n}{h-n/2}}  \prod_{k=h+1}^j
\frac{1-\frac{k-n/2-1}{n/2}}{1+\frac{k-n/2}{n/2}}
\geq
\frac{1}{4} \cdot
\frac{n}{h-n/2} \cdot \prod_{k=h+1}^{h+\frac{1}{4} \cdot\frac{n}{h-n/2}}
\frac{1-\frac{k-n/2-1}{n/2}}{1+\frac{k-n/2}{n/2}}
\]
Now, we analogously lower-bound the product by lower-bounding each of the factors, and then use the fact that since $h \geq n/2+\sqrt{n}$, it is the case that $ \frac{n}{h-n/2} \leq h-n/2$. We get:
\begin{multline*}
\frac{\sum_{j \geq h}^n \binom{n}{j}}{\binom{n}{h}} \geq
\frac{1}{4} \cdot
\frac{n}{h-n/2} \cdot \left(
\frac{1-\frac{h+\frac{1}{4} \cdot\frac{n}{h-n/2}-n/2-1}{n/2}}{1+\frac{h+\frac{1}{4} \cdot \frac{n}{h-n/2}-n/2}{n/2}}\right)^{
	\frac{1}{4} \cdot \frac{n}{h-n/2}}
\geq\\ \frac{1}{4} \cdot
\frac{n}{h-n/2} \cdot \left(
\frac{1-1.25\frac{h-n/2}{n/2}}{1+1.25\frac{h-n/2}{n/2}}\right)^{\frac{1}{4} \cdot \frac{n}{h-n/2}}
\end{multline*}
Finally, we use the fact that for all $w$ between zero and one we have that $\frac{1}{1+w}=1-w+w^2-...\geq 1-w$. We get:
\[
\frac{\sum_{j \geq h}^n \binom{n}{j}}{\binom{n}{h}} \geq
\frac{1}{4} \cdot
\frac{n}{h-n/2} \cdot \left(
1-1.25\frac{h-n/2}{n/2}\right)^{\frac{1}{2} \cdot \frac{2n}{h-n/2}}
\]

Now, recall that for any value $w$ between zero and one, we have that $\ln(1-w)=-\sum_{i=1}^{\infty} \frac{w^i}{i} \geq -\sum_{i=1}^{\infty} w^i=-\frac{w}{1-w}$. Using this, and recalling that $h\leq 11n/20$, we get that:
\[
\ln
\left(
1-1.25\frac{h-n/2}{n/2}\right)
\geq
-\frac{1.25\frac{h-n/2}{n/2}}{1-1.25\frac{h-n/2}{n/2}}
\geq
-\frac{1.25\frac{h-n/2}{n/2}}{1-1.25\frac{11n/20-n/2}{n/2}}
=-\frac{20}{7} \frac{h-n/2}{n}
\]
Combining the two previous equations together we get:
\[
\frac{\sum_{j \geq h}^n \binom{n}{j}}{\binom{n}{h}} \geq
\frac{1}{4} \cdot \frac{n}{h-n/2} \cdot \exp
\left(
-\frac{20}{7} \frac{h-n/2}{n}
\cdot
\frac{1}{2} \cdot
\frac{n}{h-n/2}
\right)
\geq
\frac{1}{200} \cdot \frac{n}{h-n/2}
\]
This completes the proof in the case $h \geq n/2+\sqrt{n}$.

Given our bound in the range $h \geq n/2+\sqrt{n}$, to show the desired bound in the range $n/2-\sqrt{n \ln(n)} < h < n/2 + \sqrt{n}$ it is sufficient to show that $\frac{\binom{n}{h}}{\sum_{j \geq h}^n \binom{n}{j}}$ is non-decreasing, as a function of $h$. If $h < n/2$, this follows immediately, because, as a function of $h$, the numerator is non-decreasing, whereas the denominator is decreasing. If $h \geq n/2$, then using Equation \ref{equation: ratio of consequtive binomials.}, we get: 
\begin{multline*}
\frac{\sum_{j \geq h+1}^n \binom{n}{j}}{\binom{n}{h+1}}
=
\frac{1+\frac{h+1-n/2}{n/2}}{1-\frac{h-n/2}{n/2}}
\cdot
\frac{1}{\binom{n}{h}}
\cdot
\sum_{j \geq h+1}^n \frac{1-\frac{j-n/2-1}{n/2}}{1+\frac{j-n/2}{n/2}} \binom{n}{j-1}
\leq\\
\frac{1}{\binom{n}{h}}
\cdot
\sum_{j \geq h+1}^n \binom{n}{j-1}
\leq
\frac{1}{\binom{n}{h}}
\cdot
\sum_{j \geq h+1}^{n+1} \binom{n}{j-1}
=
\frac{\sum_{j \geq h}^n \binom{n}{j}}{\binom{n}{h}}
\end{multline*}

Which implies that $\frac{\binom{n}{h+1}}{\sum_{j \geq h+1}^n \binom{n}{j}} \geq \frac{\binom{n}{h}}{\sum_{j \geq h}^n \binom{n}{j}}$

Finally, for the range $h \leq n/2-\sqrt{n \ln(n)}$ we can use Hoeffding's bound:
\[
\frac{\binom{n}{h}}{\sum_{j \geq h}^n \binom{n}{j}}
\leq
\frac{\binom{n}{h}}{\frac{1}{2} \cdot 2^n}
\leq
2 \cdot \Pr_{x \sim \{0,1\}^n}\left[x \leq n/2-\sqrt{n \ln(n)}\right]
\leq
2
\cdot \exp \left(-2\ln n \right)
=
\frac{2}{n^2}
\]
\subsection{Proof fo Claim \ref{claim: phi hat is close to phi}}
\label{appendix subsection: phi hat is close to phi}

Recall that:
\[N \myeq \frac{2^n}{A} \cdot \frac{192}{\epsilon^2} \cdot (n+9 \sqrt{n}+4)
\]
Our algorithm for learning a monotone probability distribution
drew $N$ samples from the probability distribution $\rho$ and the resulting multiset of samples was denoted as $S$.
For all $x$ in $\{0,1\}^n$, if $||x|| < 9 \sqrt{n}$, we set $\hat{\phi}(x)=0$, otherwise we set:
\[
\hat{\phi}(x)
:=
\frac{1}{2^{\left \lfloor L_{||x||} \right \rfloor}} \cdot \frac{\max_{y \text{ s.t. } y \preceq x \text{ and } ||y||-||x|| = \left \lfloor L_{||x||} \right \rfloor}
	\bigg \lvert \bigg \{z \in S: y \preceq z \preceq x \bigg\} \bigg \rvert}{N}
\]
Where $L_h$ is a specific value associated to each value of $h$. We also defined for all $x$ with $x \geq 9 \sqrt{n}$ the value:
\begin{multline*}
\phi(x)
\myeq
\frac{1}{2^{\left \lfloor L_{||x||}\right \rfloor}} \cdot \max_{y \text{ s.t. } y \preceq x \text{ and } ||x||-||y|| = {\left \lfloor L_{||x||}\right \rfloor}} \text{  }
\Pr_{z \sim \rho}[ y \preceq z \preceq x ]
=\\
\frac{1}{2^{\left \lfloor L_{||x||}\right \rfloor}} \: \cdot \max_{y \text{ s.t. } y \preceq x \text{ and } ||x||-||y|| = {\left \lfloor L_{||x||}\right \rfloor }} \text{  }
\sum_{z \text{ s.t. } y \preceq z \preceq x} \rho(z)
\end{multline*}

Here we prove that if it is the case that $\frac{1}{2^n}
\cdot
\sum_{h=9 \sqrt{n}}^n
\binom{n}{h}
\cdot
\frac{A}{2^{L_{h}}} \leq \frac{1}{2}$, 
then, with probability at least $7/8$, it is the case that:
\[
\sum_{x \in \{0,1\}^n \text{ s.t. } 9 \sqrt{n} \leq ||x||}
\bigg \lvert \hat{\phi}(x)-\phi(x) \bigg \rvert
\leq
\frac{\epsilon}{4}
\]

We claim that for any pair $(x,y)$, such that $\phi$ is defined on $x$ and $||x||-||y|| = \left \lfloor L_{||x||} \right \rfloor$, with probability at least $1-\frac{1}{8\cdot 2^n \cdot n^{9 \sqrt{n}}}$ the following holds:
\begin{equation}
\label{equation: chernoff plus hoeffding learning}
\bigg \lvert
\Pr_{z \sim \rho}[ y \preceq z \preceq x ]
-
\frac{
	\lvert  \{z \in S: y \preceq z \preceq x \}  \rvert}{N}
\bigg \rvert
\leq 
\frac{\epsilon}{8} \cdot \max \left(\frac{A}{2^n}
, \Pr_{z \sim \rho}[ y \preceq z \preceq x ] \right)
\end{equation}
We use Chernoff's bound to prove this as follows. Denote by $q$ the value $\Pr_{z \sim \rho}[ y \preceq z \preceq x ]$. If $q \geq \frac{A}{2^n}$ then by Chernoff's bound we have:
\begin{multline*}
\Pr \left[
\bigg \lvert
\lvert  \{z \in S: y \preceq z \preceq x \}  \rvert
-qN
\bigg \rvert \geq \frac{\epsilon}{8} qN
\right]
\leq\\
2\exp\left(-\frac{1}{3} \left(\frac{\epsilon}{8} \right)^2 qN \right)
\leq
2\exp\left(-\frac{1}{3} \left(\frac{\epsilon}{8} \right)^2 \frac{A}{2^n} \cdot N \right)
=
\frac{1}{8\cdot 2^n \cdot n^{9 \sqrt{n}}}
\end{multline*}
Otherwise, if we have $q < \frac{A}{2^n}$, then by Chernoff's bound:
\begin{multline*}
\Pr \left[
\bigg \lvert
\lvert  \{z \in S: y \preceq z \preceq x \}  \rvert
-qN
\bigg \rvert \geq \frac{\epsilon}{8} \cdot \frac{A}{2^n} \cdot N
\right]
\leq
2\exp\left(-\frac{1}{3} \left(\frac{\epsilon}{8} \cdot \frac{A}{2^n} \cdot \frac{1}{q} \right)^2 q N \right)
\leq\\
2\exp\left(-\frac{1}{3} \left(\frac{\epsilon}{8} \right)^2 \frac{A}{2^n} \cdot N \right)
=
\frac{1}{8\cdot 2^n \cdot n^{9 \sqrt{n}}}
\end{multline*}

Now, by taking a union bound, it follows that with probability $7/8$ for all such pairs $(x,y)$ Equation \ref{equation: chernoff plus hoeffding learning} will be the case. Recalling the definition of $\phi$, for all $x$ on which $\phi$ is defined it then will be the case that:
\[
\bigg \lvert
\hat{\phi}(x)-\phi(x)
\bigg \rvert
\leq 
\frac{\epsilon}{8} \cdot
\max \left( \frac{1}{2^{\left \lfloor L_{||x||} \right \rfloor}} \frac{A}{2^n}
, \phi(x) \right)
\]
Now, we sum this for all $x$ in the domain of $\phi$ and use the fact that $\lfloor L_h \rfloor \geq L_h+1$, and then use that $\frac{1}{2^n}
\cdot
\sum_{h=9 \sqrt{n}}^n
\binom{n}{h}
\cdot
\frac{A}{2^{L_{h}}} \leq \frac{1}{2}$. We get:
\begin{multline*}
\sum_{x \in \{0,1\}^n \text{ s.t. } 9 \sqrt{n} \leq ||x||}
\bigg \lvert \hat{\phi}(x)-\phi(x) \bigg \rvert
\leq\\
\frac{\epsilon}{8}
\cdot
\left(
\frac{1}{2^n}
\cdot
\sum_{x \in \{0,1\}^n \text{ s.t. } 9 \sqrt{n} \leq ||x||}
\frac{A}{2^{ \left \lfloor L_{||x||}  \right \rfloor}}
+
\sum_{x \in \{0,1\}^n \text{ s.t. } 9 \sqrt{n} \leq ||x||} \phi(x)
\right)
=\\
\frac{\epsilon}{8}
\cdot
\left(
\frac{1}{2^n}
\cdot
\sum_{h=9 \sqrt{n}}^n
\binom{n}{h}
\cdot
\frac{A}{2^{ \lfloor L_{h}  \rfloor}}
+
\sum_{x \in \{0,1\}^n \text{ s.t. } 9 \sqrt{n} \leq ||x||} \phi(x)
\right)
\leq\\
\frac{\epsilon}{8}
\cdot
\left(
2 \cdot
\frac{1}{2^n}
\cdot
\sum_{h=9 \sqrt{n}}^n
\binom{n}{h}
\cdot
\frac{A}{2^{L_{h}}}
+
\sum_{x \in \{0,1\}^n \text{ s.t. } 9 \sqrt{n} \leq ||x||} \phi(x)
\right)
\leq\\
\frac{\epsilon}{8}
\cdot
\left(
1
+
\sum_{x \in \{0,1\}^n \text{ s.t. } 9 \sqrt{n} \leq ||x||} \rho(x)
\right)
\leq
\frac{\epsilon}{4}
\end{multline*}

\end{document}